\title{Krenn-Gu conjecture for sparse graphs} 
\author{L. Sunil Chandran}{Indian Institute of Science, Bengaluru}{sunil@iisc.ac.in}{[orcid]}{Supported by SERB Core Research Grant CRG/2022/006770: ``Bridging Quantum Physics with Theoretical Computer Science and Graph Theory''}
\author{Rishikesh Gajjala}{Indian Institute of Science, Bengaluru}{rishikeshg@iisc.ac.in}{[orcid]}{}
\author{Abraham M. Illickan}{University of California, Irvine}{fill}{[orcid]}{}
\authorrunning{Chandran, Gajjala and Illickan} 
\keywords{Graph colourings, Perfect matchings, Quantum Physics} 
\newtheorem{subtheorem}{Theorem}
\begin{document}

\maketitle

\begin{abstract}

Greenberger–Horne–Zeilinger (GHZ) states are quantum states involving at least three entangled particles. They are of fundamental interest in quantum information theory, and the construction of such states of high dimension has various applications in quantum communication and cryptography. Krenn, Gu and Zeilinger discovered a correspondence between a large class of quantum optical experiments which produce GHZ states and edge-weighted edge-coloured multi-graphs with some special properties called the \emph{GHZ graphs}. On such GHZ graphs, a graph parameter called \emph{dimension} can be defined, which is the same as the dimension of the GHZ state produced by the corresponding experiment. Krenn and Gu conjectured that the dimension of any GHZ graph with more than $4$ vertices is at most $2$. An affirmative resolution of the Krenn-Gu conjecture has implications for quantum resource theory. Moreover, this would save huge computational resources used for finding experiments which lead to higher dimensional GHZ states. On the other hand, the construction of a GHZ graph on a large number of vertices with a high dimension would lead to breakthrough results.

In this paper, we study the existence of GHZ graphs from the perspective of the Krenn-Gu conjecture and show that the conjecture is true for graphs of vertex connectivity at most 2 and for cubic graphs. We also show that the minimal counterexample to the conjecture should be $4$-connected. Such information could be of great help in the search for GHZ graphs using existing tools like PyTheus. While the impact of the work is in quantum physics, the techniques in this paper are purely combinatorial, and no background in quantum physics is required to understand them.

\end{abstract}

\section{Introduction}
\label{sec:intro}
Quantum entanglement theory implies that two particles can influence each other, even though they are separated over large distances. In 1964, Bell demonstrated that quantum mechanics conflicts with our classical understanding of the world, which is local (i.e. information can be transmitted maximally with the speed of light) and realistic (i.e. properties exist prior to and independent of their measurement) \cite{bell}. Later, in 1989, Greenberger, Horne, and Zeilinger (abbreviated as GHZ) studied what would happen if more than two particles are entangled \cite{Greenberger}. Such states in which three particles are entangled (
$|GHZ_{3,2}\rangle = \frac{1}{\sqrt{2}}\left(|000\rangle + |111\rangle \right)$) were observed rejecting local-realistic theories  ~\cite{PhysRevLett.82.1345,Pan2000}. 
While the study of such states started purely out of fundamental curiosity \cite{fund_cur1,fund_cur2,fund_cur3}, they are now used in many applications in quantum information theory, such as quantum computing \cite{Gu2020}.   They are also essential for early tests of quantum computing tasks \cite{quant_comp_tasks}, and quantum cryptography in quantum networks\cite{quant_networks}.

Zeilinger became a co-recipient of the Nobel Prize for Physics in 2022,
for experiments with entangled photons, establishing the violation of Bell inequalities and pioneering quantum information science.
We note that the work on experimentally constructing GHZ states is at the heart of  Zeilinger's  Nobel prize-winning work \cite{nobel}.
 Increasing the number of particles involved and the dimension of the GHZ state is essential both for foundational studies and practical applications. Motivated by this, a huge effort is being made by several experimental groups around the world to push the size of GHZ states. Photonic technology is one of the key technologies used to achieve this goal \cite{quant_comp_tasks,10photon}.
The Nobel Laureate himself, with some co-authors, proposed a scheme of optical experiments in order to
achieve this, which gives an opportunity for graph theorists to get involved in this fundamental
research: 
In 2017, Krenn, Gu and Zeilinger \cite{Quantum_graphs} discovered (and later extended \cite{Quantum_graphs_2,Quantum_graphs_3}) a bridge between experimental quantum optics and graph theory. They observed that large classes of quantum optics experiments (including those containing probabilistic photon pair sources, deterministic photon sources and linear optics elements) can be represented as an edge-coloured edge-weighted graph, though the edge-colouring goes a little beyond
what graph theorists are used to. Conversely, every edge-coloured edge-weighted graph (also referred to as an experiment graph) can be translated into a concrete experimental setup. This technique has led to the discovery of new quantum interference effects and connections to quantum computing \cite{Quantum_graphs_2}. Furthermore, it has been used as the representation of efficient AI-based design methods for new quantum experiments \cite{AI1,AI2}. 

However, despite several efforts, a way to generate a GHZ state of dimension $d>2$ with more than $n=4$ photons with perfect quality and finite count rates without additional resources \cite{krenn2019questions} could not be found. This led Krenn and Gu to conjecture that it is not possible to achieve this physically (stated in graph theoretic terms in \cref{krenn_conjecture}). They have also formulated this question purely in graph theoretic terms and publicised it widely among graph theorists for a resolution \cite{mixon_website}. We now formally state this problem in graph-theoretic terms and explain its equivalence in quantum photonic terms. For a high-level overview of how the experiments are converted to edge-coloured edge-weighted graphs, we refer the reader to the appendix of \cite{DBLP:journals/corr/abs-2202-05562}. For the exact details, the reader can refer to \cite{Quantum_graphs, Quantum_graphs_2, Quantum_graphs_3, krenn2019questions}.

\subsection{Graph theoretic preliminaries and notations}
We first define some commonly used graph-theoretic terms. For a graph $G$, let $V(G), E(G)$ denote the set of vertices and edges, respectively. We use $\kappa(G)$ to denote the vertex connectivity of $G$. For $S\subseteq V(G)$, $G[S]$ denotes the induced subgraph of $G$ on $S$. $\mathbb{N},\mathbb{N}_0,\mathbb{C}$ denote the set of natural numbers, non-negative numbers and complex numbers, respectively. The cardinality of a set $\cal{S}$ is denoted by $|\cal{S}|$. For a positive integer $r$, $[r]$ denotes the set $\{1,2\ldots,r\}$. Given a multi-graph, its skeleton is its underlying simple graph. We do not consider self-loops in multi-graphs.

Usually, in an edge colouring, each edge is associated with a natural number. However, in such edge colourings, the edges are assumed to be monochromatic. But in the graphs corresponding to experiments, we are allowed to have bichromatic edges, i.e. one half coloured by a certain colour and the other half coloured by a different colour. For example, in the graph shown in \cref{fig:main_example}, the simple edge between vertices $4$ and $6$ is a bichromatic edge. We develop some new notation to describe bichromatic edges. 

Each edge of a multi-graph can be thought to be formed by two half-edges, i.e., an edge $e$ between vertices $u$ and $v$, consists of the half-edge starting from the vertex $u$ to the middle of the edge $e$ (hereafter referred to as the $u$-half-edge of $e$) and the half-edge starting from the vertex $v$ to the
middle of the edge $e$ (hereafter referred to as the $v$-half-edge $e$). Thus, the edge set $E$ of the multi-graph gives rise to the set of half-edges $H$,
with $|H| = 2 |E|$.  For an edge $e$ between vertices $u$ and $v$, we may denote the $v$-half-edge of $e$ by $e_v$ and $u$-half edge of $e$ by $e_u$. Consider the edge $e$ between vertices $4$ and $6$ in \cref{fig:main_example}. The $4$-half edge of $e$ ($e_4$) is of colour red, and the $6$-half edge ($e_6$) is of colour green. 

The type of edge colouring that we consider in this paper is more aptly called a {\it half-edge colouring}. It is a function from  $H$ to  $\mathbb{N}_0$, say $c: H \rightarrow  \mathbb{N}_0$.  (Note that we use 
non-negative numbers to name the colours.) In other words, each half-edge gets a colour.   An edge is called monochromatic if both
its half-edges get the same colour (in which case we may use
$c(e)$ to denote this colour); otherwise, it is called a bi-chromatic edge. In \cref{fig:main_example}, the colour $0$ is shown in red, and the colour $1$ is shown in green. It is easy to see that $c(e_4)=0$ and $c(e_6)=1$ (recall that $e$ is the simple edge between vertices $4$ and $6$). Consider the edge $e'$ between vertices $1$ and $6$. As $c(e'_1)=c(e'_6)=0$, $e'$ is monochromatic and moreover, $c(e')=0$. We then assign a weight $w(e) \in \mathbb{C} $ to each such coloured edge $e$. We denote the multi-graph $G$ with the edge colouring $c$ and edge weights $w(e)$ as $G_c^w$.

We call a subset $P$ of edges in this edge-weighted edge-coloured graph a perfect matching if each vertex in the graph has exactly one edge in $P$ incident on it.
\begin{definition}
The weight of a perfect matching $P$, $w(P)$ is the product of the weights of all its edges $\prod\limits_{e\in P}w(e)$
\end{definition}

\begin{definition}
The weight of an edge-coloured edge-weighted multi-graph $G_c^w$ is the sum of the weights of all perfect matchings in $G_c^w$.
\end{definition}

A vertex colouring $vc$ associates a colour $i$ to each vertex in the graph for some $i\in \mathbb{N}$.
We use $vc(v)$ to denote the colour of vertex $v$ in the vertex colouring $vc$. A vertex colouring $vc$ \textit{filters out} a sub-graph $\mathcal{F}(G_c,vc)$ of $G_c$ on $V(G_c)$ where for an edge $e \in E(G_c)$ between vertices $u$ and $v$, $e \in E(\mathcal{F}(G_c,vc))$ if and only if $c(e_u)=vc(u)$ and $c(e_v)=vc(v)$. Filtering also extends to weighted graphs where the weight of each edge in $\mathcal{F}(G_c^w,vc)$ is the same as its weight in $G_c^w$. Let $vc$ be a vertex colouring in which $1,2,3,6$ are associated with the colour green and $4,5$ are associated with the colour red. The filtering operation of $vc$ on the edge-coloured graph $G_c^w$ shown in \cref{fig:main_example} is given in \cref{fig:main_example_filtered}. A vertex colouring $vc$ is defined to be feasible in $G_c^w$ if $\mathcal{F}(G_c^w,vc)$ has at least one perfect matching. It is easy to see that each perfect matching $P$ is part of $\mathcal{F}(G_c^w,vc)$ for a unique vertex colouring $vc$. Such a $P$ is said to induce $vc$. It is interesting to notice that there is a partition of perfect matchings (not edges) based on the vertex colourings.

\begin{definition}
The weight of a vertex colouring $vc$ in the multi-graph $G_c^w$ is denoted by $w(G_c^w,vc)$ and is equal to the weight of the graph $\mathcal{F}(G_c^w,vc)$.
\end{definition}

The weight of a vertex colouring, which is not feasible, is zero by default.

\begin{definition}
An edge-coloured edge-weighted graph is said to be \textit{GHZ}, if:
\begin{enumerate}
    \item All feasible monochromatic vertex colourings have a weight of $1$.
    \item All non-monochromatic vertex colourings have a weight of $0$.
\end{enumerate}
\end{definition}

\begin{figure}[t!]
    \centering
\centering    
\begin{subfigure}[b]{0.49\textwidth}
         \centering
         \fbox{\includegraphics[width=0.96\textwidth]{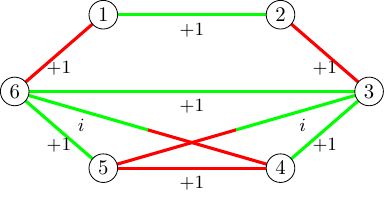}}
         \caption{An edge coloured edge-weighted graph $G_c^w$}
         \label{fig:main_example}
\end{subfigure}
\hfill
\begin{subfigure}[b]{0.49\textwidth}
         \centering
         \fbox{\includegraphics[width=0.96\textwidth]{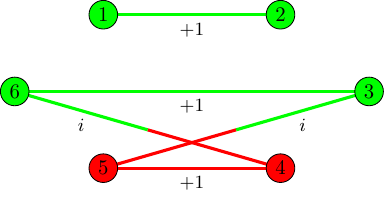}}
         \caption{$\mathcal{F}(G_c^w,vc)$, where  $vc(i)$ is green for $i\in [1,2,3,6]$ and red for $i\in [4,5]$}
         \label{fig:main_example_filtered}
\end{subfigure}
\hfill
\caption{$vc$-Filtering of a graph}
\label{fig:perfect_matchings}
\end{figure}
An example of a GHZ graph is shown in \cref{fig:main_example}.
\begin{definition}
The dimension of a GHZ graph $G_c^w$, $\mu(G,c,w)$ is the number of feasible monochromatic vertex colourings (having a weight of $1$).
\end{definition}

For a given multi-graph $G$ (experimental set up), many possible edge-colourings (mode numbers of photons) and edge-weight (amplitude of photon pairs) assignments may lead it a GHZ graph (GHZ state). Finding a GHZ graph with $n$ vertices and dimension $d$ would immediately lead to an experiment which result in a $d$-dimensional GHZ state with $n$ particles. For each such GHZ graph, a dimension is achieved. The maximum dimension achieved over all possible GHZ graphs with the unweighted uncoloured simple graph $G$ as their skeleton is known as the \textit{matching index} of $G$, denoted by $\mu(G)$. In \cref{fig:k2}, we have an edge-coloured edge-weighted GHZ $K_2$ of dimension $t$. Note that $t$ can be arbitrarily large. Therefore, $\mu(K_2)=\infty$. {However, only two particles are involved; for a GHZ state to form, we need more than two particles. Therefore, such a construction will not give a GHZ state.} We note that the matching index is defined for a simple graph $G$ by taking the maximum over all possible multi-graphs with a skeleton $G$. For instance, the simple graph $K_2$ has only one edge. However, we considered all possible multi-graphs having a skeleton $K_2$ to define $\mu(K_2)$. 

\begin{figure}[t!]
    \centering
\centering    
\begin{subfigure}[b]{0.4\textwidth}
         \centering
         {\includegraphics[width=40mm]{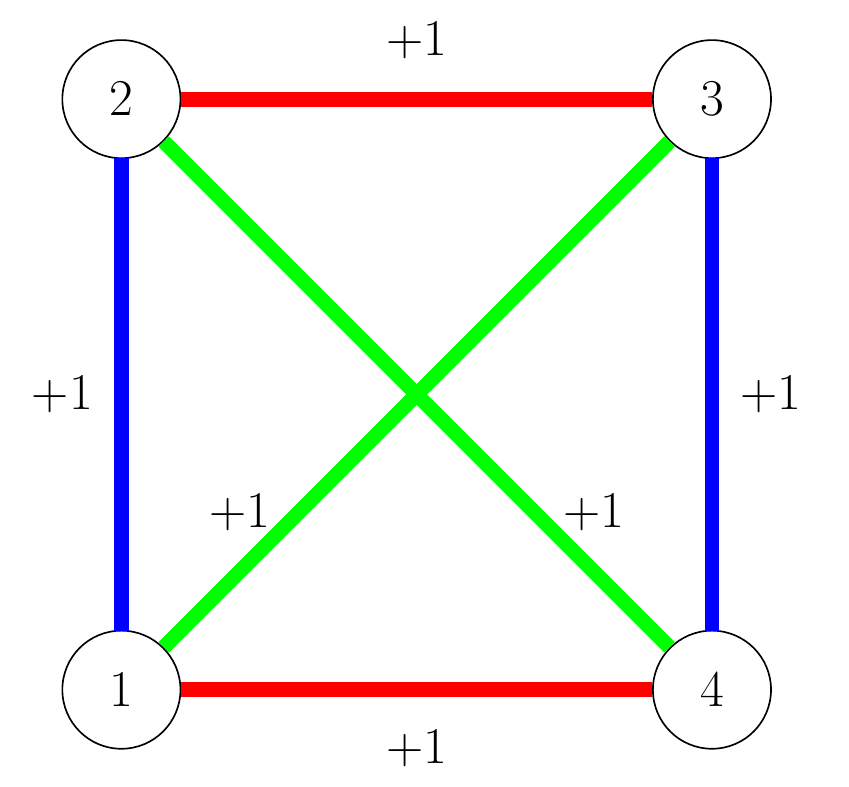}}
         \caption{$K_4$ with $3$ dimensions. 
         }
         \label{fig:3dimk4}
\end{subfigure}
\hfill
\begin{subfigure}[b]{0.5\textwidth}
         \centering
         {\includegraphics[width=60mm, height=40mm]{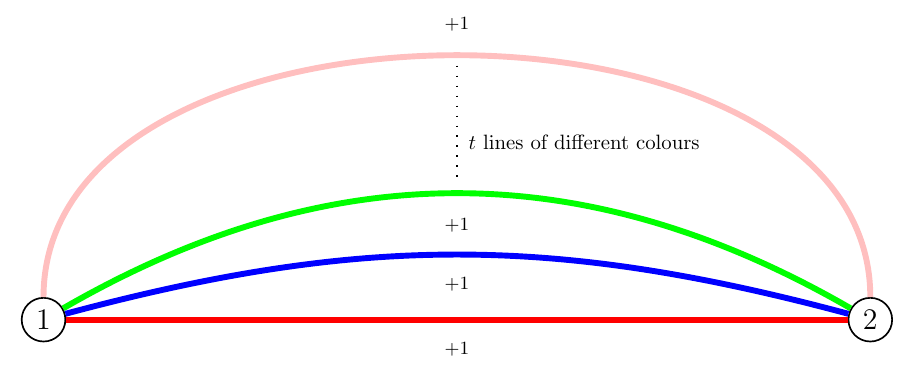}}
         \caption{$K_2$ with $t$ dimensions. }
         \label{fig:k2}
\end{subfigure}

\caption{GHZ graphs}
\label{fig:perfectly_mono_examples}
\end{figure}

It is easy to see that if a graph has a perfect matching, it must contain an even number of vertices. So, we consider matching indices of graphs with even and at least $4$ vertices for the rest of the manuscript. From \cref{fig:3dimk4}, we know that $\mu(K_4)\geq 3$ and, despite the use of huge computational resources \cite{AI1,AI2,neugebauer}, this is the only (up to an isomorphism) known graph of the matching index at least $3$. Any graph with a matching index of at least $3$ and $n>4$ vertices would lead to a new GHZ state of dimension at least $3$ with $n>4$ entangled particles. Motivated by this, this problem has been extensively promoted\cite{mixon_website,krenn_website}. Krenn and Gu conjectured that 
\begin{conjecture}
\label{krenn_conjecture}
If $|V(G)|>4$, then $\mu(G) \leq 2$ 
\end{conjecture}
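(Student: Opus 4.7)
The plan is to attack the conjecture by structural complexity of $G$, working outward from graphs with small vertex cuts. The key leverage is that the GHZ constraints on non-monochromatic vertex colourings are very rigid: many natural swaps of half-edge colours produce feasible non-monochromatic colourings whose weights must cancel to zero, leaving little room for having many feasible monochromatic colourings coexist in one graph.

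For a graph $G$ with a cut vertex $v$, let $C_1,\dots,C_k$ be the components of $G-v$. Any perfect matching of $G$ matches $v$ into some $C_j$ and is internal to the remaining components. Hence a monochromatic colouring of colour $\alpha$ is feasible only if there is some $C_j$ with a monochromatic edge of colour $\alpha$ to $v$, while every other $C_i$ admits an internal perfect matching consisting of monochromatic edges of colour $\alpha$. If three distinct monochromatic colours were feasible, a pigeonhole argument would force two of them, say $\alpha$ and $\beta$, to route $v$ into the same component $C_j$. Swapping the colour of the $v$-half-edge in the colour-$\alpha$ setup with that in the colour-$\beta$ setup, while keeping all other half-edges as they were, should yield a non-monochromatic vertex colouring $vc$ for which $\mathcal{F}(G_c^w,vc)$ inherits a perfect matching of nonzero weight, contradicting the GHZ property. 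This handles $\kappa(G) \leq 1$.

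The case $\kappa(G) = 2$ proceeds analogously but with more bookkeeping. A $2$-cut $\{u,v\}$ partitions $G-\{u,v\}$ into sides, and each perfect matching falls into one of a small number of crossing patterns at $\{u,v\}$: both matched internally on one side, both crossing, or various mixed configurations. I would classify the feasible monochromatic colourings by their crossing pattern; whenever three colours fall into the same class, a controlled local swap of the half-edge colours at $u$ or $v$ produces a bichromatic feasible colouring with nonzero weight. For cubic graphs, the degree constraint is the crux: if $\mu(G)\geq 3$ then three distinct colours $0,1,2$ each induce a monochromatic subgraph containing a perfect matching, and since $|E(G)| = \tfrac{3}{2}|V(G)|$ these three matchings must essentially partition $E(G)$. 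This near-proper-$3$-edge-colouring structure reduces the weight-cancellation requirements to a small homogeneous linear system whose only solution makes some weight zero, killing a perfect matching and so the feasibility of one of the three colours.

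The $4$-connectivity of a minimal counterexample should follow by a surgery argument: given a hypothetical counterexample with a $3$-cut $S$, one constructs a strictly smaller GHZ graph (by contracting or splitting along $S$, and patching half-edge colours and weights across the cut so as to preserve GHZ behaviour) whose matching index is still at least $3$, contradicting minimality. The main obstacle running through all of these steps is that half-edge colours can be set almost independently, so the GHZ condition is a system of many delicate cancellations rather than a single symmetry; the real work is to design, at each small cut, an explicit local swap on half-edges that certifies an offending non-monochromatic colouring of nonzero weight. This is where I expect the proof's technical weight to sit, and where the argument is least likely to extend to arbitrary highly connected graphs, leaving the full conjecture open beyond $\kappa(G)\leq 2$ and the cubic case.
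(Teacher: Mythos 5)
First, a framing point: the statement you are addressing is stated in the paper as \cref{krenn_conjecture}, a \emph{conjecture}. The paper does not prove it, and neither do you --- your own closing sentence concedes that the full conjecture remains open beyond $\kappa(G)\le 2$ and the cubic case, so what you have is a programme for reproving the paper's partial results, not a proof of the statement.

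Within that programme there is a genuine gap in your central mechanism. At several points (the $\kappa\le 1$ case, the $2$-cut case, and implicitly the cubic case) you argue that a local swap of half-edge colours yields a non-monochromatic vertex colouring $vc$ such that $\mathcal{F}(G_c^w,vc)$ ``inherits a perfect matching of nonzero weight, contradicting the GHZ property.'' This is not a contradiction: the GHZ condition requires the \emph{sum} of the weights of all perfect matchings inducing $vc$ to vanish, and since weights are complex, a single perfect matching of nonzero weight is entirely compatible with total weight zero via destructive interference. Your swap argument is essentially Bogdanov's observation (\cref{bogdanov}), which the paper notes already settles the conjecture in the no-destructive-interference regime; the whole difficulty of the general case is the cancellation you are not tracking. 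The paper's proof of \cref{type_2_resolution} gets around this by exploiting the fact that across a small cut the \emph{total} weight of a blockwise-monochromatic colouring decomposes as $w(c)=H_c+\mathcal{V}_c$ with each term a product of subgraph weights, so nonvanishing propagates multiplicatively and the identity $H_c=-\mathcal{V}_c$ for non-monochromatic $c$ can be chained into a contradiction (a coloured square that is simultaneously solid and fragile). Your cubic-graph sketch has the additional problem that the graphs are multigraphs with possibly bichromatic edges, so three monochromatic perfect matchings need not induce a proper $3$-edge-colouring of the skeleton; the paper instead reduces the neighbourhood of a degree-$3$ vertex to a $K_4$ via \cref{specific_theorem_easy_case}, invokes Mantey's classification (\cref{kevin}) to exclude multi-edges, and only then runs a Bogdanov-type argument. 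Your surgery idea for the $4$-connectivity of a minimal counterexample is the right shape, but all of the content lies in the explicit weight-patching across the $3$-cut (the formula in \cref{wt_of_edge_in_harder_case}, the case split on whether $\mathcal{C}_1$ is empty, the parity conditions on $|V_1|,|V_2|$, and the Scaling Lemma \cref{scaling_generalization} to restore the GHZ normalisation), none of which appears in your sketch.
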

Several cash rewards were also announced for a resolution of this conjecture \cite{krenn_website}. We note the following implications of resolving this conjecture
\begin{enumerate}
    \item Finding a counterexample for this conjecture would uncover new peculiar quantum interference effects of a multi-photonic quantum system using which we can create new GHZ states
    \item
    \begin{enumerate}
    \item Proving this conjecture would immediately lead to new insights into resource theory in quantum optics
    \item Proving this conjecture for different graph classes would help us understand the properties of a counterexample and guide experimentalists in finding it. This is particularly important since huge computational efforts are going into finding such graphs \cite{AI1,AI2,neugebauer}. 
    \end{enumerate}
\end{enumerate}

A graph is matching covered if every edge of it is part of at least one perfect matching. If an edge $e$ is not part of any perfect matching $M$, then we call the edge $e$ to be redundant. By removing all redundant edges from the given graph $G$, we get its unique maximum matching covered sub-graph $mcg(G)$. Note that a colouring $c$ and a weight assignment $w$ of $G$ induces a colouring and a weight assignment for every subgraph of $G$, respectively. When there is no scope for confusion, we use $c,w$ itself to denote this induced colouring and weight assignment, respectively. It is easy to see that if $c$ and $w$ make $G$ a GHZ graph, they also make $mcg(G)$ a GHZ graph and $\mu(G,c,w)=\mu(mcg(G),c,w)$. Therefore, $\mu(G)=\mu(mcg(G))$.

One can also notice that, if there are two edges, say $e,e'$ between vertices $u$ and $v$ such that $c(e_u)=c(e'_u)$ and $c(e_v)=c(e'_v)$, then they can be replaced with an edge $e''$ such that $w(e'')=w(e)+w(e')$,$c(e''_u)=c(e_u)$ and $c(e''_v)=c(e_v)$. Such a reduction will retain the GHZ property and dimension of the graph. Therefore, in the rest of the manuscript, we only deal with such reduced graphs, i.e, between two vertices between vertices $u$ and $v$ and given $i,j \in [\mu(G)]$ there exists at most one edge $e$ such that $c(e_u)=i$ and $c(e_v)=j$. We also note that, if an edge $e$ has weight $0$, it can be treated as if the edge were absent.



\subsection{Related work}

\textbf{No destructive interference.}
The special case of all edges having a real positive weight corresponds to the case when there is no destructive interference. With this restriction, Krenn-Gu conjecture was resolved due to the following observation by Bogdanov~\cite{bogdanov}. 
\begin{theorem}
    \label{bogdanov}
In a coloured multi-graph $G_c$ with $|V(G)|>4$, if there exist three monochromatic perfect matchings of different colours, then there must be a non-monochromatic perfect matching. 
\end{theorem}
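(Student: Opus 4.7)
The plan is to work with the $3$-regular multi-graph $H := M_1 \cup M_2 \cup M_3$: any perfect matching of $H$ distinct from each $M_i$ contains edges of two different colours and is thus non-monochromatic, so it suffices to produce such a fourth PM of $H$. If some $M_i \cup M_j$ decomposes into $k \geq 2$ alternating even cycles (rather than a single $n$-cycle), then on each cycle we may independently pick either the $M_i$-edges or the $M_j$-edges, and any of the $2^k - 2 \geq 2$ mixed choices yields a non-monochromatic PM. From now on assume each $C_{ij} := M_i \cup M_j$ is a Hamiltonian cycle; being an even cycle, it admits a natural $2$-colouring of its vertices, which we denote by $(X_{ij}, Y_{ij})$.

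Fix $C_{12}$ with bipartition $(X,Y)$ and regard $M_3$ as a matching of chords of $C_{12}$. If some chord $e = uv \in M_3$ satisfies $u \in X, v \in Y$, then $C_{12} \setminus \{u,v\}$ splits into two paths $P_a, P_b$ whose vertex counts are both even by the endpoint parity. Each path has a unique alternating perfect matching; tracking the colour of the boundary edges shows one of these PMs consists entirely of $M_1$-edges and the other entirely of $M_2$-edges, so combining them with $e$ yields a PM of $H$ using edges of all three colours, hence non-monochromatic.

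In the remaining subcase every $M_3$-chord has both endpoints in the same part of $(X,Y)$, which forces $n \equiv 0 \pmod 4$. The key combinatorial lemma here is the following: any perfect matching of the vertices of a cycle $C_n$ $(n \geq 4)$ in which every chord has both endpoints in the same part of the cycle's natural bipartition contains two crossing chords, one from each bipartition class. To prove it, choose the chord $c = \{a,b\}$ whose shorter arc has minimum length; since $a, b$ lie in the same bipartition class, the shorter arc of $c$ contains at least one vertex of the opposite class, and the matching partner of such a vertex must lie outside $c$ by the minimality of $c$, giving a chord of the opposite class that crosses $c$. Applying the lemma to $M_3$ on $C_{12}$ furnishes two interleaving $M_3$-chords $e, e'$, one in $X$ and one in $Y$; deleting their four endpoints splits $C_{12}$ into four arcs, each of which has an even number of vertices by the interleaving parity, and on each arc the alternating PM in $M_1 \cup M_2$ is unique. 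Concatenating $\{e,e'\}$ with the arc PMs produces a PM of $H$ using two $M_3$-edges together with $(n-4)/2 \geq 2$ edges from $M_1 \cup M_2$, and is therefore non-monochromatic.

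The main technical content is the chord lemma together with the parity accounting on the arcs in the "all-within" subcase. A possible alternative route is to try to show that the all-within configuration is impossible when $n > 4$, via a permutation-theoretic analysis of the three matching involutions under the constraint that every pairwise union be Hamiltonian; this looks more delicate, and the construction above sidesteps it by producing the required PM directly.
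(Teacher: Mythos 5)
The paper does not prove this statement; it is quoted as Bogdanov's observation (\cite{bogdanov}) and used as a black box, so there is no in-paper proof to compare against. Your argument is correct and is essentially a reconstruction of Bogdanov's original proof: reduce to the case where $M_1\cup M_2$ is a single Hamiltonian cycle (otherwise mix cycles), use its bipartition $(X,Y)$, handle a crossing $M_3$-chord by splitting the cycle into two even paths whose unique alternating matchings are single-coloured, and in the all-within-class case extract two crossing chords of opposite classes via the minimal-arc argument and complete on the four even arcs. I checked the parity bookkeeping: a chord with endpoints in different classes leaves two arcs with an even number of internal vertices, two within-class crossing chords leave four such arcs, and the unique perfect matching of each even path in the alternating cycle is monochromatic in $M_1$ or $M_2$, so the assembled matching always contains an $M_3$-edge together with at least one $M_1$- or $M_2$-edge (using $|V(G)|>4$, indeed $n\geq 8$ in the last case). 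Two cosmetic points: in the crossing-chord case one of the two paths may be empty, so the resulting matching need not use all three colours, only at least two, which is all you need; and you only require $C_{12}$ to be Hamiltonian, not all three pairwise unions. Neither affects correctness.
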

Due to this result, when there is no destructive interference, every matching covered graph non-isomorphic to $K_4$ can achieve a maximum dimension of $1$ or $2$ and thus can be classified into Type $1$ and Type $2$ graphs(See \cite{DBLP:journals/corr/abs-2202-05562} for detailed discussion). Chandran and Gajjala~\cite{DBLP:journals/corr/abs-2202-05562} gave a structural classification for Type $2$ graphs. They further proved that for any half-edge colouring and edge weight assignment on a simple Type $2$ unweighted uncoloured graph, a dimension of $3$ or more can not be achieved! The computational aspects of the vertex colourings arising from these experiments were studied by Vardi and Zhang \cite{DBLP:journals/corr/abs-2209-13063, VardiZ23}

\textbf{Absence of bi-coloured edges.}
The problems get easier in the absence of bi-coloured edges and have opened up work in several directions. We list some of the known results in this direction. Cervera-Lierta et al.~\cite{Cervera} used SAT solvers to prove that if the number of vertices is $6$ or $8$, the maximum dimension achievable is $2$ or, at most, $3$, respectively. 
Chandran and Gajjala \cite{DBLP:journals/corr/abs-2304-06407} proved that the maximum dimension achievable for an $n>4$ vertex graph is less than $\dfrac{n}{\sqrt{2}}$. 

\textbf{Unrestricted results.} For the general case, the only known result is due to Mantey \cite{Kevin}. He proved the following theorem using the Gröbner basis.

\begin{theorem}
\label{kevin}
If $|V(G)|=4$, then $\mu(G)\leq 3$. Moreover, if $\mu(G,c,w)=3$, then between any pair of vertices in $G_c^w$, there is exactly one non-zero edge (and isomorphic to the coloured graph shown in \cref{fig:3dimk4}).
\end{theorem}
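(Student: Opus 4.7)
The plan is to encode all GHZ constraints on a $4$-vertex graph as a single tensor identity. Number the vertices $\{1,2,3,4\}$; since on $4$ vertices only three pair-partitions $M_1=\{12,34\}$, $M_2=\{13,24\}$, $M_3=\{14,23\}$ exist, the weight of any vertex colouring $vc=(c_1,c_2,c_3,c_4)$ is
\[
W(vc) \;=\; e_{12}^{c_1 c_2}e_{34}^{c_3 c_4} + e_{13}^{c_1 c_3}e_{24}^{c_2 c_4} + e_{14}^{c_1 c_4}e_{23}^{c_2 c_3},
\]
where $e_{uv}^{ij}$ is the weight of the (unique, after the reduction already described in the paper) edge between $u$ and $v$ whose $u$-half is colour $i$ and $v$-half is colour $j$ (taken to be $0$ if no such edge exists). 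Collecting each pair $uv$ into a $\mu\times\mu$ matrix $E^{uv}=(e_{uv}^{ij})$, the GHZ conditions become the tensor identity
\[
E^{12}_{pj}E^{34}_{kl} + E^{13}_{pk}E^{24}_{jl} + E^{14}_{pl}E^{23}_{jk} \;=\; [p=j=k=l] \qquad (\star)
\]
(with Iverson bracket on the right) holding for every $(p,j,k,l)\in[\mu]^4$.

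To prove $\mu\le 3$, I first evaluate $(\star)$ on the diagonal $p=j=k=l=i$, obtaining $a_i+b_i+c_i=1$ with $(a_i,b_i,c_i):=(E^{12}_{ii}E^{34}_{ii},\,E^{13}_{ii}E^{24}_{ii},\,E^{14}_{ii}E^{23}_{ii})$. If $\mu\ge 4$, pigeonhole forces two distinct colours $i\neq j$ to share a non-zero coordinate; say $a_ia_j\neq 0$. Reading off $(\star)$ at the six bichromatic colourings on $\{i,j\}$, namely $(i,i,j,j),(j,j,i,i),(i,j,i,j),(j,i,j,i),(i,j,j,i),(j,i,i,j)$, couples the diagonal products $E^{12}_{ii}E^{34}_{jj}$ and $E^{12}_{jj}E^{34}_{ii}$ to the off-diagonal products $E^{13}_{ij}E^{24}_{ij}$ and $E^{14}_{ij}E^{23}_{ij}$; combining these with the equations $(\star)$ at tuples of the form $(p,i,i,i)$ and $(p,j,j,j)$ for $p\notin\{i,j\}$ (which must vanish) should iteratively force $E^{12}_{ii}E^{34}_{jj}=0$, contradicting $a_ia_j\neq 0$. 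Symmetric arguments rule out the $b_ib_j\neq 0$ and $c_ic_j\neq 0$ alternatives.

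For the structural claim when $\mu=3$, the pigeonhole-free version of the argument above shows that the supports of the triples $(a_i,b_i,c_i)$ are pairwise disjoint, producing a bijection between the three colours and the three partitions $M_1,M_2,M_3$. After relabelling, $a_1,b_2,c_3$ are the only non-zero coordinates and equal $1$. Rerunning $(\star)$ on the remaining non-monochromatic colourings, using the fact that every off-diagonal product of two diagonal entries from different matrices is forced to vanish, pins every off-diagonal entry of every $E^{uv}$ to zero. Consequently each of the six pairs of vertices supports exactly one non-zero (monochromatic) edge, in the pattern of \cref{fig:3dimk4}.

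The main obstacle is the off-diagonal analysis in the first step: the system $(\star)$ couples six $\mu\times\mu$ matrices, and eliminating their entries cleanly is precisely where Mantey invokes a Gr\"obner-basis computation. A purely combinatorial reduction would need to track how a single non-zero bichromatic edge propagates to further non-zero half-edge colourings that contribute unwanted monochromatic perfect-matching weight, which is delicate; the cleanest route in practice may still be to appeal to the elimination ideal computed by Mantey rather than re-derive it by hand.
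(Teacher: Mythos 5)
Your setup --- writing the GHZ conditions as the tensor identity $(\star)$ over the three pair-partitions of $\{1,2,3,4\}$ --- is a faithful encoding of the problem, but the proposal is not a proof: the two places where the actual work happens are both deferred. First, the claim that the six bichromatic equations on $\{i,j\}$ together with the tuples $(p,i,i,i)$ and $(p,j,j,j)$ ``should iteratively force'' $E^{12}_{ii}E^{34}_{jj}=0$ is precisely the elimination that has not been carried out by hand by anyone: the system is quadratic in $54$ complex unknowns (for three colours), a single non-zero bichromatic entry $E^{13}_{ij}$ can be compensated by $E^{24}_{ij}=0$ so that you only ever control products rather than individual entries, and destructive interference means several non-zero products can cancel inside one equation. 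You concede as much in your final paragraph, which amounts to falling back on Mantey's Gr\"obner-basis computation rather than replacing it. Second, the structural claim for $\mu=3$ has the same defect: vanishing of a product such as $E^{13}_{ij}E^{24}_{ij}$ tells you only that one of the two factors is zero, so ``pins every off-diagonal entry of every $E^{uv}$ to zero'' does not follow without an additional argument ruling out mixed supports across all six matrices simultaneously.

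For context, the paper itself does not prove \cref{kevin}; it is quoted from Mantey, whose proof is a Gr\"obner-basis computation on exactly the polynomial system you wrote down, and the paper explicitly remarks that no analytical proof is known, inviting the reader to attempt one precisely to appreciate the difficulty caused by multi-edges and bichromatic edges. So your write-up is best read as a correct formalisation of the polynomial system together with an honest identification of where a combinatorial argument breaks down, not as an independent proof of the theorem.
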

{Surprisingly, there is no known analytical proof even for such \textit{small} graphs. We encourage the reader to attempt to prove \cref{kevin} to understand the difficulty arising due to multi-edges. One has to tune $54$ variables which can be complex numbers (the number of possible edges when $3$ colours are allowed) such that $81$ equations (the number of possible vertex colourings when $3$ colours are allowed) are satisfied, even for graphs as small as $4$ vertex graphs.}



\subsection{Our results}
We give the first results, which resolve the Krenn-Gu conjecture for a large class of graphs in the completely general setting, that is, when both bi-coloured edges and multi-edges are allowed.  We prove that the Krenn-Gu conjecture is true for all graphs with vertex connectivity at most $2$ in \cref{connectivity_proof}.

\begin{restatable}{theorem}{typetwores}
\label{type_2_resolution}
For a graph $G$, if $\kappa(G)\leq 2$, then $\mu(G)\leq 2$.
\end{restatable}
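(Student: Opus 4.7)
The plan is to work with $\mathrm{mcg}(G)$, which has the same matching index as $G$, and case-split on $\kappa(G)$.

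When $\kappa(G) \leq 1$, if $v$ is a cut vertex and $G \setminus v$ has components $C_1, \ldots, C_k$ with $k \geq 2$, a parity count shows that an edge $va$ with $a \in C_i$ can lie in a perfect matching only if $|C_i|$ is odd and every other $|C_j|$ is even. Since no two distinct components can both be the unique odd one, the edges from $v$ to all but one of the $C_i$ are redundant, so $\mathrm{mcg}(G)$ disconnects at $v$. For a disconnected $\mathrm{mcg}(G)$ with components $H_1, H_2$ each admitting a perfect matching, the vertex colouring assigning distinct feasible mono colours $i \neq j$ to $H_1$ and $H_2$ is non-monochromatic but has weight equal to the product of two non-zero numbers, contradicting the GHZ condition. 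Hence $\mu(G) \leq 1$.

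For $\kappa(G) = 2$, take a $2$-cut $\{u, v\}$ and a partition of $V(G) \setminus \{u, v\}$ into non-empty sides $A$ and $B$. Since $|V(G)|$ is even, $|A|$ and $|B|$ have the same parity. I describe the odd-odd case (Case~II); the even-even case (Case~I) is analogous. In Case~II, every perfect matching places exactly one of $u, v$ on the $A$-side and the other on the $B$-side, and therefore factors multiplicatively into an $A$-part and a $B$-part. I would introduce matrices $X_{a, c_u}, \, Y_{a, c_v}, \, Z_{b, c_u}, \, T_{b, c_v}$ whose entries aggregate the weights of these sub-matchings indexed by the interior colour ($a$ or $b$) of the side and the half-edge colour ($c_u$ or $c_v$) at the boundary vertex ($u$ or $v$). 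The GHZ condition on the vertex colouring $(A \to a,\, u \to c_u,\, B \to b,\, v \to c_v)$ then collapses to the single identity
\begin{equation*}
  X_{a, c_u}\, T_{b, c_v} + Y_{a, c_v}\, Z_{b, c_u} \;=\; \delta_{a = b = c_u = c_v \in D},
\end{equation*}
where $D$ is the set of $d$ feasible monochromatic colours.

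Read off-diagonally ($a \neq b$), the right-hand side vanishes, so as matrices in $(c_u, c_v)$ we have $X_{a, c_u}\, T_{b, c_v} = -Y_{a, c_v}\, Z_{b, c_u}$, an equality between two rank-one matrices; this forces $X_{a, \cdot} \propto Z_{b, \cdot}$ in $c_u$ and $T_{b, \cdot} \propto Y_{a, \cdot}$ in $c_v$ whenever the rows are non-zero. Chasing these proportionalities across $d \geq 3$ choices of $a, b \in D$, and handling the degenerate branches where a row vanishes via the mono identity $X_a\, T_a^{\top} + Z_a\, Y_a^{\top} = E_{a, a}$, pins each of $X, Y, Z, T$ to rank at most one. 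Reshaped as a matrix indexed by $(a, b)$ and $(c_u, c_v)$, the left-hand side of the displayed identity becomes a sum of two rank-one matrices, hence of rank at most $2$, while the right-hand side has $d$ non-zero entries on distinct rows and columns, hence of rank $d$. For $d \geq 3$ this is impossible, so $\mu(G) \leq 2$. Case~I is handled by the same philosophy, using the three perfect-matching types ($uv$ edge in $P$, or $\{u, v\}$ both on the $A$-side, or both on the $B$-side) to yield the analogous identity $p_a\, S(b, c_u, c_v) + q_b\, \tilde{R}(a, c_u, c_v) = \delta$; the proportionality chase combined with the mono equation $p_a\, S(a, a, a) + q_a\, \tilde{R}(a, a, a) = 1$ then directly yields $0 = 1$.

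The main obstacle is the bookkeeping: correctly absorbing bi-chromatic edges, multi-edges, and cross-cut edges into the $X, Y, Z, T$ (resp.~$p, q, R, S$) aggregates, and cleanly handling the degenerate sub-cases where some rows or scalar weights vanish, in which the rank-one conclusion must be extracted from the mono equations rather than from the off-diagonal proportionalities.
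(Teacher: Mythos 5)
Your proposal is sound and, apart from the $\kappa\le 1$ cases (which coincide with the paper's: the same parity argument shows a matching covered graph with a cut vertex is disconnected, and the same product-of-weights argument kills a disconnected graph), it closes the $\kappa=2$ case by a genuinely different mechanism. Both arguments start from the identical decomposition forced by the $2$-cut: your $X_{a,c_u},Y_{a,c_v},Z_{b,c_u},T_{b,c_v}$ are exactly the paper's four ``edges of the coloured square'' $w(G_{Au},\mathbf{a}_A\mathbf{c_u}_u)$, etc., and your displayed identity is the paper's $w(c)=H_c+\mathcal{V}_c$ with the GHZ right-hand side. The paper then pigeonholes two of the three feasible colours into the same class ($H$-dominant or $\mathcal{V}$-dominant) and derives a contradiction by showing one specific coloured square ($\mathbf{2}_A\mathbf{2}_B\mathbf{1}_u\mathbf{1}_v$) is simultaneously ``solid'' (all four factors non-zero) and ``fragile'' (their product vanishes), the latter via the single identity $\mathcal{V}_\alpha\mathcal{V}_\beta H_\gamma=-H_\alpha H_\beta\mathcal{V}_\gamma$ together with the weight-$1$ monochromatic equation; only two colours enter the algebra, the third is used only in the pigeonhole. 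Your route instead uses all $d\ge 3$ colours symmetrically: the off-diagonal outer-product equalities $X_aT_b^{\top}=-Z_bY_a^{\top}$ propagate proportionality among the rows, and a rank count ($\le 2$ versus $d$) finishes. I checked that your degenerate branches do resolve — e.g.\ if $X_1=0$ the monochromatic identity forces $Y_1,Z_1\ne 0$, whence $Z_2=Z_3=0$, whence $X_2T_3^{\top}\ne 0$ equals $-Z_3Y_2^{\top}=0$, a contradiction — but note that in those branches you do not actually ``pin each of $X,Y,Z,T$ to rank at most one''; you obtain an immediate contradiction instead, so the final rank-$2$-versus-rank-$d$ count is only invoked on the non-degenerate branch (where, in fact, the two monochromatic equations for $a=1$ and $a=2$ already force the common direction $\xi\otimes\tau$ to be supported at $(1,1)$ and at $(2,2)$ simultaneously, giving a quicker contradiction). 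Your approach buys symmetry in the colours and a reusable linear-algebraic lemma; the paper's buys a shorter, two-colour computation at the cost of the pigeonhole case split. The even--even case is handled correctly in both with the analogous three-type decomposition ($uv$-edge matchings absorbed into the $B$-side aggregate). What remains for you is genuinely only bookkeeping, as you say, plus writing out the degenerate-branch chase explicitly.
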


Our next main contribution is a reduction technique, which implies \cref{reduction_theorem_main}. We explain and prove our reduction in \cref{reduction}. We introduce a scaling lemma in \cref{generalization}, which gives us an equivalent version of Krenn-Gu conjecture and which may turn out to be more useful in some situations. 
\begin{theorem}
\label{reduction_theorem_main}
Given a graph $G$ with $\kappa(G) \leq 3$ and $V(G)>4$, there exists a graph $G'$ with $|V(G')|\leq |V(G)|-2$ and $\mu(G')\geq \mu(G)$.
\end{theorem}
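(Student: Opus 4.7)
The plan is to exploit a $3$-vertex cut. Since \cref{type_2_resolution} already covers $\kappa(G) \leq 2$, we focus on $\kappa(G) = 3$. Pick a minimum vertex-cut $S = \{u, v, w\}$, let $A$ and $B$ denote the vertex sets of the two sides of $G - S$, and observe that $|A| + |B| + 3 = |V(G)| \geq 6$; without loss of generality $|A| \geq 2$. Because $|V(G)|$ is even, $|A|$ and $|B|$ have opposite parity. The goal is to build a graph $G'$ by replacing the $A$-side with a gadget $A'$ supported on $S$ together with at most $|A|-2$ auxiliary vertices, engineered so that every GHZ colour-weighting of $G$ induces a GHZ colour-weighting of $G'$ of at least the same dimension.

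Fix $(c,w)$ realising $\mu(G,c,w) = \mu(G)$ and decompose the weight across the cut. For any vertex colouring $vc$, each perfect matching $M$ contributing to $w(G_c^w, vc)$ carries a \emph{signature} $(T_A, T_B, M_S)$, where $T_A, T_B \subseteq S$ are the cut vertices matched by $M$ into $A$ and $B$ respectively, and $M_S$ is the at-most-one matching edge internal to $S$. Parity constraints restrict the signatures to three cases, yielding the factorisation
\[
w(G_c^w, vc) \;=\; \sum_{(T_A, T_B, M_S)} w_A(T_A, vc) \cdot w_B(T_B, vc) \cdot w_S(M_S, vc),
\]
where $w_A(T_A, vc)$ sums the weights of colour-consistent matchings of $A \cup T_A$ using only edges with at least one endpoint in $A$. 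The gadget $A'$ is then chosen with colouring $c'$ and weighting $w'$ so that $w_{A'}(T_A, vc) = \alpha \cdot w_A(T_A, vc)$ for every signature, every relevant vertex colouring, and a fixed non-zero scalar $\alpha$ independent of $vc$.

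Such a gadget exists on at most $|A|+1$ vertices because the target profile $w_A$ takes only finitely many values across feasible colourings of a dimension-$\mu(G)$ GHZ graph, while multi-edges inside $S$ (with bi-chromatic half-edge colourings and complex weights) plus a handful of auxiliary vertices supply ample degrees of freedom. A case split by parity is natural: when $|A|$ is even the relevant signatures have $|T_A| \in \{0, 2\}$ and the gadget can consist of tuned multi-edges inside $S$ plus possibly one auxiliary vertex handling the $T_A = \emptyset$ contribution; when $|A|$ is odd only $|T_A| \in \{1, 3\}$ arise and the gadget uses auxiliary vertices adjacent to subsets of $S$. The scaling lemma from \cref{generalization} absorbs $\alpha$ into the weights on $B \cup S$.

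Let $G'$ be the graph obtained by gluing $A'$ to $G[B \cup S]$ along $S$ with the induced colouring-weighting. For every monochromatic vertex colouring $vc$ of $G$, the induced colouring $vc'$ of $G'$ satisfies $w((G')^{c'}_{w'}, vc') = w(G_c^w, vc) = 1$ by the factorisation above, and the vanishing on non-monochromatic colourings transfers analogously. Hence $G'$ is GHZ of dimension at least $\mu(G)$, and by construction $|V(G')| \leq |V(G)|-2$. The main obstacle is the gadget step: producing a gadget whose weighted matching profile matches the $A$-profile simultaneously across all feasible colourings will require a careful parity analysis, a dimension count balancing gadget parameters against profile values, and possibly a direct hand-construction for small $|A|$ (in particular $|A|=2$) where the slack $|A|-2$ is tightest.
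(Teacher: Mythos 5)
Your high-level plan --- decompose the weight of each colouring across the $3$-cut according to which cut vertices are matched into which side, replace one side by a small gadget reproducing that side's weighted-matching profile, and invoke the scaling lemma --- is the same skeleton as the paper's proof via \cref{specific_theorem}. But there is a genuine gap at the core: the existence of the gadget is asserted, not proved. The claim that a gadget exists ``because the target profile takes only finitely many values \ldots\ while multi-edges \ldots\ supply ample degrees of freedom'' is precisely the step that constitutes the entire content of the theorem, and a degrees-of-freedom count does not suffice: one must exhibit explicit edge weights and then verify that \emph{every} vertex colouring of $G'$ --- including colourings of your auxiliary vertices, which have no counterpart in $G$ --- has the correct weight, and that the gadget creates no spurious perfect matchings whose signatures produce cross terms. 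The paper's construction keeps the \emph{odd} side $V_1$ intact, deletes the even side $V_2$ entirely, and encodes $V_2$'s contribution purely in $\mu(G)^2$ multi-edges inside $S$ (no auxiliary vertices); the delicate point is the Type-$0$ matchings, in which all three cut vertices are matched into $V_1$ and $V_2$ is matched internally, so their contribution cannot be attributed to any single intra-$S$ edge. Handling them forces the correction term $\tfrac{1}{|\mathcal{C}_1|}\sum_{c\in\mathcal{C}_1} W(c_{V_2}p_{u_i}q_{u_j})/W(c_{V_2})$ of \cref{wt_of_edge_in_harder_case} together with the case split on whether $\mathcal{C}_1$ is empty. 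Your signature decomposition never says how the $T_A=\emptyset$ (or $T_A=S$) contribution is to be distributed among gadget edges without double counting, and that is exactly the obstruction; the paper's own ``limitations'' subsection shows the analogous naive distribution already fails for $4$-cuts because two intra-$S$ edges can then lie in one matching.

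A secondary problem is the parity bookkeeping. If you replace the odd side, signatures with $|T_A|=3$ cannot be realised by a gadget with a single auxiliary vertex (three cut vertices cannot all be matched into one vertex), so ``possibly one auxiliary vertex'' does not cover the profile; if you replace the even side, the $T_A=\emptyset$ signature is the problematic one above. So the parity case split you defer is not a routine verification but the place where the construction must actually be designed. Until the gadget is written down explicitly and all colourings of $G'$ are checked against it, the argument does not establish the theorem.
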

Due to \cref{reduction_theorem_main}, a minimal counter-example (a counter-example with the minimum number of vertices) to Krenn-Gu conjecture must be $4$-connected. Using \cref{reduction_theorem_main}, we can resolve Krenn-Gu conjecture for some interesting graph classes like cubic graphs (that is, $3$ regular graphs). We prove \cref{result1} and \cref{result2} in \cref{red_appli}.
\begin{theorem}
\label{result1}
If the maximum degree of a graph $G$ is $3$, \cref{krenn_conjecture} is true.
\end{theorem}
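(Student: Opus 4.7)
The plan is to proceed by contradiction: assume $G$ has $\Delta(G)\le 3$, $|V(G)|>4$, and $\mu(G)\ge 3$, and derive a contradiction. Fix a half-edge colouring $c$ and weight assignment $w$ witnessing $\mu(G_c^w)\ge 3$.

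First, I would use degree counting to force $G$ into a very rigid structure. The three feasible monochromatic vertex colourings each yield a perfect matching $M_i$ inside the filtered graph, consisting only of edges monochromatic of that colour. Since an edge can be monochromatic of at most one colour, $M_0,M_1,M_2$ are pairwise edge-disjoint and use $3|V(G)|/2$ edges in total. Combined with $|E(G)|\le 3|V(G)|/2$ forced by $\Delta(G)\le 3$, equality must hold: $G$ is cubic, $M_0\cup M_1\cup M_2=E(G)$, every edge is monochromatic, and $c$ restricts to a proper $3$-edge-colouring of $G$. The GHZ weight conditions $\prod_{e\in M_i}w(e)=1$ further guarantee that no edge has weight $0$.

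Next, I would show that any non-monochromatic perfect matching $P$ of $G$ immediately breaks the GHZ property. Define the vertex colouring $vc_P(v)=c(e)$ for the edge $e\in P$ at $v$; this is well-defined because $e$ is monochromatic, and $vc_P$ is non-monochromatic because $P$ uses edges of more than one colour. Since $G$ is cubic with a proper $3$-edge-colouring, each vertex has exactly one edge of each of the three colours, so the only edge at $v$ whose colour matches $vc_P(v)$ is precisely $P$'s edge. Hence $\mathcal{F}(G_c^w,vc_P)=P$, and its weight equals $\prod_{e\in P}w(e)\ne 0$, contradicting the GHZ condition that non-monochromatic vertex colourings have weight $0$. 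So $G$ must have no non-monochromatic PM, i.e., the only perfect matchings of $G$ are $M_0,M_1,M_2$.

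The final step, which I expect to be the main obstacle, is to prove that every cubic graph on more than four vertices with a proper $3$-edge-colouring does admit a non-monochromatic perfect matching. The easy sub-case: if some pairwise union $M_i\cup M_j$ consists of more than one cycle, then swapping $M_i$ and $M_j$ along a single component gives a PM using edges of both colours $i$ and $j$. Otherwise all three pairwise unions are Hamiltonian cycles, and one must construct a $3$-coloured PM from the chord pattern of $M_2$ along $M_0\cup M_1$, e.g.\ by finding an $M_0$-edge $\{a,b\}$ for which $G\setminus\{a,b\}$ has a PM different from $M_0\setminus\{a,b\}$; such an edge is forced to exist as soon as $|V(G)|>4$ (in $K_4$ every PM has only two edges and cannot use three colours, matching $\mu(K_4)=3$). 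As a back-up route that avoids this combinatorial lemma, since $\kappa(G)\le\Delta(G)=3$ one may iterate \cref{reduction_theorem_main}, invoking \cref{type_2_resolution} whenever $\kappa\le 2$ and \cref{kevin} at the base case $|V|=4$ to push the hypothesis $\mu\ge 3$ down to an irreducible graph and obtain the contradiction. Either way we conclude $\mu(G)\le 2$.
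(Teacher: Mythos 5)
There is a genuine gap, and it sits in your very first step. The matching index $\mu(G)$ is defined as a maximum over all \emph{multi-graphs} whose skeleton is the simple graph $G$, so the hypothesis $\Delta(G)\le 3$ bounds only the skeleton degree; the witnessing graph $G_c^w$ may carry several parallel edges (monochromatic of different colours, or bichromatic) between the same pair of vertices. Your count ``$M_0\cup M_1\cup M_2$ has $3|V|/2$ edges and $|E(G)|\le 3|V|/2$'' therefore compares edges of the multigraph with an edge bound that holds only for the skeleton, and none of its consequences follow: $G_c^w$ need not be a simple cubic graph with a proper $3$-edge-colouring, edges need not all be monochromatic, and a vertex may see two parallel edges whose relevant half-edges both carry the colour $vc_P(v)$, so $\mathcal{F}(G_c^w,vc_P)$ can strictly contain $P$ and its weight can legitimately cancel to $0$. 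This multi-edge obstruction is exactly the difficulty the paper stresses, and its proof spends most of its effort eliminating it: it places a degree-$3$ vertex $u$ in $V_1$, uses $N(u)$ as the $3$-cut, applies the reduction of Theorem~\ref{specific_theorem_easy_case} (which preserves the edges at $u$) to produce a $4$-vertex graph of dimension $3$, and then invokes the uniqueness clause of Theorem~\ref{kevin} to conclude that $G_c^w$ has no parallel non-zero edges. Only after that does the argument you give in your second and third steps (a non-monochromatic perfect matching exists by Theorem~\ref{bogdanov}; it is the unique matching inducing its vertex colouring; hence its non-zero weight violates the GHZ condition) become valid — and for the existence of that matching you can simply cite Theorem~\ref{bogdanov} rather than re-derive it, which is what the paper does.

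Your back-up route does not rescue the argument either. One application of Theorem~\ref{reduction_theorem_main} around a degree-$3$ vertex already lands on a $4$-vertex graph, and Theorem~\ref{kevin} permits $\mu=3$ there ($K_4$ attains it), so ``pushing $\mu\ge 3$ down to the base case'' produces no contradiction: the reduction alone only yields $\mu(G)\le\mu(K_4)=3$, which is the paper's Corollary~\ref{coll1}, not the conjecture. The extra information that must be carried back from the base case is Mantey's structural statement that a $4$-vertex graph of dimension $3$ has exactly one non-zero edge per vertex pair; that is the ingredient your proposal is missing.
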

\begin{theorem}
\label{result2}
If the minimum degree of a graph $G$ is $3$, then $\mu(G)\leq 3$
\end{theorem}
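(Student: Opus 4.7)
Induct on $|V(G)|$. In the base case $|V(G)|=4$, the hypothesis $\delta(G)=3$ forces $G\cong K_4$, and Theorem \ref{kevin} gives $\mu(K_4)\leq 3$.

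For the inductive step, suppose $|V(G)| \geq 6$. Since $\delta(G)=3$, pick a vertex $v$ with $\deg(v)=3$. Its neighborhood $N(v)$ is a vertex cut of size $3$, since removing it isolates $v$ from the remaining $|V(G)|-4\geq 2$ vertices. Hence $\kappa(G)\leq 3$. If $\kappa(G)\leq 2$, Theorem \ref{type_2_resolution} gives $\mu(G)\leq 2 \leq 3$. Otherwise $\kappa(G)=3$, and Theorem \ref{reduction_theorem_main} yields a graph $G'$ with $|V(G')| \leq |V(G)|-2$ and $\mu(G') \geq \mu(G)$.

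Iterating this reduction produces a chain $G=G_0, G_1, G_2,\ldots$ with $|V(G_i)|$ strictly decreasing and $\mu(G_i)\leq\mu(G_{i+1})$. The chain eventually terminates at a graph $G_k$ with $|V(G_k)|=4$, where Theorem \ref{kevin} gives $\mu(G_k)\leq 3$; monotonicity then transfers this bound back: $\mu(G)\leq\mu(G_k)\leq 3$.

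The main obstacle is to ensure that the iteration does not stall: at every intermediate step $G_i$ we need $\kappa(G_i)\leq 3$ in order to reapply Theorem \ref{reduction_theorem_main}, but the reduction could in principle output a $4$-connected graph. A clean route is to exploit the particular form of the cut: at the $3$-cut $N(v)$ with $\deg(v)=3$, realize Theorem \ref{reduction_theorem_main} by keeping the ``small side'' $\{v\}\cup N(v)$ and replacing the large side with a gadget on $N(v)$; this gives $|V(G')|=4$ in a single step and the proof concludes immediately via Theorem \ref{kevin}. Verifying that Theorem \ref{reduction_theorem_main}'s construction admits such a realization (and preserves the matching-index inequality) is the technical crux and requires looking inside its proof rather than treating it as a black box.
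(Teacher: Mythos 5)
Your first route---treating \cref{reduction_theorem_main} as a black box and iterating---fails for exactly the reason you name: the reduced graph need not contain a degree-$3$ vertex and can be $4$-connected, so the chain can stall before reaching $4$ vertices; the hypothesis ``minimum degree $3$'' is not an invariant of the reduction. The proposal therefore stands or falls with the ``clean route'' of your last paragraph. There you have in fact guessed the paper's proof, but you stop at the decisive point: you defer as ``the technical crux'' the verification that the reduction can be realized so as to keep $\{v\}\cup N(v)$ and collapse the rest, and without that verification this is not yet a proof.

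The missing observation is short. The paper's reduction (\cref{specific_theorem}) splits into two cases according to whether a set of colours $\mathcal{C}_1$ is empty, where a colour can enter $\mathcal{C}_1$ only if some ``Type $0$'' perfect matching exists, i.e.\ a perfect matching of $G[V_1\cup S]$ in which \emph{all three} vertices of the cut $S$ are matched into $V_1$. Taking $V_1=\{v\}$ and $S=N(v)$ for a vertex $v$ of degree $3$, at most one vertex of $S$ can be matched to $v$, so there are no Type $0$ matchings, hence $\mathcal{C}_1=\emptyset$ and the easy-case construction (\cref{specific_theorem_easy_case}, \cref{construction_easy_case}) applies: it keeps $S$ as three vertices, collapses $V_1$ to a single vertex, and encodes the contribution of $V_2$ in multi-edges inside $S$, producing a $4$-vertex multigraph $G'$ with $\mu(G')\ge\mu(G)$. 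Then $\mu(G)\le\mu(K_4)=3$ by \cref{kevin}. This is precisely your ``gadget on $N(v)$''; note also that no induction and no appeal to \cref{type_2_resolution} are needed, since $N(v)$ is a size-$3$ separating set regardless of $\kappa(G)$, and the whole argument terminates in one application of \cref{specific_theorem_easy_case}.
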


\subsection{Reformulation of Krenn-Gu conjecture}
\label{generalization}
Recall that we denote the weight of the vertex colouring $vc$ over a set of vertices $U \subseteq V(G)$ as $w(U,vc)$, which is the sum of weights of all perfect matching on $G[U]$ which induce the vertex colouring $vc$ on $U$ and we denote the monochromatic vertex colouring $vc: V\to \{i\}$ by $\mathbf{i}_V$.

For a graph $G$, let $U,U'\subseteq V(G)$. Let $vc:U\to \mathbf{N}$ and $vc':U'\to \mathbf{N}$. If $vc(v)=vc'(v)$ for all $v \in U\cap U'$, we call $vc,vc'$ to be compatible with each other. When $vc,vc'$ are compatible, we define their union $[vc \cup vc']:U\cup U' \to \mathbf{N}$ as follows: $[vc \cup vc'](v)=vc(v)$ for $v\in U$ and $[vc \cup vc'](v)=vc'(v)$ for $v\in U'$. 

We broaden the definition of GHZ graphs to \textit{g-GHZ} graphs. An edge-coloured edge-weighted graph $G_c^w$ satisfying the following properties is defined to be g-GHZ
\begin{enumerate}
    \item All feasible monochromatic vertex colourings have a non-zero weight (instead of necessarily being $1$).
    \item All non-monochromatic vertex colourings have a weight of $0$.
\end{enumerate}
Note that this generalization allows each of the monochromatic vertex colourings to have different weights. The dimension of a g-GHZ graph is the number of feasible monochromatic vertex colourings. For a graph $G$, the maximum dimension achievable over all possible g-GHZ colouring and weight assignments is its \textit{generalized matching index} $\mu_g(G)$. 
\begin{conjecture}
\label{krenns_conjecture_generalization}
$\mu_g(K_4) = 3$ and for a graph $G$ which is non-isomorphic to $K_4$, $\mu_g(G) \leq 2$.
\end{conjecture}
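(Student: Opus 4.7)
The plan is to prove the two assertions of \cref{krenns_conjecture_generalization} separately, leaning on the scaling lemma promised in \cref{generalization} together with Mantey's \cref{kevin} and the structural theorems proved earlier in the paper.

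For the equality $\mu_g(K_4) = 3$, the lower bound is already exhibited by the construction in \cref{fig:3dimk4}: all three feasible monochromatic vertex colourings there have non-zero weight and every non-monochromatic colouring vanishes, so that graph is g-GHZ with dimension $3$. For the upper bound $\mu_g(K_4) \leq 3$, I argue by contradiction. Suppose some coloured weighted $K_4$ is g-GHZ with dimension $d \geq 4$. The scaling lemma rescales the edge weights, colour class by colour class, so that every feasible monochromatic vertex colouring acquires weight exactly $1$ while the vanishing of non-monochromatic weights is preserved; the result is an honest GHZ $K_4$ of dimension $d \geq 4$, contradicting \cref{kevin}.

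For $\mu_g(G) \leq 2$ when $G \not\cong K_4$, the plan is strong induction on $|V(G)|$. The base case $|V(G)| = 4$ is straightforward: every $4$-vertex graph other than $K_4$ has $\kappa \leq 2$, so after using the scaling lemma to convert a hypothetical g-GHZ witness of dimension $\geq 3$ into a GHZ witness I apply \cref{type_2_resolution}. For the inductive step, let $|V(G)| = n > 4$ and assume the claim for all graphs on fewer vertices. If $\kappa(G) \leq 2$, invoke \cref{type_2_resolution} after scaling. If $\kappa(G) = 3$, invoke \cref{reduction_theorem_main} to obtain a smaller $G'$ with $\mu(G') \geq \mu(G) = \mu_g(G)$, and split into two sub-cases: when $G' \cong K_4$, a closer analysis of the reduction together with the hypothesis $G \not\cong K_4$ must sharpen the cap from $3$ down to $2$ (one has to check that the monochromatic colourings of $G'$ realising the $K_4$ extremum do not all lift to valid monochromatic colourings of $G$); when $G' \not\cong K_4$, the inductive hypothesis applied to $G'$ gives $\mu_g(G) \leq \mu(G') \leq 2$.

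The main obstacle is the residual case $\kappa(G) \geq 4$, which coincides with the open Krenn--Gu conjecture for $4$-connected graphs and is not settled by the present paper. A plausible route is to push the reduction philosophy of \cref{reduction_theorem_main} one step further up the connectivity ladder, either by contracting a carefully chosen separator of size $4$ while preserving the matching structure, or by exploiting the algebraic rigidity that the g-GHZ equations on non-monochromatic vertex colourings impose on the edge weights crossing a small cut; the advantage of working in the g-GHZ formulation, via \cref{generalization}, is that one need not track the weight-$1$ normalisation during such a reduction. A secondary subtlety is the sub-case $G' \cong K_4$ inside the $\kappa(G) = 3$ argument above, which requires a careful lifting lemma rather than a mere appeal to Mantey's theorem.
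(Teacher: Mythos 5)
The statement you were asked about is a \emph{conjecture}, not a theorem: the paper never proves \cref{krenns_conjecture_generalization}. What the paper actually establishes is only that this generalized formulation is \emph{equivalent} to the original Krenn--Gu conjecture (\cref{krenn_conjecture}), via the Scaling Lemma (\cref{scaling_generalization}), together with partial results: the conjecture holds when $\kappa(G)\leq 2$ (\cref{type_2_resolution}), when the maximum degree is $3$ (\cref{result1}), and a minimal counterexample must be $4$-connected (\cref{reduction_theorem_main}). Your proposal correctly reconstructs essentially all of this machinery --- scaling to pass between g-GHZ and GHZ, Mantey's \cref{kevin} for $K_4$, the connectivity-$\leq 2$ theorem, and the $3$-cut reduction --- but it does not constitute a proof of the statement, and you say so yourself: the case $\kappa(G)\geq 4$ ``coincides with the open Krenn--Gu conjecture for $4$-connected graphs and is not settled by the present paper.'' That residual case is not a loose end to be tidied up; it is the entire open problem. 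A proof that openly leaves the main case unresolved is not a proof.

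There is also a second, smaller gap inside the part you do attempt. In the $\kappa(G)=3$ inductive step, when the reduced graph $G'$ has exactly $4$ vertices and could be $K_4$, \cref{reduction_theorem_main} only yields $\mu(G)\leq\mu(G')\leq 3$, and the paper itself does not close this gap in general: \cref{result2} stops at the bound $\mu(G)\leq 3$ for minimum degree $3$, and the improvement to $2$ in \cref{result1} uses an additional argument (uniqueness of non-zero edges between vertex pairs from \cref{kevin}, combined with Bogdanov's \cref{bogdanov}) that relies on the \emph{maximum} degree being $3$, not merely the existence of a $3$-cut. Your proposed ``closer analysis of the reduction'' and ``careful lifting lemma'' for this sub-case are placeholders for an argument that neither you nor the paper supplies. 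So even restricted to $\kappa(G)=3$, your induction as written does not go through.
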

We prove that \cref{krenn_conjecture} and \cref{krenns_conjecture_generalization} are equivalent. Trivially, a counter-example to \cref{krenn_conjecture} would immediately give a counter-example to \cref{krenns_conjecture_generalization}. We prove that any counter-example to \cref{krenns_conjecture_generalization}  would also yield a counter-example to \cref{krenn_conjecture} in \cref{scaling_generalization}. This reformulation is more suitable for {our proofs in the following sections.}
\begin{lemma}[Scaling lemma]
\label{scaling_generalization}
If there is a graph $G_c^w$, which is g-GHZ, then there is a graph $G_c^{w'}$, which is a GHZ graph with the same dimension.
\end{lemma}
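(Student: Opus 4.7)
The plan is to show that a single, uniform rescaling of the edge weights of $G_c^w$ can turn every feasible monochromatic vertex colouring's weight into $1$ while automatically preserving the zero weight of every non-monochromatic colouring. The crux is to choose a scaling whose effect on the weight of any perfect matching factors through the vertex colouring it induces.

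Concretely, for each vertex $v$ and each colour $i$ I would introduce an auxiliary nonzero scalar $\beta_{v,i} \in \mathbb{C}$ and define
\[
w'(e) \;=\; w(e)\,\beta_{u,\, c(e_u)}\,\beta_{v,\, c(e_v)} \qquad \text{for every edge } e=uv.
\]
The key observation is that for any perfect matching $M$ of $G$ inducing the vertex colouring $vc_M$, each vertex lies in exactly one edge of $M$, and that edge's half-edge at $v$ is coloured $vc_M(v)$. Therefore
\[
w'(M) \;=\; w(M)\,\prod_{v \in V(G)} \beta_{v,\, vc_M(v)}.
\]
The multiplicative factor depends only on $vc_M$, not on $M$ itself. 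Summing over all perfect matchings that induce a common vertex colouring $vc$ gives the clean identity
\[
w(G_c^{w'}, vc) \;=\; w(G_c^w, vc)\,\prod_{v \in V(G)} \beta_{v,\, vc(v)}.
\]

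With this identity in hand, both GHZ conditions are easy to enforce. For any non-monochromatic $vc$, the original weight is $0$ by the g-GHZ hypothesis, so the new weight is also $0$; the non-monochromatic condition is preserved for free. For the normalisation condition, let $W_i := w(G_c^w, \mathbf{i}_V)$; by g-GHZ-ness $W_i \neq 0$ for every feasible colour $i$. It then suffices to pick the $\beta_{v,i}$ so that $\prod_v \beta_{v,i} = 1/W_i$ for every such $i$. One convenient choice is to fix a vertex $v_0$ and set $\beta_{v_0, i} = 1/W_i$ with $\beta_{v,i} = 1$ for all $v \neq v_0$ (and $\beta_{v,i}=1$ throughout for infeasible colours). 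Since the rescaling factors are all nonzero, the set of feasible monochromatic colourings is unchanged, so the dimension is preserved.

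There is no substantive obstacle; the argument rests entirely on spotting the per-vertex, per-colour scaling scheme whose effect on each matching factors through the induced vertex colouring. Once that factorisation is visible, monochromatic weights can be rescaled independently, and the vanishing of non-monochromatic weights is preserved automatically. The only point that needs to be written out carefully is the derivation of the factorisation, which is a one-line consequence of the fact that a perfect matching uses each vertex exactly once.
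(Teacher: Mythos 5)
Your proof is correct and follows essentially the same approach as the paper: a rescaling of each edge weight by factors depending only on the colours of its half-edges, so that the weight of every perfect matching is multiplied by a factor depending only on the induced vertex colouring, which kills nothing non-monochromatic and lets you normalise each monochromatic weight to $1$. The only (cosmetic) difference is that the paper spreads the factor $W(i)^{-1}$ uniformly as $n$-th roots over all vertices, whereas you concentrate it at a single vertex $v_0$, which slightly more cleanly avoids choosing complex $n$-th roots.
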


\begin{proof}
    We denote the weight of the vertex colouring $w(\mathbf{i}_V, G)$ using $W(i)$. Note that by definition of g-GHZ graphs, the weight of a monochromatic colouring $W(i)$ is always non-zero. So, for each edge $e\in G_c$ whose half-edges are of colour $i,j$, we assign the weight 

$$w'(e)=w(e)(W(i)W(j))^{-1/n}$$

Let $M$ be a matching in $G_c$, which induces the vertex colouring $vc_M$. The weight of an edge $e \in M$ between vertices $u,v$ will be, 
$$w'(e)=w(e)(W(vc_M(u))W(vc_M(v)))^{-1/n}$$ 
As each vertex is incident by exactly one edge of the perfect matching $M$, the weight of M will be $$w'(M)=w(M)\prod\limits_{v \in V} W(vc_M(v))^{-1/n}$$ 
Since the weights of all perfect matchings which induce a vertex colouring, $vc$ will increase by a factor of $\prod\limits_{v \in V} W(vc(v))^{-1/n}$, the weight of the vertex colouring $vc$ will be
$$w'(vc)=w(vc) \prod\limits_{v \in V} W(vc(v))^{-1/n} $$   
As $w(vc)$ is zero for all non-monochromatic vertex colourings, $w'(vc)$ will remain to be zero. 

For a monochromatic vertex colouring $vc=\mathbf{i}_V$, we know that  $vc(v)=i$ for all $v\in V$. Therefore,  $w'(\mathbf{i}_V)=w(\mathbf{i}_V) ((w(\mathbf{i}_V))^{-1/n})^n=1 $

Therefore, $G_c^{w'}$ is a GHZ graph.
\end{proof}
\section{Matching index of multigraphs with vertex connectivity at most 2}\label{connectivity_proof}

In a graph $G$ with $V(G)=V_1\bigsqcup V_2 \cdots \bigsqcup V_k$, the vertex colouring of $G$ in which each $V_i$ receives the colour $i$ is denoted by $\mathbf{1}_{V_1}\mathbf{2}_{V_2}\cdots \mathbf{k}_{V_k}$. The weight of the vertex colouring $\mathbf{1}_{V_1}\mathbf{2}_{V_2}\cdots \mathbf{k}_{V_k}$ in a graph $G_{c'}^w$ is denoted as $w(G_{c'}^w,\mathbf{1}_{V_1}\mathbf{2}_{V_2}\cdots \mathbf{k}_{V_k})$.  When the graph $G_{c'}^w$ is understood from the context,  we may shorten
the notation to just $w(\mathbf{1}_{V_1}\mathbf{2}_{V_2}\cdots \mathbf{k}_{V_k})$. We also use the notation $w(\mathbf{i}_{V_i}\mathbf{j}_{V_j}\mathbf{k}_{V_k})$ to denote be the weight of the vertex colouring $\mathbf{i}_{V_i}\mathbf{j}_{V_j}\mathbf{k}_{V_k}$ on the induced subgraph of $V_i \bigsqcup V_j \bigsqcup V_k$.




\typetwores*
\begin{proof}

We will consider only matching covered graphs. If a graph is not matching covered, remove all redundant edges to get $G'=mcg(G)$. Since the removed edges were not part of any perfect matchings, $\mu(G)=\mu(G')$. The hypothesis is still true since $\kappa(G')\leq\kappa(G)$.

If the graph has no perfect matching, then ${\mu}(G)=0$. So, we may assume that there is at least one perfect matching (${\mu}(G)\geq 1$). Note that since there is a perfect matching, the number of vertices is even.

\begin{observation}\label{conn0obs}
If $\kappa(G)=0$, then ${\mu}(G) \leq 1$. 
\end{observation}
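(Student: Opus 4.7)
The plan is to exploit the multiplicativity of the perfect-matching weight across connected components. Let $G_c^w$ be any edge-coloured edge-weighted matching-covered graph with skeleton $G$ that realizes a GHZ structure. Since $\kappa(G)=0$, we can write $V(G)=V_1\sqcup V_2$ with no edges between $V_1$ and $V_2$. Every perfect matching $M$ of $G$ therefore decomposes uniquely as $M=M_1\cup M_2$ where $M_i$ is a perfect matching of $G[V_i]$, and $w(M)=w(M_1)\,w(M_2)$. Consequently, for any vertex colouring $vc$ on $V$ with restrictions $vc_1,vc_2$ to $V_1,V_2$, grouping the perfect matchings by the component decomposition gives
\begin{equation*}
w(G_c^w,vc)\;=\;w(G[V_1]_c^w,vc_1)\cdot w(G[V_2]_c^w,vc_2).
\end{equation*}

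Now suppose for contradiction that $\mu(G,c,w)\geq 2$, so there exist two distinct colours $i\neq j$ such that both monochromatic colourings $\mathbf{i}_V$ and $\mathbf{j}_V$ are feasible (with weight $1$ by the GHZ property). Applying the factorization above,
\begin{equation*}
1=w(\mathbf{i}_V)=w(G[V_1],\mathbf{i}_{V_1})\cdot w(G[V_2],\mathbf{i}_{V_2}),
\end{equation*}
and similarly for $j$, so all four factors $w(G[V_r],\mathbf{i}_{V_r})$ and $w(G[V_r],\mathbf{j}_{V_r})$, for $r\in\{1,2\}$, are nonzero.

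Consider then the mixed vertex colouring $\mathbf{i}_{V_1}\mathbf{j}_{V_2}$, which assigns colour $i$ to $V_1$ and colour $j$ to $V_2$. Since $i\neq j$ and both $V_1,V_2$ are nonempty, this colouring is non-monochromatic; the GHZ property forces its weight to be $0$. But the multiplicativity identity yields
\begin{equation*}
w(\mathbf{i}_{V_1}\mathbf{j}_{V_2})=w(G[V_1],\mathbf{i}_{V_1})\cdot w(G[V_2],\mathbf{j}_{V_2})\neq 0,
\end{equation*}
a contradiction. Hence $\mu(G,c,w)\leq 1$ for every GHZ colouring/weighting, and taking the maximum over all such assignments gives $\mu(G)\leq 1$. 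There is no real obstacle here: the whole argument rests on the trivial fact that matchings in a disconnected graph split across components, and the GHZ constraint on the mixed colouring does the rest.
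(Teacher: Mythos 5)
Your proof is correct and follows essentially the same route as the paper: both arguments factor the weight of any vertex colouring across the two components of the disconnected graph, deduce that all four monochromatic component-weights are nonzero from the two feasible monochromatic colourings, and then derive a contradiction from the mixed colouring $\mathbf{i}_{V_1}\mathbf{j}_{V_2}$ having nonzero weight despite being non-monochromatic. No substantive differences to report.
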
\label{disconnected}
\begin{proof}
If $\kappa(G)=0$, $G$ is disconnected. Assume, towards a contradiction, that ${\mu}(G)\geq 2$. Let $1$ and $2$ be two colours whose monochromatic vertex colourings have a weight of $1$. Let $A$ be the set of vertices in one of the connected components and $B$ be the remaining vertices $V(G)\setminus A$.

Since $A$ and $B$ are disconnected, the weight of monochromatic vertex colouring $\mathbf{1}_A\mathbf{1}_B$ of $G$ is
\begin{equation}\label{k0A_1}
   1= w(G,\mathbf{1}_A\mathbf{1}_B) =w(G[A],\mathbf{1}_A)w(G[B],\mathbf{1}_B)\implies w(G[A],\mathbf{1}_A)\neq 0    
\end{equation}
Similarly, the weight of monochromatic vertex colouring $\mathbf{2}_A\mathbf{2}_B$ of $G$ is
\begin{equation}\label{k0B_2}
 1=w(G,\mathbf{2}_A\mathbf{2}_B)= w(G[A],\mathbf{2}_A)w(G[B],\mathbf{2}_B)\implies w(G[A],\mathbf{2}_B)\neq 0
\end{equation}
From \cref{k0A_1} and \cref{k0B_2},
\begin{equation*}
    w(G[A],\mathbf{1}_A)w(G[B],\mathbf{2}_B)\neq 0
\end{equation*}
Consider the vertex colouring $\mathbf{1}_A\mathbf{2}_B$. This is a non-monochromatic vertex colouring, and its weight should be $0$.
\begin{equation*}
0 = w(G,\mathbf{1}_A\mathbf{2}_B)  = w(G[A],\mathbf{1}_A)w(G[B],\mathbf{2}_B)
\end{equation*}
This is a contradiction.
\end{proof}

\begin{observation}\label{conn1obs}
For a matching covered graph $G$, $\kappa(G)\neq 1$.
\end{observation}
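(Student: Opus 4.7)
The plan is to argue by contradiction using a parity argument on the sizes of the components of $G - v$ for a putative cut vertex $v$. Suppose $\kappa(G) = 1$ and let $v$ be a cut vertex of $G$. Let $C_1, \ldots, C_k$ (with $k \geq 2$) be the connected components of $G - v$, and set $A_i = V(C_i)$. Since $G$ is connected, $v$ must have at least one neighbour in every $A_i$; otherwise the component with no neighbour of $v$ would be disconnected from the rest of $G$ even before $v$ is removed.

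Next I would analyze the structure of any perfect matching $M$ of $G$, which exists because $G$ is matching covered (and has edges; otherwise the statement is vacuous). In $M$, the vertex $v$ is matched to exactly one neighbour, which lies in a unique component $C_j$. Because no edges run between distinct components of $G - v$, the remaining vertices of $A_j$ must be matched within $G[A_j]$, forcing $|A_j|$ to be odd, while for every $i \neq j$ the set $A_i$ must be perfectly matched inside $G[A_i]$, forcing $|A_i|$ to be even. Thus, for each perfect matching of $G$, exactly one component of $G - v$ has odd order, namely the one containing the partner of $v$.

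Now I would derive the contradiction from matching-coveredness. Pick any two components, say $C_1$ and $C_2$, and fix edges $e_1$ from $v$ into $A_1$ and $e_2$ from $v$ into $A_2$ (these exist by the first paragraph). Since $G$ is matching covered, $e_1$ lies in a perfect matching $M_1$ and $e_2$ lies in a perfect matching $M_2$. Applying the parity observation to $M_1$ gives $|A_1|$ odd and $|A_2|$ even, whereas applying it to $M_2$ gives $|A_2|$ odd and $|A_1|$ even, a contradiction. Hence no cut vertex can exist and $\kappa(G) \neq 1$. There is no serious obstacle in this proof; the only small subtlety is justifying that $v$ has a neighbour in each component, which is immediate from the connectivity of $G$.
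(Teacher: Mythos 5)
Your proof is correct and follows essentially the same route as the paper: a parity argument on the components of $G-\{v\}$ combined with the fact that matching-coveredness forces every edge at the cut vertex $v$ into some perfect matching. The paper reaches the contradiction in a slightly different order---it first identifies an odd component $A$ by counting (using that $|V(G)|$ is even), deduces that no edge from $v$ to the remaining vertices $B$ can lie in any perfect matching, and concludes that $A\cup\{v\}$ is disconnected from $B$---whereas you derive contradictory parities for two components directly from two perfect matchings; the two arguments are interchangeable.
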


\begin{proof}
Suppose not. If $\kappa(G)=1$, there is a cut vertex $v$. Since the total number of vertices in $G$ is even, the number of vertices in $V(G)-\{v\}$ is odd. The removal of $v$ separates the graph into two or more components. As the total number of vertices in $V(G)-\{v\}$ is odd, there must be at least one connected component in the graph induced on $V(G)-\{v\}$ with an odd number of vertices, say $A$. Let $B=V(G)-\{v\}-A$. Note that $|B|$ must be even. As $|A|$ is odd, in any perfect matching in $G$, there will necessarily be an edge from $A$ to $v$ and hence no edge from $B$ to $v$. Therefore, as $G$ is a matching covered graph, $A\cup\{v\}$ must be disconnected from $B$. Therefore, $\kappa(G)=0$. This is a contradiction. 
\end{proof}

We will now proceed to prove that if $\kappa(G) = 2$, then ${\mu}(G)\leq 2$. As $\kappa(G) = 2$, there is a set $U=\{u,v\}$ which separates the graph. 

Let $A$ be the set of vertices in a connected component of the induced subgraph of $V(G)-U$ on $G$ and $B$ be the remaining vertices in $V(G)-U-A$. We now consider two cases depending on whether the number of vertices $A$ is odd or even. The reader may note that the proof for the case when the number of vertices in $A$ is even is almost similar to the case when the number of vertices in $A$ is odd, except for some subtleties which are presented in the latter case.

\begin{lemma}\label{oddlemma}
If $|A|$ is odd, then ${\mu}(G)\leq 2$.
\end{lemma}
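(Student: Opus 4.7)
The plan is to argue by contradiction: suppose $\mu(G)\geq 3$, so there are three colours $1,2,3$ each of whose monochromatic vertex colouring has weight $1$. The first step is a parity observation. Since $|V(G)|$ is even and $|A|$ is odd, $|B|$ is odd as well, so neither $G[A]$ nor $G[B]$ can carry an internal perfect matching; and using the edge $uv$ alone leaves the odd sets $A$ and $B$ to be matched internally, which is impossible. Hence in every perfect matching of $G$, exactly one of $\{u,v\}$ is matched into $A$ and the other into $B$, and the perfect matchings of $G$ split into two disjoint types: Type~1 ($u$ into $A$, $v$ into $B$) and Type~2 (roles reversed).

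The second step is to introduce bookkeeping for the two sides. For $i,j\in\{1,2,3\}$, let $\alpha_{i,j}$ denote the weight in $G[A\cup\{u\}]$ of the vertex colouring sending every vertex of $A$ to $i$ and $u$ to $j$, and define $\beta_{i,j}$, $\gamma_{i,j}$, $\delta_{i,j}$ analogously on $G[B\cup\{v\}]$, $G[A\cup\{v\}]$, $G[B\cup\{u\}]$. The type-split then yields, for any vertex colouring with $vc(A)\equiv a$, $vc(B)\equiv b$, $vc(u)=c$, $vc(v)=d$,
\[
w(vc)\;=\;\alpha_{a,c}\,\beta_{b,d}\;+\;\gamma_{a,d}\,\delta_{b,c}.
\]
The GHZ hypothesis thus becomes the three monochromatic equations $\alpha_{i,i}\beta_{i,i}+\gamma_{i,i}\delta_{i,i}=1$ together with the vanishing of $\alpha_{a,c}\beta_{b,d}+\gamma_{a,d}\delta_{b,c}$ for every non-monochromatic tuple $(a,b,c,d)\in\{1,2,3\}^4$.

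The final step is to mine this system for a contradiction. I would pick two of the three colours (say $1$ and $2$), write the six non-monochromatic equations over $\{1,2\}^4$, and argue by bilinear manipulation that, in the ``non-degenerate'' sub-case where all four diagonal quantities $\alpha_{i,i},\beta_{i,i},\gamma_{i,i},\delta_{i,i}$ are non-zero, the off-diagonal quantities $\alpha_{i,j},\gamma_{i,j}$ for $i\neq j$ must all vanish. This would make $G[A\cup\{u\}]$ a g-GHZ graph of dimension at least $3$ on $|A|+1<|V(G)|$ vertices; together with Lemma~\ref{scaling_generalization}, Observations~\ref{conn0obs} and~\ref{conn1obs}, and an induction on $|V(G)|$, we arrive at a contradiction. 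In the degenerate sub-cases, where one of $\alpha_{i,i}\beta_{i,i}$ or $\gamma_{i,i}\delta_{i,i}$ vanishes for some $i$, substitution into an appropriate non-monochromatic equation forces either a zero on the right-hand side of a monochromatic equation or exhibits a non-monochromatic colouring of non-zero weight.

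The main obstacle is the bilinear analysis in the non-degenerate sub-case: although two colours give eight equations in roughly eight unknowns, forcing the off-diagonal $\alpha,\gamma$ terms to vanish requires a careful pairing of the non-monochromatic equations (for example pairing $(1,2,1,2)$ with $(1,2,2,1)$ and combining with $(1,2,1,1)$) and may need the third colour to rule out a residual two-dimensional configuration that satisfies all six equations coming from colours $\{1,2\}$ alone.
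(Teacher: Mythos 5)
Your setup is exactly the paper's: the parity argument, the split of perfect matchings into two types, and the bilinear identity $w(vc)=\alpha_{a,c}\beta_{b,d}+\gamma_{a,d}\delta_{b,c}$ coincide with the paper's decomposition $w(c)={\cal V}_c+H_c$. The trouble is the endgame. The ``bilinear manipulation'' is left as a plan rather than carried out, and the target you set for it --- that all off-diagonal $\alpha_{i,j},\gamma_{i,j}$ vanish --- is not what the system delivers. What the paper actually extracts is the following: by pigeonhole over the three monochromatic colours (each has $\alpha_{i,i}\beta_{i,i}\neq 0$ or $\gamma_{i,i}\delta_{i,i}\neq 0$, since the two terms sum to $1$), two colours, say $1$ and $2$, fall in the same class; propagating non-vanishing through the non-monochromatic relations then shows that the four off-diagonal weights $\alpha_{2,1},\beta_{2,1},\gamma_{2,1},\delta_{2,1}$ are all \emph{non-zero}, while a separate identity (obtained by multiplying three non-monochromatic relations and factoring out the monochromatic weight $w(\mathbf{2}_A\mathbf{2}_B\mathbf{2}_u\mathbf{2}_v)=1$) forces their product to be zero. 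The contradiction lives entirely inside these equations; no claim about the subgraph $G[A\cup\{u\}]$ is needed. Your degenerate/non-degenerate split (all four diagonals non-zero versus not) does not give you the pigeonhole conclusion, and the degenerate branch is dismissed in one vague sentence even though, for any single colour, only one of the two diagonal products need be non-zero.

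More seriously, even granting that $G[A\cup\{u\}]$ becomes a g-GHZ graph of dimension at least $3$ on fewer vertices, that is not a contradiction. If $|A|=3$ the subgraph has four vertices and $K_4$ genuinely attains dimension $3$; if $|A|\geq 5$, the nonexistence of such a graph is precisely the open Krenn--Gu conjecture. Observations~\ref{conn0obs} and~\ref{conn1obs} only cover connectivity $0$ and $1$, and any induction internal to Theorem~\ref{type_2_resolution} only covers graphs of connectivity at most $2$; but $G[A\cup\{u\}]$ can be arbitrarily highly connected, so neither the observations nor the induction hypothesis applies to it. The argument has to be closed inside the $\alpha,\beta,\gamma,\delta$ system itself, as the paper does.
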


\begin{proof}
If the number of vertices in $A$ is odd, in every perfect matching, there will be a vertex in $A$ which is matched with a vertex in $\{u,v\}$. So, we can divide the perfect matchings in $G$ into two types depending on whether such vertex is matched with $u$ or $v$.

\begin{enumerate}
    \item\label{oddtype1} Perfect matchings that have an edge from $A$ to $u$ and an edge from $B$ to $v$
    \item\label{oddtype2} Perfect matchings that have an edge from $A$ to $v$ and an edge from $B$ to $u$
\end{enumerate}

Let us denote $G[A\cup \{u\}]$, $G[A\cup \{v\}]$, $G[B\cup \{u\}]$, $G[B\cup \{v\}]$, with $G_{Au}$, $G_{Av}$, $G_{Bu}$, $G_{Bv}$ respectively.

Consider the vertex colouring $c=\mathbf{i}_A\mathbf{j}_B\mathbf{k}_{\{u\}}\boldsymbol{\ell}_{\{v\}}$. As $A$ is an odd-sized set, every perfect matching in $G$ inducing $c$ would fall into exactly one of the two types described above. It is easy to see that the perfect matchings in $G$ of Type \ref{oddtype1}, which induce the vertex colouring $c$, are the disjoint unions of the perfect matchings in $G_{Au}$ which induce the vertex colouring $\mathbf{i}_A\mathbf{k}_{\{u\}}$ and the perfect matchings in $G_{Bv}$ which induce the vertex colouring $\mathbf{j}_B\boldsymbol{\ell}_{\{v\}}$.  Therefore, the contribution to the weight of $w(c)$ due to perfect matchings of Type
\ref {oddtype1}  is the product of $w(i_Ak_{\{u\}})$ and $w(j_B\ell_{\{v\}})$. 

Similarly, the perfect matchings in $G$ of Type \ref{oddtype2} which induce the vertex colouring $c$ are the disjoint unions of the perfect matchings in $G_{Av}$ which induce the vertex colouring $\mathbf{i}_A\boldsymbol{\ell}_{\{v\}}$ and the perfect matchings in $G_{Bu}$ which induce the vertex colouring $\mathbf{j}_B\mathbf{k}_{\{u\}}$. The contribution to $w(c)$ due to perfect matchings
of Type \ref {oddtype2} equals the product of $w(\mathbf{i}_A\boldsymbol{\ell}_{\{v\}})$ and $w(\mathbf{j}_B\mathbf{k}_{\{u\}})$.

From here on, with a slight abuse of notation, we will use the small letters $u$ and $v$ to represent the singleton sets $\{u\}$ and $\{v\}$, respectively.

\begin{figure}[htbp]
\centering
\tikzset{
    vertex/.style={circle, draw, minimum size=5pt, inner sep=0pt, fill=black}
}
\begin{tikzpicture}
\node[vertex,label=below:$A$] (A) at (0,0) {};
\node[vertex,label=above right:{$B$}] (B) at (2,2) {};
\node[vertex,label=above left:{$u$}] (u) at (0,2) {};
\node[vertex,label=below right:{$v$}] (v) at (2,0) {};

\draw (A) -- node[left] {${\cal V}_{c}^l$} (u);
\draw (A) -- node[below] {$H_{c}^b$} (v);
\draw (B) -- node[right] {${\cal V}_{c}^r$} (v);
\draw (B) -- node[above] {$H_{c}^t$} (u);
\end{tikzpicture}
\caption{Schematic diagram for \cref{oddlemma}}
\label{fig:graph_ex}
\end{figure}

Since this case is more sophisticated than the earlier cases,  to make it possible to visualize 
some steps of the proof, we introduce a schematic diagram to represent the weight of a colouring 
$i_Aj_Bk_u\ell_v$ of $G$.   We consider that  the square shown in figure \ref{fig:graph_ex} to represent the graph,
with the bottom left-hand corner (vertex) representing the component $A$,  the top right-hand corner (vertex)  representing $B$, the top left-hand corner  (vertex) and bottom right-hand corner (vertex) representing $u$ and $v$ 
respectively.  An edge of the square represents the corresponding induced subgraph of $G$; for example,
the left vertical edge represents the induced subgraph $G_{Au}$. If a colouring, say $i_Aj_Bk_u\ell_v$ is given,  we may  
represent this colouring of $G$ by a coloured square where the vertices that correspond to $A, B,u$ and $v$ gets the colours $i,j,k$ and $\ell$ respectively. 
In a coloured square, we may consider that the edges represent the weight of the induced
subgraph that corresponds to it with respect to this colouring. Thus, for the coloured square of
the colouring $i_Aj_Bk_u\ell_v$,  the top horizontal edge represents the weight of the colouring $k_uj_B$ of $G_{Bu}$.

For a given vertex colouring $c= i_Aj_Bk_u\ell_v$, the  product of  the two (i.e. top and bottom) horizontal  edges of the coloured square is called the  horizontal weight of the colouring and is denoted by
$H_c$.  The top horizontal edge and the bottom horizontal edge are represented by
$H_c^t$ and $H_c^b$ respectively. Thus $H_c = H_c^t.H_c^b$. Similarly, the product of
 the two vertical edges (left vertical edge ${\cal V}_c^l$ and right vertical edge ${\cal V}_c^r$)
 of the coloured square of $c$
 is the vertical weight of $c$, denoted by ${\cal V}_c = {\cal V}_c^l. {\cal V}_c^r$.  The reader may convince
 herself that $H_c$ and ${\cal V}_c$ represents the contribution to $w(c)$ due to Type \ref {oddtype2}
and Type \ref {oddtype1}  perfect matchings respectively. 
Since $w(c)$ is the sum of the weights due to perfect matchings of
type \ref {oddtype1} and type \ref {oddtype2},  it is easy to see that  $w(c) = H_c + {\cal V}_c$.

\noindent Since for any non-monochromatic coloring $c$, $w(c) = 0$ the following is easy to see.

\begin{observation}\label{k2oddnonmono0}
    If $i$, $j$, $k$, $\ell$ are not all equal, for $c=i_Aj_Bk_u\ell_v$,
      $H_c = -{\cal V}_c$
\end{observation}

For ease of discussion, we will call a coloured square a \emph {solid coloured square}  if 
\emph {all}  its edges are non-zero and a \emph {fragile colored square} otherwise.

\begin {observation}
\label {nonmonochromatic_corollary}
For a non-monochromatic coloring $c$, if either of $H_c$ or ${\cal V}_c$ is non-zero, then all the $4$ edges of the
colored square of $c$ represent non-zero weights, i.e. $H_c^t, H_c^b, {\cal V}_c^l, {\cal V}_c^r \ne 0$. In other words, the coloured square of $c$ would be solid. 
\end {observation}

 Since ${\mu}(G)\geq 3$, there exist at least $3$ monochromatic vertex colourings whose weight is $1$. Let $c_i = (i_A,i_B,i_u, i_v)$ for $i=1,2,3$.
Clearly for each $i \in [3]$, we have $w(c_i) = 1$, and therefore either  (i.e. at least one
of) 
$H_{c_i}$ or ${\cal V}_{c_i}$ is non-zero.  In the former case, let $c_i$ be in Class 1; in the
latter case, let $c_i$ be in  Class 2. By pigeon hole principle, at least $2$ colours 
from $c_i, i \in [3]$ should belong to the same class. Without loss of generality, 
let these two colours be $c_1$ and $c_2$ and moreover, without loss of generality, let us assume that
they belong to Class 2. 

\begin{observation}\label{zero_term_obs}
 If $\mu(G)\geq 3$, 
 $$ w(\mathbf{2}_A\mathbf{1}_{\{u\}})w(\mathbf{2}_B\mathbf{1}_{\{v\}})w(\mathbf{2}_A\mathbf{1}_{\{v\}})w(\mathbf{2}_B\mathbf{1}_{\{u\}})\neq 0$$   
\end{observation}

\begin{proof}

\begin{figure}[htbp] 
  \centering 
  \includegraphics[width=0.999\textwidth]{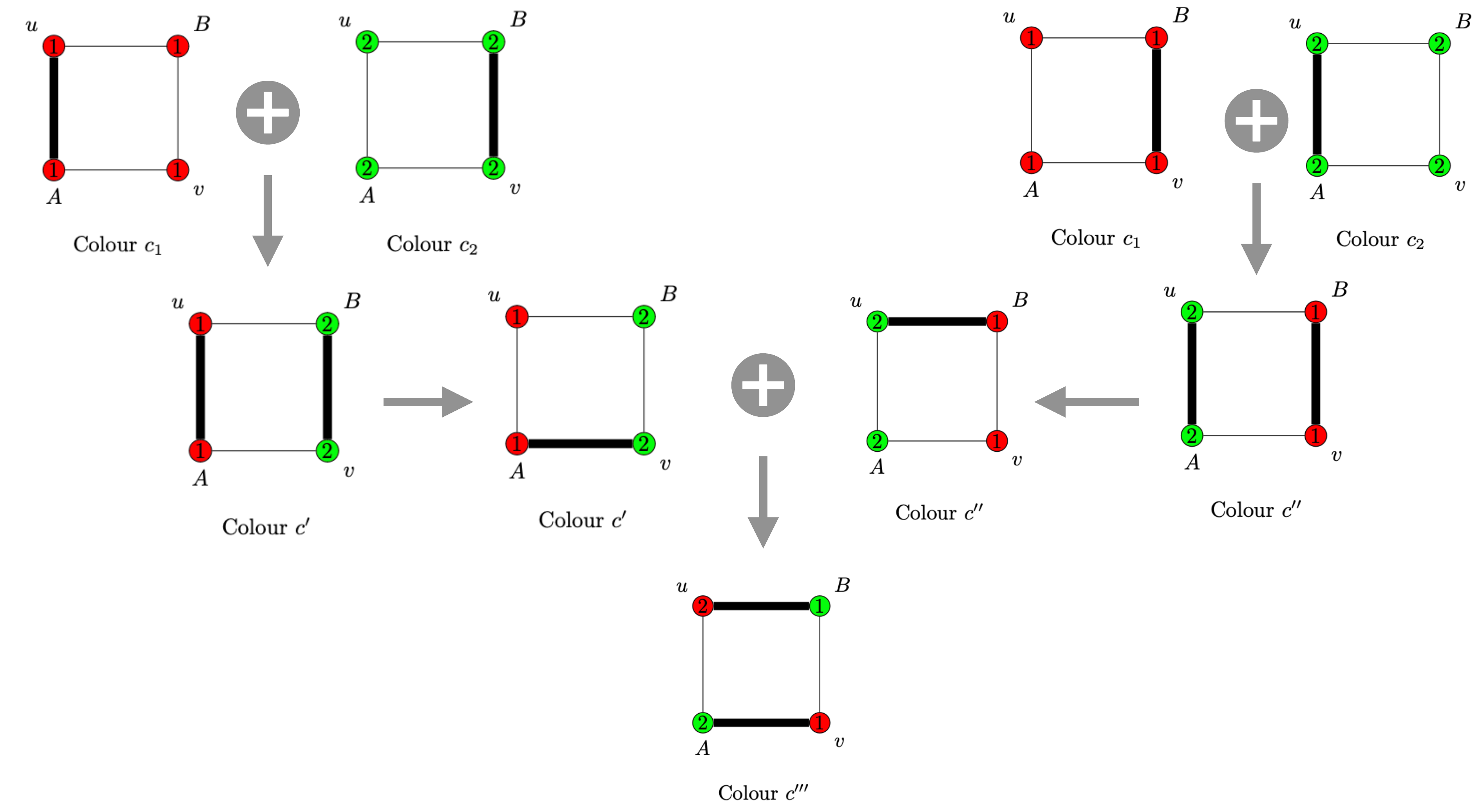} 
  \caption{Figure illustrating the equations of \cref{zero_term_obs}.} 
  \label{fig:example_zero_term_obs} 
\end{figure}

By corollary \ref {nonmonochromatic_corollary}, it is 
enough to show that either $H_\gamma$ or ${\cal V}_\gamma$ is non-zero for the colored square of $\gamma=\mathbf{2}_A\mathbf{2}_B\mathbf{1}_u\mathbf{1}_v$, to show
that it is solid. 

As $c_1$ and $c_2$ belong to Class 2 by assumption,  we know that ${\cal V}_{c_1}, {\cal V}_{c_2} \ne 0$ and therefore 
${\cal V}_{c_1}^l, {\cal V}_{c_1}^r, {\cal V}_{c_2}^l, {\cal V}_{c_2}^r$ are all non-zero. Therefore, when $c' = 1_A2_B1_u2_v$
$$0 \ne {\cal V}_{c_1}^l. {\cal V}_{c_2}^r = w(1_A,1_u).w(2_B,2_u)=  {\cal V} (c')$$  
When $c'' = 2_A 1_B 2_u 1_v$, in the same way, since  ${\cal V}_{c_1}^r, {\cal V}_{c_2}^l \ne 0$, we get  $$0 \ne   {\cal V}_{c_1}^r. {\cal V}_{c_2}^l  = w(1_B1_v).w(2_A2_u)={\cal V} (c'') $$ 

Applying \cref {nonmonochromatic_corollary} on the coloring $c'$ and $c''$, we can infer that square$(c')$  and
square$(c'')$  are both  solid and hence $H_{c'}^t, H_{c''}^b \ne 0$. Therefore, when $c'''=2_A2_B1_u1_v$
$$0 \neq H_{c'}^t H_{c''}^b = w(2_B1_u).w(2_A1_v)= H_{c'''}^t H_{c'''}^b=H(c''')$$
Applying \cref {k2oddnonmono0}, we get that $V(c''')\neq 0$. Therefore, $c'''$ is solid. It now follows that $$ w(\mathbf{2}_A\mathbf{1}_{\{u\}}),w(\mathbf{2}_B\mathbf{1}_{\{v\}}),w(\mathbf{2}_A\mathbf{1}_{\{v\}}),w(\mathbf{2}_B\mathbf{1}_{\{u\}})\neq 0$$
Our proof is illustated in \cref{fig:example_zero_term_obs}.
\end{proof}

We will now prove that $c'''$ is fragile and thus contradicting \cref{zero_term_obs}.
\begin{observation}
\label {nonzero_terms_observation}
\begin{equation*}
 w(\mathbf{2}_A\mathbf{1}_{\{u\}})w(\mathbf{2}_B\mathbf{1}_{\{v\}})w(\mathbf{2}_A\mathbf{1}_{\{v\}})w(\mathbf{2}_B\mathbf{1}_{\{u\}})=0
 \end{equation*}
 In other words, the coloured square of the colouring $\mathbf{2}_A\mathbf{2}_B\mathbf{1}_u\mathbf{1}_v$ is fragile.
\end{observation}

\begin{proof}

Consider the following four vertex colourings of $G$:  $\alpha = \mathbf{2}_A\mathbf{2}_B\mathbf{1}_u\mathbf{2}_v$, $\beta = 
 \mathbf{2}_A\mathbf{2}_B\mathbf{2}_u\mathbf{1}_v$, $\gamma=\mathbf{2}_A\mathbf{2}_B\mathbf{1}_u\mathbf{1}_v$ and $\delta=\mathbf{2}_A\mathbf{2}_B\mathbf{2}_u\mathbf{2}_v$. 

 From observation \cref {k2oddnonmono0}  it is clear that 
  ${\cal V}_\alpha.{\cal V}_\beta. H_\gamma    = -H_\alpha.H_\beta.{\cal V}_\gamma $.   This expands to 

  $$   
{\cal V}_\alpha^l.{\cal V}_\alpha^r.{\cal V}_\beta^l.{\cal V}_\beta^r.H_\gamma^b.H_\gamma^t = -      H_\alpha^b.H_\alpha^t.H_\beta^b H_\beta^t.{\cal V}_\gamma^l.{\cal V}_\gamma^r  $$



Now, substituting the original weights for each term above, we get the following:
\begin{multline*}
    w(\mathbf{2}_A\mathbf{1}_{\{u\}})w(\mathbf{2}_B\mathbf{2}_{\{v\}})w(\mathbf{2}_A\mathbf{2}_{\{u\}})w(\mathbf{2}_B\mathbf{1}_{\{v\}})w(\mathbf{2}_A\mathbf{1}_{\{v\}})w(\mathbf{2}_B\mathbf{1}_{\{u\}})\\
    = -w(\mathbf{2}_A\mathbf{2}_{\{v\}})w(\mathbf{2}_B\mathbf{1}_{\{u\}})w(\mathbf{2}_A\mathbf{1}_{\{v\}})w(\mathbf{2}_B\mathbf{2}_{\{u\}})w(\mathbf{2}_A\mathbf{1}_{\{u\}})w(\mathbf{2}_B\mathbf{1}_{\{v\}})
\end{multline*}
It follows that 
\begin{multline*}
    (w(G_{Au},\mathbf{2}_A\mathbf{2}_{\{u\}})w(G_{Bv},\mathbf{2}_B\mathbf{2}_{\{v\}}) + w(G_{Av},\mathbf{2}_A\mathbf{2}_{\{v\}})w(G_{Bu},\mathbf{2}_B\mathbf{2}_{\{u\}}))\\ 
    \times   w(G_{Au},\mathbf{2}_A\mathbf{1}_{\{u\}})w(G_{Bv},\mathbf{2}_B\mathbf{1}_{\{v\}})w(G_{Av},\mathbf{2}_A\mathbf{1}_{\{v\}})w(G_{Bu},\mathbf{2}_B\mathbf{1}_{\{u\}}) = 0
\end{multline*} 

As $\delta$ is monochromatic, we get $w(G,\delta)=w(G,\mathbf{2}_A\mathbf{2}_B\mathbf{2}_u\mathbf{2}_v)=1 \neq 0$. It follows that
$$ 0\neq w(G_{Au},\mathbf{2}_A\mathbf{2}_{\{u\}})w(G_{Bv},\mathbf{2}_B\mathbf{2}_{\{v\}}) +w(G_{Av},\mathbf{2}_A\mathbf{2}_{\{v\}})w(G_{Bu},\mathbf{2}_B\mathbf{2}_{\{u\}}) $$

Therefore,
$$w(G_{Au},\mathbf{2}_A\mathbf{1}_{\{u\}})w(G_{Bv},\mathbf{2}_B\mathbf{1}_{\{v\}})w(G_{Av},\mathbf{2}_A\mathbf{1}_{\{v\}})w(G_{Bu},\mathbf{2}_B\mathbf{1}_{\{u\}}) = 0$$
Therefore, the coloured square of the colouring $\mathbf{2}_A\mathbf{2}_B\mathbf{1}_u\mathbf{1}_v$ is fragile. 
\end{proof}

Hence, from \cref{zero_term_obs} and \cref {nonzero_terms_observation}, if $|A|$ is odd, then ${\mu}(G)\leq 2$.
\end{proof}

\begin{lemma}\label{evenlemma}
If $|A|$ is even, then $\mu(G)\leq 2$    
\end{lemma}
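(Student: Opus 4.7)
The plan is to mirror the strategy of \cref{oddlemma}, but with extra bookkeeping to account for a new kind of perfect matching that appears only in the even parity. When $|A|$ is even, none of $G_{Au}, G_{Av}, G_{Bu}, G_{Bv}$ has a perfect matching, and instead every perfect matching of $G$ falls into one of three types according to how $u$ and $v$ are matched: Type I has both $u,v$ matched into $A$ (with $B$ matched internally), Type II has both $u,v$ matched into $B$ (with $A$ matched internally), and Type III has $u$ matched directly to $v$ (with $A$ and $B$ each matched internally). Setting $q_A(i)=w(G[A],\mathbf{i}_A)$, $q_B(j)=w(G[B],\mathbf{j}_B)$, $P_A(i,k,\ell)=w(G[A\cup\{u,v\}],\mathbf{i}_A\mathbf{k}_u\boldsymbol{\ell}_v)$, and $p_B(j,k,\ell)$ equal to the sum of weights of those perfect matchings of $G[B\cup\{u,v\}]$ in which both $u$ and $v$ are matched into $B$ under the colouring $\mathbf{j}_B\mathbf{k}_u\boldsymbol{\ell}_v$, a direct case split by matching type yields the two-term identity
\[
w(c) \;=\; P_A(i,k,\ell)\, q_B(j) \;+\; q_A(i)\, p_B(j,k,\ell),
\]
for any colouring $c=\mathbf{i}_A\mathbf{j}_B\mathbf{k}_u\boldsymbol{\ell}_v$, where $P_A\, q_B$ absorbs Types I and III (both of which restrict to perfect matchings of $G[A\cup\{u,v\}]$) and $q_A\, p_B$ is the Type II contribution.

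Because this is again a two-term decomposition, the Class~1 / Class~2 dichotomy of \cref{oddlemma} transfers, with $P_A\, q_B$ playing the role of $H_c$ and $q_A\, p_B$ playing the role of $\mathcal{V}_c$. Assume for contradiction that $\mu(G)\geq 3$; then the pigeonhole argument forces two of the three weight-one monochromatic colourings (say $c_1$ and $c_2$) into the same class, and without loss of generality $q_A(1)\,p_B(1,1,1)\neq 0$ and $q_A(2)\,p_B(2,2,2)\neq 0$. I would then replay the two moves of \cref{zero_term_obs} and \cref{nonzero_terms_observation}. First, feed the zero-weight identity $w(c)=0$ into non-monochromatic colourings such as $\mathbf{2}_A\mathbf{1}_B\mathbf{1}_u\mathbf{1}_v$, $\mathbf{1}_A\mathbf{2}_B\mathbf{2}_u\mathbf{2}_v$ and their variants to propagate non-vanishing from $c_1$ and $c_2$ down to a rigid quadruple of subgraph-weights, the even-case analogue of a ``solid square''. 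Second, combine the zero-weight identities for a well-chosen family of auxiliary colourings (playing the role of $\alpha,\beta,\gamma,\delta$ of \cref{nonzero_terms_observation}) with the monochromatic identity $w(c_t)=1$ for the distinct colours $t$, to force this same quadruple of weights to vanish, producing the desired contradiction.

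The principal obstacle is that $P_A(i,k,\ell)$ is a single ternary weight rather than a product of two independent half-weights as in the odd case, so the coloured-square picture cannot be transplanted verbatim and the ``top/bottom/left/right'' bookkeeping has to be redone. A useful refinement is the additive split $P_A(i,k,\ell)=p_A(i,k,\ell)+r(k,\ell)\,q_A(i)$, where $p_A$ counts only Type I matchings and $r(k,\ell)$ is the (possibly zero) weight of the $uv$-edge whose half-edges at $u$ and $v$ have colours $k$ and $\ell$. When $r\equiv 0$, the identity for $w(c)$ reduces to a genuine product-of-two-edge-weights analogue of the odd case and the argument of \cref{oddlemma} goes through with only cosmetic changes. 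The hard work, which I expect to occupy most of the proof, is to adapt the cascading of zero-weight identities so that the extra $r$-terms telescope out, possibly after re-weighting the graph using the scaling freedom of \cref{scaling_generalization}, leaving an identity of the same shape as in the odd case.
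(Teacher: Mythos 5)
Your three-type classification and the resulting two-term identity are sound, and they are essentially the paper's own decomposition up to a cosmetic choice: the paper assigns the Type III (edge $(u,v)$) matchings to the $B$-side term, writing $w(\mathbf{i}_A\mathbf{j}_B\mathbf{k}_U) = w(G[A],\mathbf{i}_A)\,w(G[B\cup U],\mathbf{j}_B\mathbf{k}_U) + w(G[A\cup U]\setminus E(u,v),\mathbf{i}_A\mathbf{k}_U)\,w(G[B],\mathbf{j}_B)$, whereas you fold them into the $A$-side factor $P_A$. Either grouping yields a sum of two products, each product having one factor depending only on the colours of $A$ and $U$ and one factor depending only on the colour of $B$, and that is all the machinery of \cref{oddlemma} actually uses.

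The genuine gap is that you stop at a plan exactly where the proof has to be carried out: the analogues of \cref{zero_term_obs} and \cref{nonzero_terms_observation} --- first cascading non-vanishing from $c_1,c_2$ through colourings such as $\mathbf{1}_A\mathbf{2}_B\mathbf{1}_U$ and $\mathbf{2}_A\mathbf{1}_B\mathbf{2}_U$ to conclude that the square of $\mathbf{2}_A\mathbf{2}_B\mathbf{1}_U$ is solid, and then multiplying the identities $H_c=-\mathcal{V}_c$ for the three non-monochromatic colourings $\mathbf{2}_A\mathbf{1}_B\mathbf{1}_U$, $\mathbf{1}_A\mathbf{2}_B\mathbf{1}_U$, $\mathbf{2}_A\mathbf{2}_B\mathbf{1}_U$ and invoking $w(\mathbf{1}_A\mathbf{1}_B\mathbf{1}_U)=1$ to show the same square is fragile --- are only announced, not proved. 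Worse, the obstacle you defer this to is illusory: the argument never needs $P_A(i,k,\ell)$ (or the paper's $w(G[B\cup U],\mathbf{j}_B\mathbf{k}_U)$) to factor into two half-weights, and the $r$-terms need not telescope; they simply ride inside the atomic factor. Even in the odd case each ``edge of the square'' is a single subgraph weight rather than a product. Once you restrict to colourings with $k=\ell$ (which is all the contradiction requires) and take $P_A(i,k,k)$, $q_B(j)$, $q_A(i)$, $p_B(j,k,k)$ as the four edges of the coloured square, the mix-and-match step $\mathcal{V}_{c_1}^l\cdot\mathcal{V}_{c_2}^r=\mathcal{V}_{c'}$ and the final cancellation go through verbatim, with no appeal to \cref{scaling_generalization}. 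What is missing is therefore not a new idea but the actual execution of these two observations, which you must supply for the proof to stand.
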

\begin{proof}
If the number of vertices in $A$ is even, we can divide the perfect matchings in $G$ into two types.

\begin{enumerate}
    \item\label{eventype1} Perfect matchings that have an edge from $A$ to both $u$ and $v$ and no edge from $B$ to $U$
    \item\label{eventype2} Perfect matchings that have no edge from $A$ to $U$
\end{enumerate}
Note that the perfect matchings in which the edge $(u,v)$ is present are of the Type \ref{eventype2}. Let us denote $G[A\cup U]\setminus E({u,v})$, $G[A]$, $G[B\cup U]$, $G[B]$, with $G'_{AU}$, $G_{A}$, $G_{BU}$, $G_{B}$ respectively.

Consider the vertex colouring $c=\mathbf{i}_A\mathbf{j}_B\mathbf{k}_{\{U\}}$. As $A$ is an even-sized set, every perfect matching in $G$ inducing $c$ would fall into exactly one of the two types described above. The perfect matchings in $G$ of Type \ref{eventype1} which induce the vertex colouring $c$ are the disjoint unions of the perfect matchings in $G'_{AU}$ which induce the vertex colouring $\mathbf{i}_A\mathbf{k}_{U}$ and the perfect matchings in $G_{B}$ which induce the vertex colouring $\mathbf{j}_B$. The perfect matchings in $G$ of Type \ref{eventype2} which induce the vertex colouring $c$ are precisely the disjoint unions of the perfect matchings in $G_{A}$ which induce the vertex colouring $\mathbf{i}_A$ and the perfect matchings in $G_{BU}$ which induce the vertex colouring $\mathbf{j}_B\mathbf{k}_{U}$.

\begin{figure}[htbp]
\centering
\tikzset{
    vertex/.style={circle, draw, minimum size=5pt, inner sep=0pt, fill=black}
}
\begin{tikzpicture}
\node[vertex,label=below:$A$] (A) at (0,0) {};
\node[vertex,label=above right:{$B$}] (B) at (2,2) {};
\node[vertex,label=above left:{$\{u,v\}$}] (u) at (0,2) {};
\node[vertex,label=below right:{$\emptyset$}] (v) at (2,0) {};

\draw (A) -- (u);
\draw (A) -- (v);
\draw (B) -- (v);
\draw (B) -- (u);
\end{tikzpicture}
\caption{Schematic diagram for \cref{evenlemma}}
\label{fig:graph_ex2}
\end{figure}

We now create a new schematic diagram in \cref{fig:graph_ex2}. There are some subtle differences between our new schematic diagram and new schematic diagram the from \cref{fig:graph_ex}. Even in this case, we can visualize the weight of a colouring $c$ using the schematic diagram of a coloured square, square$(c)$. Just that the top left-hand corner
of the square, in this case, represents the set $\{u,v\}$ and the bottom right-hand corner represents $\emptyset$.  Like in the earlier
case, the bottom left-hand corner and top right-hand corner represent $A$ and $B$, respectively. Note that we will consider only colourings $c$ such that that the induced colourings on $A, B$ and $U=\{u,v\}$ are monochromatic, for example, $i_Aj_Bk_U$.  

In a coloured square, we may consider that the edges represent the weight of the induced
subgraph that corresponds to it with respect to this colouring. Thus, for a vertex colouring $i_Aj_Bk_U$, the weight of top horizontal edge represents the weight of the colouring $k_Uj_B$ of $G_{BU}$ and the weight of the left vertical edge represents the weight of the colouring $i_Ak_U$ of $G_{AU}$. Clearly, we have $w(c) = H_c + {\cal V}_c$ for any vertex colouring $c$ like earlier. Since for non-monochromatic colourings $w(c) = 0$, we have the following.  


\begin{observation}\label{k2evennonmono0}
If $i$, $j$, $k$ are not all equal, for vertex coloring $c=i_Aj_Bk_U$
\begin{equation*}
    H_{c}=-{\cal V}_c
\end{equation*}
\end{observation}

\begin {observation}
\label {even_solid_square}
For a non-monochromatic colouring $c$, if either $H_c$ or ${\cal V}_c$ is non-zero, then the square$(c)$ is solid. 
\end {observation}

As ${\mu}(G)\geq 3$, there exist at least $3$ monochromatic vertex colourings whose weight is $1$. Going via the pigeonhole principle based argument as in the proof of \cref{oddlemma}, we infer that at least 2 of these colors should belong to the same
class; as for the odd case, without loss of generality, let us assume that $1$ and $2$ both belong to Class 2.

\begin{observation}\label{zero_term_obs_2}
 If $\mu(G)\geq 3$, then $$ w(G_{A},\mathbf{2}_A)
w(G_{BU},\mathbf{2}_B\mathbf{1}_U) w(G'_{AU},\mathbf{2}_A\mathbf{1}_U) w(G_{B},\mathbf{2}_B)\neq0$$
In other words, the coloured square of $c=\mathbf{2}_A\mathbf{2}_B\mathbf{1}_U$ is solid. 
\end{observation} 
\begin{proof}

Let $c_i = i_Ai_Bi_U$. For $i=1,2$, by assumption  ${\cal V}_{c_i} \ne 0$, and therefore it follows that ${\cal V}_{c_1}^l, {\cal V}_{c_1}^r,{\cal V}_{c_2}^l,{\cal V}_{c_2}^r \neq 0$.
From this, we can infer that $0 \ne {\cal V}_{c_1}^l.{\cal V}_{c_2}^r = w'(1_A1_U).w(2_B) = {\cal V}_{c'}$, for the non-monochromatic coloring $c'= 1_A2_B1_U$. 
By a similar argument, $0 \ne    {\cal V}_{c_2}^l.{\cal V}_{c_1}^r = w'(2_A2_U).w(1_B) = {\cal V}_{c''}$  for the non-monochromatic coloring $c'' =2_A1_B2_U$.
From observation \ref {even_solid_square},  we infer that both square$(c')$ and square$(c'')$ are solid.  It follows that 
$H_{c'}^t = w(2_B1_U) \ne 0$ and $H_{c''}^b = w(2_A) \ne 0 $.  Now  $w(2_B1_U).w(2_A) = H_{\delta}$, where $\delta = 2_A2_B1_U$.  It follows that square$(\delta)$ is
solid. 
\end{proof}
We will now prove that $\delta$ is fragile and thus contradicting \cref{zero_term_obs_2}.

\begin{observation}
\label{evencancelling}
$$ w(G_{A},\mathbf{2}_A)
w(G_{BU},\mathbf{2}_B\mathbf{1}_U)
w(G'_{AU},\mathbf{2}_A\mathbf{1}_U)
w(G_{B},\mathbf{2}_B)=0$$
In other words, the coloured square of $c=\mathbf{2}_A\mathbf{2}_B\mathbf{1}_U$ is fragile. 
\end{observation}

\begin{proof}

Let $c_1= \mathbf{2}_A\mathbf{1}_B\mathbf{1}_{U}$,  $c_2= \mathbf{1}_A\mathbf{2}_B\mathbf{1}_{U}$ and $c_3= \mathbf{2}_A\mathbf{2}_B\mathbf{1}_{U}$.  By observation \ref {even_solid_square}, it is easy to see that
$   H_{c_1}.H_{c_2}.{\cal V}_{c_3}= - {\cal V}_{c_1}.{\cal V}_{c_2}. H_{c_3}$.  Expanding this we get,

$$H_{c_1}^b.H_{c_1}^t. H_{c_2}^b.H_{c_2}^t.{\cal V}_{c_3}^l.{\cal V}_{c_3}^r = - {\cal V}_{c_1}^l. {\cal V}_{c_1}^r. {\cal V}_{c_2}^l.{\cal V}_{c_2}^r.H_{c_3}^b.H_{c_3}^t$$

Now, substituting the actual weights for each term, 
\begin{multline*}
    w(\mathbf{2}_A)w(\mathbf{1}_B\mathbf{1}_U) w(\mathbf{1}_A)w(\mathbf{2}_B\mathbf{1}_U) w(\mathbf{2}_A\mathbf{1}_U)w(\mathbf{2}_B)\\=-w'(\mathbf{2}_A\mathbf{1}_U)w(\mathbf{1}_B)w'(\mathbf{1}_A\mathbf{1}_U)w(\mathbf{2}_B)w(\mathbf{2}_A)w(\mathbf{2}_B\mathbf{1}_U)
\end{multline*}
It follows that 
\begin{multline*}
w(G_{A},\mathbf{2}_A)
w(G_{BU},\mathbf{2}_B\mathbf{1}_U)
w(G'_{AU},\mathbf{2}_A\mathbf{1}_U)
w(G_{B},\mathbf{2}_B) \\
\times (w(G_{A},\mathbf{1}_A)w(G_{BU},\mathbf{1}_B\mathbf{1}_U)+w(G'_{AU},\mathbf{1}_A\mathbf{1}_U)w(G_{B},\mathbf{1}_B))=0
\end{multline*}
Since $\mathbf{1}_A\mathbf{1}_B\mathbf{1}_U$ is a monochromatic colouring, it must have a weight $1$. Therefore,
\begin{equation*}
    0\neq 1= w(G,\mathbf{1}_A\mathbf{1}_B\mathbf{1}_U)=w(G_{A},\mathbf{1}_A)w(G_{BU},\mathbf{1}_B\mathbf{1}_U)+w(G'_{AU},\mathbf{1}_A\mathbf{1}_U)w(G_{B},\mathbf{1}_B)
\end{equation*}
It now follows that, $$w(G_{A},\mathbf{2}_A)
w(G_{BU},\mathbf{2}_B\mathbf{1}_U)
w(G'_{AU},\mathbf{2}_A\mathbf{1}_U)
w(G_{B},\mathbf{2}_B) =0$$
Therefore, the coloured square of the colouring $\mathbf{2}_A\mathbf{2}_B\mathbf{1}_U$ is fragile. 
\end{proof}
Hence, from \cref{zero_term_obs_2} and \cref{evencancelling} , if $|A|$ is even, then ${\mu}(G)\leq 2$.
\end{proof}

Therefore, from \cref{conn0obs}, \cref{conn1obs}, \cref{oddlemma} and \cref{evenlemma}, it follows that, for a graph $G$, if $\kappa(G)\leq 2$, then $\mu(G)\leq 2$. 
\end{proof}

\section{Reduction}
\label{reduction}
We prove a stronger theorem than \cref{reduction_theorem_main} as stated below.
\begin{theorem} 
\label{specific_theorem}
Let $G$ be a multi-graph with a vertex cut $S$ of size $3$. Let $V_1$ and $V_2$ be a partition of $V(G)\setminus S$ such that $V_1$ and $V_2$ are non-empty and there are no edges between $V_1$ and $V_2$ in $G$. Moreover, let $|V_1|$ be odd and $|V_2|$ be even. There exists a graph $G'$ such that $|V(G')|\leq |V_1|+3 \leq |V(G)|-2$ and $\mu(G')\geq \mu(G)$. 
\end{theorem}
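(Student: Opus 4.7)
The plan is to contract the even side $V_2$ into a gadget of parallel edges added within $S$, obtaining $G'$ on vertex set $V_1 \cup S$ (so $|V(G')| = |V_1|+3$). By \cref{scaling_generalization} I may assume that the colouring and weighting realising $\mu(G)=d$ makes $G$ a g-GHZ graph, and aim to produce colours and weights making $G'$ g-GHZ of dimension $d$.

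First I would analyse how a perfect matching $M$ of $G$ interacts with the cut $S$. Writing $M=M_L\sqcup M_R$ with $M_L\subseteq G[V_1\cup S]$ and $M_R\subseteq G[V_2\cup S]$, the parities of $|V_1|$ and $|V_2|$ force $M_L$ to cover $V_1$ together with a subset $T\subseteq S$ of odd size and $M_R$ to cover $V_2$ together with $T'=S\setminus T$, so $|T|\in\{1,3\}$. For a colouring $vc'$ of $V_1\cup S$ let $L_T^{vc'}$ denote the total weight of matchings of $G[V_1\cup S]$ that cover $V_1\cup T$ and induce $vc'$ on their vertices, and for each even $T'\subseteq S$ and each colouring $\tau\colon T'\to[d]$ let
\[ R_{T'}^{\tau} \;=\; \sum_{M_R} w(M_R), \]
the sum ranging over matchings of $G[V_2\cup T']$ that cover $V_2\cup T'$ and induce $\tau$ on $T'$, with no constraint on the colouring of $V_2$. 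A direct factorisation then gives
\[ \sum_{vc_2\colon V_2\to[d]} w(G,\, vc'\cup vc_2) \;=\; R_\emptyset\, L_S^{vc'} \;+\; \sum_{k=1}^{3} R_{\{s_i,s_j\}}^{(c_i,c_j)}\, L_{\{s_k\}}^{vc'}, \]
where $\{i,j\}=\{1,2,3\}\setminus\{k\}$ and $(c_i,c_j)=(vc'(s_i),vc'(s_j))$. Because monochromaticity of $vc'\cup vc_2$ forces both parts to be monochromatic of the same colour, the g-GHZ property of $G$ makes this sum equal to $W(i)\neq 0$ when $vc'=\mathbf{i}_{V_1\cup S}$ and equal to $0$ for every non-monochromatic $vc'$.

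Second, I would build $G'$ by keeping $G[V_1\cup S]$ with its original colours and weights and adding, for every pair $\{s_i,s_j\}\subseteq S$ and every colour pair $(c_i,c_j)\in[d]^2$, an edge between $s_i$ and $s_j$ with half-edge colours $c_i,c_j$ and weight $R_{\{s_i,s_j\}}^{(c_i,c_j)}/R_\emptyset$. Every perfect matching of $G'$ either covers $V_1\cup S$ entirely inside $G[V_1\cup S]$ (contributing $L_S^{vc'}$) or covers $V_1\cup\{s_k\}$ inside $G[V_1\cup S]$ and uses exactly one gadget edge between the two remaining vertices of $S$; summing these contributions and substituting the gadget weights gives
\[ w(G',vc') \;=\; L_S^{vc'} + \sum_{k=1}^{3}\frac{R_{\{s_i,s_j\}}^{(c_i,c_j)}}{R_\emptyset}\, L_{\{s_k\}}^{vc'} \;=\; \frac{1}{R_\emptyset}\sum_{vc_2} w(G,\, vc'\cup vc_2). \]
By the previous step this equals $W(i)/R_\emptyset\neq 0$ for $vc'=\mathbf{i}_{V_1\cup S}$ and $0$ for every non-monochromatic $vc'$, so $G'$ is g-GHZ with at least $d$ feasible monochromatic colourings; applying \cref{scaling_generalization} once more yields $\mu(G')\geq d=\mu(G)$.

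The main obstacle I anticipate is the degenerate case $R_\emptyset = 0$, which arises when $G[V_2]$ has no perfect matching or when the weights of its perfect matchings cancel: then $1/R_\emptyset$ is undefined and the type-$3$ contribution $L_S^{vc'}$ to $w(G',vc')$ is no longer killed. I plan to resolve this either (i) structurally, by using matching-coveredness of $G$ to show that in this case $G[V_1\cup S]$ cannot support a perfect matching of $V_1\cup S$ at all (forcing $L_S^{vc'}=0$, so the gadget weights can simply be set to $R_{\{s_i,s_j\}}^{(c_i,c_j)}$), or (ii) by replacing the uniform sum over $vc_2$ by a weighted combination $\sum_{vc_2}\lambda_{vc_2}\, w(G,vc'\cup vc_2)$ chosen so that the coefficient $\sum_{vc_2}\lambda_{vc_2}\, R_\emptyset^{vc_2}$ of $L_S^{vc'}$ is non-zero while each monochromatic $\mathbf{i}_{V_2}$ still has $\lambda_{\mathbf{i}_{V_2}}\neq 0$, and using these $\lambda$-weighted right-side sums as the gadget weights.
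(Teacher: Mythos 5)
Your factorisation over the cut and the identity
\[
\sum_{vc_2} w(G,\, vc'\cup vc_2) \;=\; R_\emptyset\, L_S^{vc'} \;+\; \sum_{k=1}^{3} R_{\{s_i,s_j\}}^{(c_i,c_j)}\, L_{\{s_k\}}^{vc'}
\]
are correct, and when $R_\emptyset\neq 0$ your argument goes through; it is in fact a close relative of the paper's construction, which also keeps $G[V_1\cup S]$ intact and encodes the $V_2$-side into $\mu(G)^2$ parallel edges inside $S$ (the paper averages only over the \emph{monochromatic} colourings of $V_2$, with coefficients $1$ or $1/(|\mathcal{C}_1|\,w(c_{V_2}))$ chosen so that the orphan ``all of $S$ matched into $V_1$'' term reassembles exactly, instead of summing uniformly over all colourings of $V_2$). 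The genuine gap is the degenerate case you yourself flag, and neither of your proposed repairs closes it. Repair (i) rests on a false structural claim: matching-coveredness does not force $L_S^{vc'}=0$ when $G[V_2]$ has no perfect matching. Take $V_1=\{a,b,c\}$, $S=\{u_1,u_2,u_3\}$, $V_2=\{x,y\}$, with edges $ab,bc,ca,au_3,bu_1,cu_2$, all six edges between $\{x,y\}$ and $S$, and no edge $xy$. Every edge lies in some perfect matching of $G$ (e.g.\ $\{cu_2,ab,xu_1,yu_3\}$, $\{bc,au_3,xu_1,yu_2\}$, \dots), $G[V_2]$ has no perfect matching, yet $\{au_3,bu_1,cu_2\}$ is a perfect matching of $G[V_1\cup S]$, so $L_S^{vc'}$ can be non-zero for a non-monochromatic $vc'$. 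Repair (ii) needs $\sum_{vc_2}\lambda_{vc_2}R_\emptyset^{vc_2}\neq 0$, which is unachievable precisely when $R_\emptyset^{vc_2}=0$ for every $vc_2$ (as in this example): the coefficient of $L_S^{vc'}$ in $w(G',vc')$ is then a fixed non-zero constant while its counterpart on the $G$-side is $0$, and the unwanted term survives, destroying the g-GHZ property of $G'$.

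This degenerate case is not a technicality: the paper devotes a separate subtheorem (\cref{specific_theorem_easy_case}) to it and uses a structurally different reduction there, contracting $V_1$ as well into a single vertex $v_0$ so that the reduced graph is a multigraph on four vertices in which \emph{every} perfect matching must use a gadget edge inside $S$ --- there is no ``Type $0$'' matching left to produce the orphan term --- and the conclusion then follows from $\mu(K_4)\le 3$ (\cref{kevin}). Some idea of this kind (or an actual proof that $L_S^{vc'}$ vanishes for all non-monochromatic $vc'$ under the g-GHZ hypothesis, which you have not supplied) is required; as written, your proposal establishes the theorem only under the additional hypothesis that the normaliser $\sum_{vc_2}\lambda_{vc_2}R_\emptyset^{vc_2}$ can be made non-zero.
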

Let $w,c$ be a colouring and weight assignment of $G$ for which $\mu(G,c,w)=\mu(G)$. 
Then $G'$ would be a graph on the vertex set $V_1 \bigsqcup S$. The edge set, the edge weight function
and the edge colouring of $G'$ would be the same as in $G[V_1 \bigsqcup S]$ except for the edges with  both endpoints
inside $S$; 
We redefine the set of edges, edge-weight function, edge-colouring 
within $S$. Let $c'$ and $w'$ represent the edge-colouring and the edge-weight function of $G'$. We will
show that $\mu(G')\geq \mu(G',c',w')=\mu(G,c,w)=\mu(G)$. 

Let $V=V(G)$ and $S=\{u_1,u_2,u_3\}$. Note that any perfect matching of $G$,
should match an odd number of vertices of $S$ to $V_1$ (since $|S|$ is odd).
Therefore  the perfect matchings of  $G$ can be grouped into $4$ types:

\textbf{Type 0:}   Let ${\cal P}_0$ denote the set of all perfect matchings on $G[V_1\cup S]$ in which 
all the three vertices of  $S$  are matched with vertices in $V_1$. Let $W_0(vc)$ denote the
sum of weights of the perfect matchings from ${\cal P}_0$ that induce the vertex colouring $vc$ on ${V_1 \bigsqcup S}$.
Let ${\cal P}_0'$ denote the set of all perfect matchings on $G[V_2]$. Let $W_0'(vc)$  denote
the sum of weights of the perfect matchings from ${\cal P}_0'$  that induce the vertex colouring $vc$ on ${V_2}$. 
Type 0 perfect matchings of $G$ are the perfect matchings that belong to ${\cal P}_0 \times {\cal P}_0'$.
Clearly, the sum of the weights of Type 0 perfect matchings that induce the colouring $vc$ equals $W_0(vc)W_0'(vc)$.

\textbf{Type $i$:}
For $i \in \{1,2,3\}$, Let ${\cal P}_i$ denote the set of
perfect matchings of $V_1 \bigsqcup \{u_i\}$, and let ${\cal P}_i'$ denote the
set of perfect matchings of  $V_2 \bigsqcup (S \setminus \{u_i\}) $. Let $W_i(vc)$ denote the sum of weights of all perfect matchings from ${\cal P}_i$  that induce  the vertex colouring $vc$ on ${V_1 \bigsqcup \{u_i\}}$ and $W_i'(vc)$ denote the sum of weights of all perfect matchings from ${\cal P}_i'$ that induce the colouring  $vc$ on ${V_2 \bigsqcup (S \setminus \{u_i\})}$. Type $i$ matchings of $G$ are the perfect matchings that belong to 
${\cal P}_i \times {\cal P}_i'$.   Clearly, the sum of weights of Type $i$ perfect matchings that 
induce the colouring $vc$  equals $W_i(vc) W_i'(vc).$ From the above discussion,  it is easy to see that
\begin{equation}
\label{4termsum}
w(vc)=\sum_{i \in \{0,1,2,3\}} W_i(vc)W_i'(vc)    
\end{equation}
Recall that $c_{V_2}$ denotes the monochromatic vertex colouring with the colour $c$ on $V_2$. Let us  partition $[\mu(G)]$  into $\mathcal{C}_1\sqcup \mathcal{C}_2$ such that $c \in \mathcal{C}_1$, if and only if there exists some colouring $c'$ on $V_1 \sqcup S$ such that $W'_0(c_{V_2})W_0(c') \neq 0$. The remaining colors from $[\mu(G)]$ belong to $\mathcal{C}_2$ (Note that {this happens if $W'_0(c_{V_2}) = 0$} or if for all colourings $c'$ on $V_1 \sqcup S$, $W_0(c') = 0$.).  We will now prove \cref{specific_theorem} in two cases. When $\mathcal{C}_1 = \emptyset$ (Theorem \ref{specific_theorem_easy_case}) and $\mathcal{C}_1 \neq \emptyset$ (Theorem \ref{specific_theorem_hard_case})

\begin{subtheorem}
\label{specific_theorem_easy_case}
Let $G$ be a multi-graph with a vertex cut of size $3$. If $\mathcal{C}_1 = \emptyset$ (as defined earlier), then $\mu(G) \leq \mu(K_4)$
\end{subtheorem}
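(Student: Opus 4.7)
The plan is to construct a four-vertex multigraph $G'$ (whose skeleton is $K_4$) with $\mu(G')\ge\mu(G)$ and then invoke \cref{kevin} to conclude $\mu(G)\le\mu(K_4)$.

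The first step is to unpack $\mathcal{C}_1=\emptyset$. By definition, for every $c\in[\mu(G)]$ the product $W_0'(c_{V_2})\,W_0(c')$ vanishes for every $c'$; since the quantifier on $c'$ is independent of $c$, at least one of the following holds:
\textbf{Case A}, in which $W_0(c')=0$ for every colouring $c'$ of $V_1\sqcup S$; or
\textbf{Case B}, in which some $c'$ has $W_0(c')\ne 0$, and hence $W_0'(c_{V_2})=0$ for every $c\in[\mu(G)]$.
The common consequence is that whenever $vc|_{V_2}$ is a monochromatic colouring of a colour from $[\mu(G)]$, the Type~$0$ term in~\eqref{4termsum} vanishes.

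Next, I set $V(G')=\{v^\ast,u_1,u_2,u_3\}$ and specify its edges. For each $i\in[3]$ and each $(\alpha,\beta)\in[\mu(G)]^2$ I insert an edge $v^\ast u_i$ with half-edge colours $(\alpha,\beta)$ and weight $W_i(\alpha_{V_1},\beta_{u_i})$, using $v^\ast$ as a surrogate for a monochromatic $V_1$. For each pair $\{u_j,u_k\}$ (with $\{i\}=\{1,2,3\}\setminus\{j,k\}$) and each $(\alpha,\beta)\in[\mu(G)]^2$ I insert an edge $u_ju_k$ with half-edge colours $(\alpha,\beta)$ and weight $\sum_{\gamma\in[\mu(G)]}W_i'(\gamma_{V_2},\alpha_{u_j},\beta_{u_k})$, so that the inner $S$-edges aggregate the contributions of monochromatic $V_2$-colourings from $[\mu(G)]$.

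The crucial computation, obtained by directly expanding the definition of $w(G',vc')$ over the three perfect matchings of the $K_4$-skeleton and comparing with~\eqref{4termsum}, will be the identity
\[
w(G',vc')=\sum_{\gamma\in[\mu(G)]}\bigl(w(G,vc'_\gamma)-W_0(vc'_\gamma|_{V_1\sqcup S})\,W_0'(\gamma_{V_2})\bigr),
\]
where $vc'_\gamma$ extends $vc'$ to $V(G)$ by setting $V_1$ monochromatic of colour $vc'(v^\ast)$ and $V_2$ monochromatic of colour $\gamma$. By the dichotomy above the correction term vanishes for every $\gamma\in[\mu(G)]$. When $vc'$ is monochromatic of colour $\alpha$, the only $\gamma$ making $vc'_\gamma$ monochromatic is $\gamma=\alpha$, which contributes $w(G,\alpha_V)=1$; hence $w(G',vc')=1$. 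When $vc'$ is non-monochromatic, every $vc'_\gamma$ is non-monochromatic in $G$ (the disagreement already lives in $V_1\sqcup S$), so $w(G,vc'_\gamma)=0$ and thus $w(G',vc')=0$. Consequently $G'$ is a GHZ graph with dimension at least $\mu(G)$, and \cref{kevin} yields $\mu(G)\le\mu(G')\le\mu(K_4)$.

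The main obstacle is the verification identity above: one has to carefully match the three perfect matchings of $G'$ with Types~$1,2,3$ of $G$ via~\eqref{4termsum}, see how the sum over $\gamma$ reconstructs the inner $S$-edges, and show that the leftover $W_0\,W_0'$ mass is precisely the piece killed by whichever branch of $\mathcal{C}_1=\emptyset$ we are in.
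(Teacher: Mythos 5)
Your proposal is correct and follows essentially the same route as the paper: the same four-vertex multigraph (your $v^\ast$ is the paper's $v_0$), the same edge weights ($W_i(\alpha_{V_1}\beta_{u_i})$ on the $v^\ast u_i$ edges and the sum over all $\gamma\in[\mu(G)]=\mathcal{C}_2$ of $W_i'(\gamma_{V_2}\alpha_{u_j}\beta_{u_k})$ on the inner edges), and the same verification via the three perfect matchings of $K_4$ matched against Types $1,2,3$ of \eqref{4termsum}. Your explicit tracking of the $W_0\,W_0'$ correction term, killed by the $\mathcal{C}_1=\emptyset$ dichotomy, is only a slightly more careful rendering of a step the paper leaves implicit.
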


\begin{subtheorem}
\label{specific_theorem_hard_case}
Let $G$ be a multi-graph with a vertex cut of size $3$. If $\mathcal{C}_1 \neq \emptyset$ (as defined earlier), then there exists a graph $G'$ such that $|V(G')|\leq |V_1|+3 \leq |V(G)|-2$ and $\mu(G')\geq \mu(G)$. Recall that $V_1$ and $V_2$ is a partition of $V(G)\setminus S$ such that $V_1$ and $V_2$ are non-empty and there are no edges between them. Moreover $|V_1|$ is odd and $|V_2|$ is even. 
\end{subtheorem}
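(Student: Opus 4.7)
The plan is to construct $G'$ on $V_1 \sqcup S$ by retaining every edge of $G[V_1 \sqcup S]$ with at least one endpoint in $V_1$ and installing a new set of coloured edges inside $S$ whose weights absorb the contribution of $V_2$ to the colourings of $G$. Appealing to the scaling lemma (Lemma~\ref{scaling_generalization}) I may assume $G$ is GHZ, so $w(G, \mathbf{i}_V) = 1$ for $i \in [\mu(G)]$ and $w(G, vc) = 0$ otherwise. Write $\alpha_c = W_0'(c_{V_2})$, set $I_+ = \{c \in [\mu(G)] : \alpha_c \neq 0\}$ and $I_0 = [\mu(G)] \setminus I_+$; the hypothesis $\mathcal{C}_1 \neq \emptyset$ guarantees $I_+ \neq \emptyset$.

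Next I pick complex scalars $\{\mu_{i_0}\}_{i_0 \in I_+}$ with each $\mu_{i_0} \neq 0$ and $\sum_{i_0 \in I_+} \mu_{i_0} = 1$, together with scalars $\{\rho_i\}_{i \in I_0}$, each non-zero. For every pair $\{u_j, u_l\} \subset S$ (letting $k$ be the remaining index of $S$) and every ordered pair of colours $(a,b)$, I install an edge between $u_j$ and $u_l$ with $u_j$-half coloured $a$, $u_l$-half coloured $b$, and weight
\[
f(a,b,k) \;=\; \sum_{i_0 \in I_+} \frac{\mu_{i_0}}{\alpha_{i_0}}\, W_k'\bigl(a_{u_j}\, b_{u_l} \cup (i_0)_{V_2}\bigr) \;+\; \sum_{i \in I_0} \rho_i\, W_k'\bigl(a_{u_j}\, b_{u_l} \cup i_{V_2}\bigr);
\]
edges of zero weight may be omitted.

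To verify that $G'$ is g-GHZ of dimension $\mu(G)$, I decompose $w'(G', vc) = W_0(vc) + \sum_{k=1}^{3} W_k(vc|_{V_1 \cup \{u_k\}})\, f(vc(u_j), vc(u_l), k)$ according to which vertices of $S$ are matched inside $V_1$, and invoke the identity $\sum_{k=1}^{3} W_k(vc|_{V_1 \cup \{u_k\}})\, W_k'(vc|_{S \setminus \{u_k\}} \cup c_{V_2}) = -W_0(vc)\, \alpha_c$, a direct consequence of $w(G, vc \cup c_{V_2}) = 0$ whenever $vc$ is non-monochromatic (so $vc \cup c_{V_2}$ is non-monochromatic on $V$). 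For non-monochromatic $vc$ the sums collapse to $W_0(vc)\bigl(1 - \sum_{i_0 \in I_+} \mu_{i_0}\bigr) = 0$; the $I_0$-terms drop out because $\alpha_i = 0$ for $i \in I_0$. For monochromatic $vc = \mathbf{i}_{V_1 \sqcup S}$, the value $w(G, \mathbf{i}_{V_1 \sqcup S} \cup c_{V_2})$ is $1$ exactly when $c = i \in [\mu(G)]$, and the same manipulation isolates $w'(G', \mathbf{i}) = \mu_i/\alpha_i$ if $i \in I_+$, $w'(G', \mathbf{i}) = \rho_i$ if $i \in I_0$, and $w'(G', \mathbf{i}) = 0$ if $i \notin [\mu(G)]$. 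Thus $G'$ has exactly $\mu(G)$ feasible monochromatic colourings; reapplying the scaling lemma converts $G'$ into a GHZ graph of the same dimension, yielding $\mu(G') \geq \mu(G)$.

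The main obstacle is accommodating colours $i \in I_0$ for which $G[V_2]$ admits no monochromatic $i$-coloured perfect matching: a construction using only a single witness $i_0 \in I_+$ would preserve just $i_0$ among the monochromatic colourings. The resolving observation is that whenever $\alpha_i = 0$, the identity above forces $\sum_k W_k(\ldots)\, W_k'(\ldots \cup i_{V_2}) = 0$ for \emph{every} non-monochromatic $vc$, so the $\rho_i$-terms inside $f$ are silent on the non-monochromatic equations yet still fire on the monochromatic-$i$ equation. This decouples the $I_+$- and $I_0$-contributions inside $f$ and allows both to be tuned independently.
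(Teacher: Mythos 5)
Your proposal is correct and follows essentially the same route as the paper: you build $G'$ on $V_1\sqcup S$, keep the edges meeting $V_1$, and encode the contribution of $V_2$ into edges inside $S$ whose weights are linear combinations of the quantities $W_k'(\cdot\cup c_{V_2})$, normalised by $\alpha_c$ on the colours where $\alpha_c\neq 0$; the paper's construction is exactly your $f$ with the specific choice $\mu_{i_0}=1/|\mathcal{C}_1|$ and $\rho_i=1$. The verification via the decomposition over which vertex of $S$ is matched into $V_1$ and the identity coming from $w(G,vc\cup c_{V_2})=0$ matches the paper's argument, so the extra free parameters are a mild (and harmless) generalisation rather than a different method.
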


\subsection{Construction for Theorem \ref{specific_theorem_easy_case}}
\label{construction_easy_case}
Let $V(G')=\{v_0,v_1,v_2,v_3\}$. The reader may mentally map the vertices $v_1,v_2,v_3$ to the vertices $u_1,u_2,u_3$ of the vertex cut $S$ and $v_0$ to the set of vertices $V_1$. Note that $G'$ is a multi-graph (without self loops) and each pair of vertices from $V(G')$ may have many edges between them. In fact, we would define $\mu(G)^2$ number of edges between each pair of vertices in $\{v_0,v_1,v_2,v_2\}$, one edge for each ordered pair in $[\mu(G)]\times[\mu(G)]$. 

Let $i,j \in \{1,2,3\}$ with $i \neq j$. Let $(p,q) \in [\mu(G)]\times[\mu(G)]$. We will define an edge $e$ for each such $(p,q)$ between $v_i$ and $v_j$. This edge $e$ would be coloured such that the $v_i$-half edge of $e$ (i.e., $e_{v_i}$) has colour $p$ and  the $v_j$-half edge of $e$ (i.e., $e_{v_j}$) has colour $q$. The weight of the edge would be 
$w(e)= \sum_{c \in \mathcal{C}_2} w(c_{V_2}p_{u_i}q_{u_j})$
The reader may recall that $c_{V_2}p_{u_i}q_{u_j}$ represents the colouring in which $V_2,u_i,u_j$ are coloured with $c,p,q$ respectively. Its weight is the weight of the induced subgraph of $G$ on $V_2 \cup \{u_i,u_j\}$ filtered out by the colouring $c_{V_2}p_{u_i}q_{u_j}$. In this case, $\mathcal{C}_1 = \emptyset$; so $\mathcal{C}_2 = [\mu(G)]$; and therefore there are $\mu(G)$ terms in the summation. 

We will now consider the edges between $v_0$ and $v_i$, when $i \in \{1,2,3\}$. Let $(p,q) \in [\mu(G)]\times[\mu(G)]$. We will define an edge $e$ for each such $(p,q)$ between $v_0$ and $v_i$. This edge $e$ would be coloured such that the $v_0$-half edge of $e$ (i.e., $e_{v_0}$) has colour $p$ and  the $v_i$-half edge of $e$ (i.e., $e_{v_i}$) has colour $q$. The weight of such an edge $e$ is defined as 
$w(e)= w(p_{V_1}q_{u_i})$

\subsection{Proof of construction for Theorem \ref{specific_theorem_easy_case}}
\label{proof_for_construction_easy_case}
\label{k0proof}

Consider any vertex colouring $vc:V'\to \mathbf{N}$ on $G'$. We will prove that $w(V(G'),vc)=0$, if $vc$ is non-monochromatic and $w(V(G'),vc)=1$,  if $vc$ is monochromatic.

Let the vertex colouring $vc$ be $i_{v_0}j_{v_1}k_{v_2}l_{v_3}$. To find the weight of $vc$, we consider the subgraph of $G'$ filtered out by $vc$. For instance, between the vertices $v_2$ and $v_3$, the colouring $vc$ would filter exactly one edge, and such an edge would have the $v_2$-half edge of colour $k$ and $v_3$-half edge of colour $l$. Clearly, this gives a graph which is isomorphic to $K_4$ (with some edges possibly getting a weight of zero). Observe that there are only three perfect matchings in $K_4$. So, it is now easy to find the weight of $vc$ by enumeration.

As an example, consider the vertex colouring in which $v_0,v_1$ are coloured red and $v_2,v_3$ are coloured green and blue, respectively. Its filtering is shown in \cref{fig:filtering_k4}. Its weight would be $w_1w_1'+w_2w_2'+w_3w_3'$.

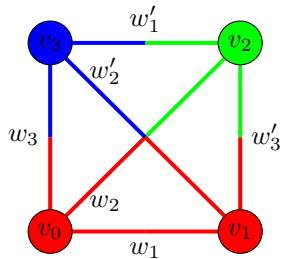
\begin{figure}[htbp]
\centering
\begin{tikzpicture}[scale=2.5]

\coordinate (A) at (0,0);
\coordinate (B) at (1,0);
\coordinate (C) at (1,1);
\coordinate (D) at (0,1);

\definecolor{OuterEdgecolour}{RGB}{255,0,0} 
\definecolor{HalfcolorA}{RGB}{0,128,0} 
\definecolor{HalfcolorB}{RGB}{255,0,0} 


\draw[line width=1.5pt, red] (A) -- ($(A)!0.5!(B)$) node[right, below] {\textcolor{black}{$w_1$}};
\draw[line width=1.5pt, red] ($(A)!0.5!(B)$) -- (B);

\draw[HalfcolorA, line width=1.5pt, red] (B) -- ($(B)!0.5!(C)$) node[right, right] {\textcolor{black}{$w'_3$}};
\draw[HalfcolorB, line width=1.5pt, green] ($(B)!0.5!(C)$) -- (C);

\draw[HalfcolorA, line width=1.5pt, green] (C) -- ($(C)!0.5!(D)$) node[right, above] {\textcolor{black}{$w'_1$}};
\draw[HalfcolorB, line width=1.5pt, blue] ($(C)!0.5!(D)$) -- (D) ;

\draw[HalfcolorA, line width=1.5pt, blue] (D) -- ($(D)!0.5!(A)$) node[below, left] {\textcolor{black}{$w_3$}};
\draw[HalfcolorB, line width=1.5pt, red] ($(D)!0.5!(A)$) -- (A);

\draw[HalfcolorA, line width=1.5pt, red] (A) -- ($(A)!0.5!(C)$) node[pos=0.3, right] {\textcolor{black}{$w_2$}};
\draw[HalfcolorB, line width=1.5pt, green] ($(A)!0.5!(C)$) -- (C);

\draw[HalfcolorA, line width=1.5pt, red] (B) -- ($(B)!0.5!(D)$);
\draw[HalfcolorB, line width=1.5pt, blue] ($(B)!0.5!(D)$) -- (D) node[pos=0.7, right] {\textcolor{black}{$w'_2$}};

 \fill (A) circle (2pt) node [draw, circle, fill=red, inner sep=2pt] {$v_0$};
 \fill (B) circle (2pt) node [draw, circle, fill=red, inner sep=2pt] {$v_1$};
 \fill (C) circle (2pt) node [draw, circle, fill=green, inner sep=2pt] {$v_2$};
  \fill (D) circle (2pt) node [draw, circle, fill=blue, inner sep=2pt] {$v_3$};

\end{tikzpicture}

\caption{$K_4$ obtained by a filtering operation} 
\label{fig:filtering_k4}
\end{figure}

We will first compute the weight of perfect matching $\{\{v_0,v_1\}, \{v_2,v_3\}\}$ in $vc$. By substituting the edge weights from the construction
, we get the weight of the perfect matching to be
$$ = w(i_{V_1}j_{u_1}) \sum_{c \in [\mu(G)]} w(c_{V_2}k_{u_2}l_{u_3})= \sum_{c \in [\mu(G)]} w(i_{V_1}j_{u_1}) w(c_{V_2}k_{u_2}l_{u_3})$$
As $w(i_{V_1}j_{u_1}) w(c_{V_2}k_{u_2}l_{u_3})$ is exactly the sum of weights of Type $1$ perfect matchings for the vertex colouring $i_{V_1}j_{u_1}k_{u_2}l_{u_3}c_{V_2}$ 
\begin{equation}
\label{type1sum}
= \sum_{c \in [\mu(G)]} W_1(i_{V_1}j_{u_1}k_{u_2}l_{u_3}c_{V_2})W'_1(i_{V_1}j_{u_1}k_{u_2}l_{u_3}c_{V_2})
\end{equation}
Similarly, the weight of the weight of perfect matching $\{\{v_0,v_2\}, \{v_1,v_3\}\}$ in $vc$ is \begin{equation}
\label{type2sum}
= \sum_{c \in [\mu(G)]} W_2(i_{V_1}j_{u_1}k_{u_2}l_{u_3}c_{V_2})W'_2(i_{V_1}j_{u_1}k_{u_2}l_{u_3}c_{V_2})
\end{equation}
and the weight of the weight of perfect matching $\{\{v_0,v_3\}, \{v_1,v_2\}\}$ in $vc$ is 
\begin{equation}
\label{type3sum}
= \sum_{c \in [\mu(G)]} W_3(i_{V_1}j_{u_1}k_{u_2}l_{u_3}c_{V_2})W'_3(i_{V_1}j_{u_1}k_{u_2}l_{u_3}c_{V_2})
\end{equation}
As the weight of the vertex colouring $vc'$ is the sum of these three perfect matchings, by adding the equations \cref{type1sum,type2sum,type3sum}, we get 
$$w(vc')= \sum_{r \in \{1,2,3\}} \sum_{c \in [\mu(G)]} W_r(i_{V_1}j_{u_1}k_{u_2}l_{u_3}c_{V_2})W'_r(i_{V_1}j_{u_1}k_{u_2}l_{u_3}c_{V_2})$$
By rearranging the summation and using \cref{4termsum}, 
$$= \sum_{c \in [\mu(G)]} \sum_{r \in \{1,2,3\}} W_r(i_{V_1}j_{u_1}k_{u_2}l_{u_3}c_{V_2})W'_r(i_{V_1}j_{u_1}k_{u_2}l_{u_3}c_{V_2}) = \sum_{c \in [\mu(G)]} w(i_{V_1}j_{u_1}k_{u_2}l_{u_3}c_{V_2})$$

First suppose that $vc=i_{v_0}j_{v_1}k_{v_2}l_{v_3}$ is non-monochromatic. Clearly, $i_{V_1}j_{u_1}k_{u_2}l_{u_3}c_{V_2}$ is also non-monochromatic for all $c \in [\mu(G)]$. Since the weight of all such colourings is zero, their sum is also zero. It now follows that $w(vc)=0$

On the other hand, suppose $vc=i_{v_0}j_{v_1}k_{v_2}l_{v_3}$ is monochromatic, i.e., $i=j=k=l$. Clearly, if $c=i$, $i_{V_1}j_{u_1}k_{u_2}l_{u_3}c_{V_2}$ is also monochromatic and has weight $1$. If $c\neq i$, $i_{V_1}j_{u_1}k_{u_2}l_{u_3}c_{V_2}$ is non-monochromatic and has weight $0$. Since we take the sum over all $c$, it now follows that $w(vc)=1$.

\subsection{Applications of construction for Theorem \ref{specific_theorem_easy_case}}
\label{red_appli}
\begin{corollary}
\label{coll1}
If the minimum degree of a graph $G$ is $3$, then $\mu(G)\leq 3$ 
\end{corollary}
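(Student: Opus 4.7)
The plan is to deduce this corollary almost immediately from the two sub-theorems of \cref{specific_theorem} together with \cref{kevin}. The key observation is that a vertex $v$ of minimum degree $3$ provides a naturally occurring vertex cut of size $3$, namely $N(v)$, whose ``small'' side $V_1 = \{v\}$ is a singleton. This is precisely the setting where the reduction of \cref{specific_theorem} collapses directly to a $4$-vertex graph, at which point the $n=4$ result of \cref{kevin} finishes the job.

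First I would dispose of the trivial regime: we may assume $\mu(G)\geq 3$, so $G$ has a perfect matching and hence $|V(G)|$ is even. The base case $|V(G)|=4$ is handled by \cref{kevin}, so assume $|V(G)|>4$. Now pick a vertex $v$ with $\deg(v)=3$, which exists since the minimum degree is $3$. Set $S=N(v)$, $V_1=\{v\}$ and $V_2=V(G)\setminus(S\cup\{v\})$. Then $|S|=3$; because $|V(G)|>4$ the set $V_2$ is non-empty and has no edges to $V_1$; and $|V_1|=1$ is odd while $|V_2|=|V(G)|-4$ is even. So the hypotheses of \cref{specific_theorem} are all satisfied.

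Next I would split on the two sub-cases of \cref{specific_theorem}. If $\mathcal{C}_1=\emptyset$, then \cref{specific_theorem_easy_case} yields $\mu(G)\leq \mu(K_4)$, and \cref{kevin} gives $\mu(K_4)\leq 3$, so $\mu(G)\leq 3$. Otherwise $\mathcal{C}_1\neq\emptyset$, and \cref{specific_theorem_hard_case} produces a graph $G'$ built on $V_1\sqcup S$, hence on exactly $4$ vertices, with $\mu(G')\geq \mu(G)$. A second application of \cref{kevin} to $G'$ then gives $\mu(G)\leq \mu(G')\leq 3$, completing the argument.

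The main obstacle here is essentially absent — all the heavy lifting has already been carried out by the reduction machinery of Section~\ref{reduction} and by Mantey's result \cref{kevin}. The corollary is a clean bookkeeping consequence: the only actual insight needed is that a degree-$3$ vertex forces $V_1$ to be a singleton, which both makes the parity of $|V_2|$ match the requirement of \cref{specific_theorem} (given the already-established fact that $|V(G)|$ must be even) and shrinks the reduced graph $G'$ all the way down to four vertices.
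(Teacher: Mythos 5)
Your proof is correct and follows essentially the same route as the paper: take the neighbourhood of a degree-$3$ vertex as the $3$-cut with $V_1=\{v\}$, invoke the reduction of \cref{specific_theorem}, and finish with \cref{kevin}. The only difference is that the paper observes that $\mathcal{C}_1=\emptyset$ holds automatically here (a Type $0$ matching would require all three cut vertices to be matched into the singleton $V_1$, which is impossible, so $W_0$ vanishes identically) and therefore only needs Theorem~\ref{specific_theorem_easy_case}, whereas you also cover the --- in fact vacuous --- case $\mathcal{C}_1\neq\emptyset$ via Theorem~\ref{specific_theorem_hard_case} and a second application of \cref{kevin} to the resulting $4$-vertex graph $G'$; both dispositions are valid.
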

\begin{proof}
Let $u$ be a three-degree vertex, and $x,y,z$ be its neighbours. Let $V_1=\{u\}$, $S=\{x,y,z\}$ and $V_2=V\setminus \{u,x,y,z\}$. Note that there are no Type $0$ matchings as $u$ can match with at most one vertex in $S$ in any perfect matching. Therefore, $W_{0}(c)$ is zero for all $c \in [\mu(G)]$. It is now easy to see that $\mathcal{C}_1=\emptyset$. Therefore, from Theorem \ref{specific_theorem_easy_case}, $\mu(G) \leq \mu(K_4)$. From \cref{kevin}, it is known that the matching index of graphs with $4$ vertices is at most $3$. Therefore,  $\mu(G)\leq \mu(K_4) = 3$.
\end{proof}
\begin{corollary}
\cref{krenn_conjecture} is true for all graphs whose maximum degree is at most $3$.
\end{corollary}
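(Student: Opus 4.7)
The plan is a case analysis on connectivity. If $\Delta(G) \leq 2$, then $G$ is a disjoint union of paths and cycles, so $\kappa(G) \leq 2$ and \cref{type_2_resolution} immediately gives $\mu(G) \leq 2$. If $\Delta(G) = 3$ but $\kappa(G) \leq 2$, the same theorem applies. Thus the only remaining case is $\Delta(G) = 3$ with $\kappa(G) \geq 3$; then $\kappa(G) \leq \delta(G) \leq \Delta(G) = 3$ forces $G$ to be cubic with $\kappa(G) = 3$. For such $G$ with $|V(G)| > 4$, \cref{coll1} already delivers $\mu(G) \leq 3$ via the reduction of \cref{specific_theorem_easy_case} applied with $V_1 = \{u\}$ for any vertex $u$, so what remains is to rule out $\mu(G) = 3$.

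Suppose for contradiction that $\mu(G) = 3$, witnessed by some $c$ and $w$. The reduced graph $G'$ on four vertices then satisfies $\mu(G') \geq 3$, so by \cref{kevin} the coloured weighted graph $G'$ is isomorphic to the $3$-dimensional $K_4$ of \cref{fig:3dimk4}: between every pair of vertices of $G'$ there is exactly one non-zero edge, and each such edge is monochromatic. Unwinding the construction of \cref{construction_easy_case}, the edges $(v_0, v_i)$ of $G'$ correspond directly to coloured edges between $u$ and $u_i$ in $G$; thus in $G$ the only non-zero coloured edge between $u$ and $u_i$ is monochromatic in some colour $c_i$, with $\{c_1, c_2, c_3\} = \{1, 2, 3\}$. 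Since $u$ was an arbitrary vertex of $G$, every edge of $G$ must be monochromatic, and the three edges incident to any vertex must carry three distinct colours. Hence the three monochromatic matchings $M_1, M_2, M_3$ form a proper $3$-edge colouring of $G$ partitioning $E(G)$.

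The main obstacle is producing the final contradiction from this rigid structure. Because $w(M_k) = 1$ for each $k$, every edge of $G$ carries a non-zero weight, so any feasible non-monochromatic vertex colouring $vc$ would have a non-zero matching weight (its unique inducing matching picks, at each vertex $v$, the $M_{vc(v)}$-partner of $v$, and its weight is a product of non-zero edge weights), contradicting the GHZ condition. It therefore suffices to exhibit a feasible non-monochromatic $vc$ whenever $|V(G)| > 4$ and $G$ is $3$-connected and cubic. The $2$-regular subgraph $M_i \cup M_j$ is a disjoint union of even cycles; an alternation argument along any such cycle shows that a colouring valued in $\{i, j\}$ is feasible iff it is constant on each cycle, so whenever some $M_i \cup M_j$ has at least two cycles, colouring different cycles differently yields the desired $vc$. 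In the remaining case where every $M_i \cup M_j$ is a single Hamilton cycle, I would select an $M_3$-edge $\{x, y\}$ whose endpoints have opposite parities in the $M_1 \cup M_2$ cycle, set $vc(x) = vc(y) = 3$, and propagate colours $1$ and $2$ consistently along the two resulting paths; the subcase where no such $M_3$-edge exists would be handled by showing that the forced bipartite-like structure yields a $2$-cut in $G$, contradicting $\kappa(G) = 3$.
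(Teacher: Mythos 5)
Your derivation of the rigid structure is correct and follows the paper's route: applying the reduction of Theorem \ref{specific_theorem_easy_case} at an arbitrary vertex $u$ (with $V_1=\{u\}$, for which $\mathcal{C}_1=\emptyset$) produces a four-vertex graph of dimension $3$, and \cref{kevin} then forces exactly one non-zero, monochromatic edge between $u$ and each of its neighbours, with the three colours distinct at $u$. Since $u$ is arbitrary, the non-zero edges form a properly $3$-edge-coloured cubic graph whose colour classes $M_1,M_2,M_3$ are the three monochromatic perfect matchings, and every feasible vertex colouring has a unique inducing perfect matching, whose weight is a product of non-zero edge weights. All of this is sound, and is essentially what the paper establishes (phrased there as ``no multi-edges'' plus ``exactly one non-zero edge of each colour at every vertex'').

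The gap is in your final step, the existence of a feasible non-monochromatic colouring. Your first two subcases (some $M_i\cup M_j$ has at least two cycles; some chord of a Hamiltonian $M_i\cup M_j$ joins vertices of opposite parity) do work, but the last subcase is only a promissory note: you assert that if no $M_3$-edge joins opposite-parity vertices of the Hamiltonian cycle $M_1\cup M_2$, then $G$ has a $2$-cut, and you give no argument. That implication is false as stated: the Wagner graph $V_8$ (the $8$-cycle $v_0\cdots v_7$ together with the four diagonals $v_iv_{i+4}$) is cubic and $3$-connected, yet with the natural colouring every $M_3$-edge preserves parity on the Hamiltonian cycle $M_1\cup M_2$. (That particular graph is caught by your first subcase, since $M_1\cup M_3$ splits into two $4$-cycles; but this shows the parity condition alone cannot yield a $2$-cut, so a correct treatment of your last subcase would have to exploit the Hamiltonicity of the other two $2$-factors, and no such argument is supplied.) What you are trying to prove at this point --- that a cubic graph with three pairwise disjoint monochromatic perfect matchings and more than $4$ vertices admits a non-monochromatic perfect matching --- is precisely \cref{bogdanov}, which the paper simply invokes here. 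Replacing your case analysis by that citation (a non-monochromatic perfect matching $M$ exists; it is the unique matching inducing its vertex colouring $vc$; hence $0=w(vc)=w(M)\neq 0$, a contradiction) closes the gap and recovers the paper's proof.
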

\begin{proof}
From \cref{coll1}, we know that $\mu(G)\leq 3$. Towards a contradiction, let $\mu(G)=3$ for graph with $|V(G)|>$4 and maximum degree $3$. Therefore, there exists a colouring $c$ and weight assignment $w$ such that $\mu(G,c,w)=3$. Let the three colours be $1,2,3$.

We first claim that between any pair of vertices, there is at most one non-zero edge incident on it. Suppose not. Then, there exist vertices $u$ and $x_1$ with multiple non-zero edge between them. Since the maximum degree of the skeleton $G$ is at most $3$, there exists a vertex set (for instance, all neighbours of $u$ if the degree is $3$) $\{x_1,x_2,x_3\}$ which separates $u$ from the $V-\{u,x_1,x_2,x_3\}$. Let $V_1=\{u\}$, $S=\{x_1,x_2,x_3\}$ and $V_2=V\setminus \{u,x_1,x_2,x_3\}$. Recall that $\mathcal{C}_1 = \emptyset$. By the construction from \cref{construction_easy_case}, the weights of edges between $u$ and the vertices $\{x_1,x_2,x_3\}$ remain unchanged. Therefore, we obtain a graph with $4$ vertices of dimension $3$ such that a pair of vertices have multiple non-zero edges between them. But this is not possible from \cref{kevin}. Therefore, there are no multi-edges in $G_c^w$.

Since the matching index is $3$, there are perfect matchings (of non-zero edges) of colours $1,2,3$. Therefore, from \cref{bogdanov}, there must be at least one non-monochromatic perfect matching $M$ (of non-zero edges), say inducing the non-monochromatic 
vertex colouring $vc$. But there is exactly one non-zero edge of colours $1,2,3$ incident on $u$. Therefore, for any vertex colouring $vc$, there can be at most one perfect matching $M'$ inducing $vc$. It now follows that $w'(M)=w'(vc)=0$. Therefore, there must be an edge, say of colour $1$, whose weight is zero; hence, the monochromatic vertex colouring of $1$ must be zero. Contradiction. 

Therefore, $\mu(G)\leq 2$ when $|V(G)|>4$.
\end{proof}

\subsection{{Construction for Theorem \ref{specific_theorem_hard_case}}} 
\label{construction_1}
Recall that $S=\{u_1,u_2,u_3\}$ is  a vertex cut separating $V_1$ from  $V_2$ in the graph $G$, where $|V_1|$  is odd, $|V_2|$ is even. 
Assume that ${\cal C}_1  \ne \emptyset$. For this case,  we will construct an edge-weighted, edge coloured multi-graph  $G'$ with $V(G')=V_1 \cup S$ and $\mu(G') \ge \mu(G)$. Since $|V_2| \ge 2$, $|V(G')| \leq |V(G)|-2$.

 If a pair of vertices is such that at least one of them lies in $V_1$, then the set of edges in $G'$ between this pair of vertices is the same as those in $G$ with the same weights and colours. Between the pairs of vertices with both vertices from $S$, we define one edge for each pair of colours in $[\mu(G)]\times[\mu(G)]$.  Thus there would be $(\mu(G))^2$ edges between each pair,
 $\{u_i, u_j\}, i \neq j$.

We now describe how to assign weight to a coloured edge $e$ between the vertices $u_i$ and $u_j$, where $i < j$ and $i,j \in \{1,2,3\}$ such that the  $u_i$-half  of $e$ is coloured $p$ and the $u_j$-half  of $e$ is coloured $q$. 

 \begin {eqnarray}
 \label {wt_of_edge_in_harder_case}
w'(e)=  \sum_{ c \in \mathcal{C}_2}  W(c_{V_2}p_{u_i}q_{u_j})          + \dfrac{1}{|\mathcal{C}_1|} \sum_{c \in \mathcal{C}_1} \dfrac{W (  c_{V_2}p_{u_i}q_{u_j}    )}{W(c_{V_2})}
\end {eqnarray}

\begin {remark}
  Recall that our plan is to remove $V_2$ and the
edges incident on $V_2$ completely in order to  get the reduced  graph $G'$. This
should be done without losing the information about 
the weights of monochromatic vertex colourings of $V_2$ to make sure that $\mu(G')$ does not become
smaller than $\mu(G)$. 
The weight assignment is  similar in spirit to the weight assignment done for the construction of Theorem \ref {specific_theorem_easy_case}. 
The expression here is more complicated in this case,  because of the adjustments required to make it work: This will be clear when the reader goes
through the proof carefully.
\end {remark}

\subsection{Proof of the construction for Theorem \ref{specific_theorem_hard_case}}
\label{construction_1_proof}

Consider any vertex colouring $vc':V_1 \cup S \to \mathbf{N}$ of $G'$. Let us denote $vc'$ more
explicitly, using the notation describe in \cref {construction_easy_case},  as $  (\alpha)_{V_1} i_{u_1} j_{u_2} k_{u_3}$.  Note that $\alpha$ here is 
the vertex colouring induced on $V_1$ by the vertex colouring $vc'$, $i,j,k$ are the colours of $u_1,u_2,u_3$ respectively
under the vertex colouring $vc'$.   (Note that we use the notation  $(\alpha)_{V_1}$ to emphasize that this is not necessarily a
monochromatic colouring of $V_1$, using a single colour named  $\alpha$; rather it can be any vertex colouring, monochromatic  or
non-monochromatic.) 
We will use the notation of $w'$ to denote the weights of vertex colourings of  $G'$ and
its subgraphs and $w$ to denote the weights of vertex colourings of  $G$ and its
subgraphs.  For example,  $w(i_{V_1}j_{S})$ is the weight of the vertex colouring $i_{V_1}j_S$ with respect to the edge-set of $G$, whereas $w'(i_{V_1}j_S)$ denotes the weight of the same vertex colouring with respect to the edge-set of $G'$. 
Our intention is to prove that $w'((\alpha)_{V_1} i_{u_1} j_{u_2} k_{u_3} )=0$, whenever $ (\alpha)_{V_1} i_{u_1} j_{u_2} k_{u_3} $ is non-monochromatic and $w'( (\alpha)_{V_1} i_{u_1} j_{u_2} k_{u_3} )\neq 0$,  whenever  $ (\alpha)_{V_1} i_{u_1} j_{u_2} k_{u_3}$ is monochromatic, i.e. $i=j=k$ and
$\alpha$ is a monochromatic vertex colouring of $V_1$ using colour $i$,  i.e. $(\alpha)_{V_1} = i_{V_1}$.  The reader may note that we are not insisting the weight to be equal to $1$ in the monochromatic case; non-zero is sufficient since
we can then use the Scaling Lemma (\cref{scaling_generalization}) to scale the weights of the monochromatic vertex colourings to $1$,  thus constructing an edge-weight function $w'$ and edge-colouring $c'$ of $G'$ such that  $\mu(G',c',w')=\mu(G)$ implying $\mu (G') \ge \mu (G)$.

Recall that $vc' = (\alpha)_{V_1} i_{u_1} j_{u_2} k_{u_3}$ filters out a simple graph from $G'$ and the weight of $vc'$ is the sum of
weights of all the perfect matchings (PMs)  in this filtered out simple graph.  Perfect matchings (PMs)  of
this graph can be grouped into 4 categories: 
(1) Type $0'$: PMs containing none of the edges with both endpoints in $S$.   
(2) Type $1'$:  PMs containing the edge
$(u_2, u_3)$  (3) Type $2'$:   PMs containing the edge $(u_1, u_3)$  (4)  Type $3'$: PMs containing
the edge $(u_1,u_2)$  
For $t=0,1,2,3$,  we denote the total weight of Type $t'$  PMs
by $W'_t$.   Clearly $w'(vc') = \sum_{0 \le t \le 3} W_t'$.

Now let us consider a  corresponding vertex colouring of $G$,
$vc = (\alpha)_{V_1} i_{u_1} j_{u_2} k_{u_3} c_{V_2} $, which is
obtained by taking the same vertex colouring $vc'$ for $V_1 \cup S$, 
and  then extending it by the  monochromatic vertex colouring $c_{V_2}$ of $V_2$, using the colour $c$.  Since we may use
any colour $c \in [\mu(G)]$ to extend the vertex colouring 
$vc'$ of $V_1 \cup S$ to a vertex colouring of $V_1 \cup S \cup V_2$, it is more appropriate to call the extended colouring  
$(\alpha)_{V_1}i_{u_1}j_{u_2}k_{u_3}c_{V_2}$, it is  better to denote  
$vc(c)$, rather than  just $vc$.

The weight of $vc$ on $G$ can 
also be decomposed into $4$ terms corresponding to $4$ different
groups of perfect matchings of  the subgraph filtered out by $vc$
from $G$.  

(1) Type $0$:  The PMs in which all the three vertices of
$S$ are matched to vertices in $V_1$. 

(2) Type  $t$ for $t=1,2,3$:  The PMs in which only $u_t$  is
matched to some vertex of $V_1$, and the remaining two vertices of  $S$ are either matched to each other or to  vertices of
$V_2$. 

The total weight of the perfect matchings (in the
the subgraph of $G$ filtered out by the vertex colouring $vc(c)$) of  Type$t, 0 \le t \le 3$ will
be denoted by $W_t(c)$ where $c \in [u(G)]$.  Clearly $w(vc(c)) =
\sum_{0 \le t \le 3} W_t(c)$. Now we  show that $W_t'$  can be expressed  as a (weighted)  sum of  $W_t(c)$ over the colours $c \in  [\mu(G)]$.

\begin {observation}
 For $t=1,2,3:  W_t' = \sum_{c \in {\cal C}_2} W_t(c) + \dfrac {1} {|{\cal C}_1|} \sum_{c \in {\cal C}_1} \dfrac {W_t(c)} {w(c_{V_2})} $  
 \end {observation}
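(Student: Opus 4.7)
The identity is essentially a bookkeeping exercise that unfolds the definition of the edge weights in \cref{wt_of_edge_in_harder_case}. Fix $t \in \{1,2,3\}$ and write $\{s_1,s_2\} = S \setminus \{u_t\}$. In the subgraph of $G'$ filtered out by $vc' = (\alpha)_{V_1} i_{u_1} j_{u_2} k_{u_3}$, a Type-$t'$ perfect matching must use exactly one edge $e$ between $s_1$ and $s_2$, whose two half-edge colours are forced to be the colours assigned by $vc'$ to $s_1$ and $s_2$; the remainder of the matching is an arbitrary perfect matching of $G'[V_1 \cup \{u_t\}]$ that induces the restriction of $vc'$ to $V_1 \cup \{u_t\}$. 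Since the edges of $G'$ incident to $V_1$ are inherited unchanged from $G$, this remainder is a perfect matching of $G[V_1 \cup \{u_t\}]$ inducing the same restricted colouring. Hence
$$W_t' \;=\; w'(e) \cdot M,$$
where $M$ denotes the total weight of perfect matchings of $G[V_1 \cup \{u_t\}]$ inducing the restriction of $vc'$ to $V_1 \cup \{u_t\}$.

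Next I would substitute the formula for $w'(e)$ from \cref{wt_of_edge_in_harder_case}. The key observation is that each summand of the form $W(c_{V_2} p_{s_1} q_{s_2})$ appearing in $w'(e)$, with $p,q$ being the half-edge colours of $e$ dictated by $vc'$, is precisely the total weight of perfect matchings of $G[V_2 \cup \{s_1,s_2\}]$ inducing the corresponding colouring. But by the very definition of Type-$t$ perfect matchings of $G$, these are exactly the matchings that, when combined with a matching of $G[V_1 \cup \{u_t\}]$, yield a Type-$t$ perfect matching of $G$ inducing $vc(c)$. Because these two sides share no vertices, the two contributions multiply.

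I would then pull the factor $M$ inside both sums appearing in $w'(e)$. Using the product decomposition of Type-$t$ matchings of $G$, the product $M \cdot W(c_{V_2} p_{s_1} q_{s_2})$ equals $W_t(c)$ (with the extra factor of $1/w(c_{V_2})$ preserved in the $\mathcal{C}_1$-sum). This immediately produces
$$W_t' \;=\; \sum_{c \in \mathcal{C}_2} W_t(c) \;+\; \frac{1}{|\mathcal{C}_1|}\sum_{c \in \mathcal{C}_1} \frac{W_t(c)}{w(c_{V_2})},$$
which is the desired identity.

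The only obstacle is purely notational: one must be careful not to confuse the symbols $W_t, W_t'$ used earlier (as halves of the factorisation of Type-$t$ matchings of $G$) with the symbols $W_t(c)$ and $W_t'$ used in this subsection (total Type-$t$ weight in $G$ for the extended colouring, and total Type-$t'$ weight in $G'$, respectively). Once the factorisation of Type-$t$ matchings of $G$ across the vertex cut $S$ is written out explicitly, the algebraic identity is immediate from \cref{wt_of_edge_in_harder_case}.
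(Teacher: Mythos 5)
Your proposal is correct and follows essentially the same route as the paper: factor $W_t'$ as the product of the weight $w'(e)$ of the forced edge inside $S\setminus\{u_t\}$ and the total weight of matchings on $V_1\cup\{u_t\}$ (unchanged between $G$ and $G'$), substitute the defining formula for $w'(e)$, and recombine each product into $W_t(c)$ via the cross-cut factorisation of Type-$t$ matchings. The paper writes this out only for $t=1$ and notes the other cases are symmetric, which is exactly what your $\{s_1,s_2\}=S\setminus\{u_t\}$ formulation makes uniform.
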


\begin{proof}
Let us calculate the total weight $W_1'$ of Type 1 perfect matchings of the vertex colouring
$\alpha_{V_1} i_{u_1} j_{v_2} k_{v_3}$.  (The case when $t=2,3$ is
similar.) Clearly these PMs
are obtained by adding the edge $(u_2, u_3)$  of colour $(j,k)$ to each  perfect matching of the induced subgraph 
on $V_1 \cup \{u_1\}$ (after filtering out by the vertex colouring $\alpha_{V_1} i_{u_1}$). 
Therefore the total weight of these PMs can be written as
$W_1'  = w'(\alpha_{V_1} i_{u_1} ) w' (e)$ 
where $e$ is the edge between $u_2$ and $u_3$  of colour $(i,j)$,
that is, the edge $e$ with $u_2$-half 
of $e$ coloured  $j$ and $u_3$-half of $e$ coloured $k$.
Now substituting  for $w'(e)$, the right hand side of equation
\ref {wt_of_edge_in_harder_case},  we get
\begin {eqnarray}
W_1' =  w'(\alpha_{V_1}i_{u_1})  \left ( 
\sum_{ c \in \mathcal{C}_2}  w(c_{V_2}j_{u_2}k_{u_3})          + \dfrac{1}{|\mathcal{C}_1|} \sum_{c \in \mathcal{C}_1} \dfrac{w(  c_{V_2}j_{u_2}k_{u_3}    )}{w(c_{V_2})}  \right ) 
\end {eqnarray}
Note that $w'((\alpha)_{V_1}i_{u_1}) = w((\alpha)_{V_1}i_{u_1})$
 since in the induced subgraph on $V_1 \cup \{u_1\}$ the edge set, weights and colour are same for  both $G$ and $G'$. It follows that, 

\begin {eqnarray}
W_1' =    
\sum_{ c \in \mathcal{C}_2} 
w((\alpha)_{V_1}i_{u_1}))
w(c_{V_2}j_{u_2}k_{u_3})          + \dfrac{1}{|\mathcal{C}_1|} \sum_{c \in \mathcal{C}_1} \dfrac{   w((\alpha)_{V_1}i_{u_1}))   w (  c_{V_2}j_{u_2}k_{u_3}    )}{w(c_{V_2})}   
\end {eqnarray}

Noting that $w(\alpha_{V_1}i_{u_1})
w(c_{V_2}j_{u_2}k_{u_3}) =  w( \alpha_{V_1}i_{u_1}
   j_{u_2}k_{u_3} c_{V_2}) = W_1(c)$  we get,

$$
W_1' =  \sum_{ c \in \mathcal{C}_2} 
w(\alpha_{V_1}i_{u_1}
   c_{V_2}j_{u_2}k_{u_3})          + \dfrac{1}{|\mathcal{C}_1|} \sum_{c \in \mathcal{C}_1} \dfrac{   w(\alpha_{V_1}i_{u_1}    c_{V_2}j_{u_2}k_{u_3}  )}{w(c_{V_2})} =
   \sum_{ c \in \mathcal{C}_2} 
W_1(c)           + \dfrac{1}{|\mathcal{C}_1|} \sum_{c \in \mathcal{C}_1} \dfrac{ W_1(c) } {w(c_{V_2})}  
$$

Similar arguments allow us establish the required result for
$t=2,3$ also. 
\end{proof}

\begin{observation}
\label{sum_eq_fin}
$w(vc') =  \sum_{ c \in \mathcal{C}_2} 
w(vc(c))  +  \dfrac{1}{|\mathcal{C}_1|} \sum_{c \in \mathcal{C}_1} \dfrac{w(vc(c)) } {w(c_{V_2})}$
\end{observation}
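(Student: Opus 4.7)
The plan is to derive this identity by combining the preceding observation, which handles Type $1'$, $2'$, and $3'$ perfect matchings of $G'$, with a matching identity for Type $0'$. Since $w'(vc') = \sum_{t=0}^{3} W_t'$ and analogously $w(vc(c)) = \sum_{t=0}^{3} W_t(c)$ for every $c \in [\mu(G)]$, it suffices to establish
$$W_0' = \sum_{c \in \mathcal{C}_2} W_0(c) + \frac{1}{|\mathcal{C}_1|}\sum_{c \in \mathcal{C}_1} \frac{W_0(c)}{w(c_{V_2})}.$$
Once this is in hand, summing the identities across $t \in \{0,1,2,3\}$ and collecting terms will give the claim.

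To establish the $t=0$ identity, I would first observe that a Type $0$ perfect matching of $G$ filtered by $vc(c)$ factors uniquely as the disjoint union of (i) a matching on $G[V_1 \cup S]$ filtered by $(\alpha)_{V_1} i_{u_1} j_{u_2} k_{u_3}$ in which every vertex of $S$ is matched into $V_1$, and (ii) a perfect matching of $G[V_2]$ filtered by $c_{V_2}$. The parities $|V_1|$ odd, $|V_2|$ even make both components feasible. Because the edges touching $V_1$ are identical in $G$ and $G'$, the combined weight of the first component is precisely $W_0'$, yielding the factorisation
\begin{equation*}
W_0(c) = W_0' \cdot w(c_{V_2}) \qquad \text{for every } c \in [\mu(G)].
\end{equation*}

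Substituting this into the right-hand side of the $t=0$ identity gives $W_0' \sum_{c \in \mathcal{C}_2} w(c_{V_2}) + W_0'$. To finish, I would invoke the standing hypothesis $\mathcal{C}_1 \neq \emptyset$: this guarantees that the existential condition in the definition of $\mathcal{C}_1$ is satisfied globally, so the partition of $[\mu(G)]$ collapses to the simpler rule $c \in \mathcal{C}_1 \iff w(c_{V_2}) \neq 0$ (and conversely $c \in \mathcal{C}_2 \iff w(c_{V_2}) = 0$). Consequently every term of $\sum_{c \in \mathcal{C}_2} w(c_{V_2})$ vanishes, leaving $W_0'$ as required.

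I do not expect a serious obstacle: the observation is essentially a book-keeping step that propagates the preceding observation, which carries the combinatorial content, across the Type $0$ contribution. The only delicate point is verifying that the denominator $w(c_{V_2})$ in the $\mathcal{C}_1$-sum is well-defined, which is built into the definition of $\mathcal{C}_1$, and checking that the $\mathcal{C}_2$-sum really vanishes under the assumption $\mathcal{C}_1 \neq \emptyset$ rather than only when $W_0'$ itself is non-zero.
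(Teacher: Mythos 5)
Your argument is correct and follows essentially the same route as the paper: both proofs extend the preceding observation to the Type $0$ contribution via the factorisation $W_0(c) = W_0'\cdot w(c_{V_2})$, and both dispose of the $\mathcal{C}_2$-terms by showing they vanish (you via $w(c_{V_2})=0$ for $c\in\mathcal{C}_2$ under the standing hypothesis $\mathcal{C}_1\neq\emptyset$, the paper via the equivalent fact $W_0(c)=0$ there). Your explicit care about why the $\mathcal{C}_2$-sum vanishes independently of whether $W_0'$ is zero is a point the paper leaves implicit, but the substance is identical.
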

\begin{proof}
Now $w'(vc') = W_0' + W_1' + W_2' + W_3'$ 

  =  $W_0' +   \sum_{ c \in \mathcal{C}_2} 
\left ( W_1(c)  + W_2(c) + W_3(c) \right )         + \dfrac{1}{|\mathcal{C}_1|} \sum_{c \in \mathcal{C}_1} \dfrac{ W_1(c) + W_2(c) + W_3(c) } {w(c_{V_2})} $

Recall that for any colour $c \in [\mu(G)]$, $w(vc(c)) = 
W_0(c) + W_1(c) + W_2(c) + W_3(c)$.
Note that for colours $c \in {\cal C}_2$, $W_0(c) = W_0'.w(c_{V_2}) $; and therefore  $W_1(c) + W_2(c) +
W_3(c) = w(vc(c))$, the weight of the vertex colouring $\alpha_{V_1}
i_{u_1} j_{u_2} k_{u_3} c_{V_2}$. 
Now $W_0'$ can be trivially rewritten as $\dfrac {1} {|{\cal C}_1|} \sum_{c \in {\cal C}_1 } \dfrac {W_0'.w(c_{V_2}) } {w(c_{V_2})}$.
Since $W_0'.w(c_{V_2}) = W_0(c)$, this expression can be 
rewritten as $\dfrac {1} {|{\cal C}_1|} \sum_{c \in {\cal C}_1 } \dfrac {W_0(c) } {w(c_{V_2})}$.

So we can combine the terms of this expression, term by term
with the terms inside the sum over $c \in {\cal C}_1$ and rewrite the expression 
as follows:
\begin {eqnarray}
=\sum_{ c \in \mathcal{C}_2} 
w(vc(c))  +  \dfrac{1}{|\mathcal{C}_1|} \sum_{c \in \mathcal{C}_1} \dfrac{W_0(c) + W_1(c) + W_2(c) + W_3(c) } {w(c_{V_2})} 
\end {eqnarray}
Since for colours $c \in {\cal C}_1$, $W_0(c) + W_1(c) + W_2(c) + W_3(c) = w(vc(c))$
we get the required result.

\end{proof}

If $vc'$ is non-monochromatic, then $vc(c)$ is also non-monochromatic for any $c \in [\mu(G)]$. Therefore, $w(vc(c))=0$ for all $c$ and hence $w'(vc')=0$ from \cref{sum_eq_fin}.

If $vc'$ is monochromatic, say of colour $i$, $vc(c)$ will be  monochromatic if and only if $c=i$. Therefore, if $ i \in \mathcal{C}_2$, then $w(vc')= w(vc(i))= 1$. Similarly, if $i \in \mathcal{C}_1$, then $w'(vc')=\dfrac{1}{|\mathcal{C}_1|w(\mathbf{i}_{V_2})}$ from \cref{sum_eq_fin}.  
In both the cases this will be non-zero as required. 

The weights can now be readjusted by the Scaling Lemma (\cref{scaling_generalization}) to get a GHZ graph.

\subsection{Limitations of our reduction}
A careful reader might observe that the difficulty in extending our reduction technique to cuts of larger size comes from the case when two newly introduced edges are part of the same perfect matching. For instance, for the case when there is a $4$ vertex cut $\{u_1,u_2,u_3,u_4\}$ separating $V_1, V_2$ (both of even size) in $G$, one could try to extend our ideas and capture the weights from $V_2\cup \{u_3,u_4\}$ and $V_2\cup \{u_1,u_2\}$ on the edges $(v_3,v_4), (v_1,v_2)$ of $G'$, respectively. However, this would create some extra terms in $G'$ due to the perfect matchings in which $(v_1,v_2)$ and $(v_3,v_4)$ are contained. Such terms could destroy the GHZ property of $G'$. We believe that finding a way to bypass this difficulty and finding a more general reduction will resolve Krenn and Gu's conjecture for all graphs.

\bibliography{lipics-v2021-sample-article.bib}

\begin{thebibliography}{10}

\bibitem{nobel}
Alain Aspect, John~F. Clauser, and Anton Zeilinger.
\newblock The nobel prize in physics 2022.
\newblock \url{bit.ly/3RZmMYg}.
\newblock Accessed: 14-02-2023.

\bibitem{bell}
J.~S. Bell.
\newblock On the einstein podolsky rosen paradox.
\newblock {\em Physics Physique Fizika}, 1:195--200, Nov 1964.
\newblock URL:
  \url{https://link.aps.org/doi/10.1103/PhysicsPhysiqueFizika.1.195}, \href
  {https://doi.org/10.1103/PhysicsPhysiqueFizika.1.195}
  {\path{doi:10.1103/PhysicsPhysiqueFizika.1.195}}.

\bibitem{bogdanov}
Ilya Bogdanov.
\newblock Solution to graphs with only disjoint perfect matchings.
\newblock \url{bit.ly/3x8hUGQ}.
\newblock Accessed: 09-02-2023.

\bibitem{PhysRevLett.82.1345}
Dik Bouwmeester, Jian-Wei Pan, Matthew Daniell, Harald Weinfurter, and Anton
  Zeilinger.
\newblock Observation of three-photon greenberger-horne-zeilinger entanglement.
\newblock {\em Phys. Rev. Lett.}, 82:1345--1349, Feb 1999.
\newblock URL: \url{https://link.aps.org/doi/10.1103/PhysRevLett.82.1345},
  \href {https://doi.org/10.1103/PhysRevLett.82.1345}
  {\path{doi:10.1103/PhysRevLett.82.1345}}.

\bibitem{Cervera}
Alba Cervera{-}Lierta, Mario Krenn, and Al{\'{a}}n Aspuru{-}Guzik.
\newblock Design of quantum optical experiments with logic artificial
  intelligence.
\newblock {\em CoRR}, abs/2109.13273, 2021.
\newblock URL: \url{https://arxiv.org/abs/2109.13273}, \href
  {http://arxiv.org/abs/2109.13273} {\path{arXiv:2109.13273}}.

\bibitem{DBLP:journals/corr/abs-2202-05562}
L.~Sunil Chandran and Rishikesh Gajjala.
\newblock Perfect matchings and quantum physics: Progress on krenn's
  conjecture.
\newblock {\em CoRR}, abs/2202.05562, 2022.
\newblock URL: \url{https://arxiv.org/abs/2202.05562}, \href
  {http://arxiv.org/abs/2202.05562} {\path{arXiv:2202.05562}}.

\bibitem{DBLP:journals/corr/abs-2304-06407}
L.Sunil Chandran and Rishikesh Gajjala.
\newblock Graph-theoretic insights on the constructability of complex entangled
  states.
\newblock {\em CoRR}, abs/2304.06407, 2023.
\newblock URL: \url{https://doi.org/10.48550/arXiv.2304.06407}, \href
  {http://arxiv.org/abs/2304.06407} {\path{arXiv:2304.06407}}, \href
  {https://doi.org/10.48550/ARXIV.2304.06407}
  {\path{doi:10.48550/ARXIV.2304.06407}}.

\bibitem{Greenberger}
Daniel~M. Greenberger, Michael~A. Horne, and Anton Zeilinger.
\newblock {\em Going Beyond Bell's Theorem}, pages 69--72.
\newblock Springer Netherlands, Dordrecht, 1989.
\newblock URL: \url{https://doi.org/10.1007/978-94-017-0849-4\_10}.

\bibitem{Quantum_graphs_3}
Xuemei Gu, Lijun Chen, Anton Zeilinger, and Mario Krenn.
\newblock Quantum experiments and graphs. iii. high-dimensional and
  multiparticle entanglement.
\newblock {\em Physical Review A}, 99, 03 2019.
\newblock \href {https://doi.org/10.1103/PhysRevA.99.032338}
  {\path{doi:10.1103/PhysRevA.99.032338}}.

\bibitem{Quantum_graphs_2}
Xuemei Gu, Manuel Erhard, Anton Zeilinger, and Mario Krenn.
\newblock Quantum experiments and graphs ii: Quantum interference, computation,
  and state generation.
\newblock {\em Proceedings of the National Academy of Sciences},
  116(10):4147--4155, 2019.
\newblock \href {https://doi.org/10.1073/pnas.1815884116}
  {\path{doi:10.1073/pnas.1815884116}}.

\bibitem{Gu2020}
Xuemei Gu and Mario Krenn.
\newblock Compact greenberger-horne-zeilinger state generation via frequency
  combs and graph theory.
\newblock {\em Frontiers of Physics}, 15(6), November 2020.
\newblock \href {https://doi.org/10.1007/s11467-020-1028-7}
  {\path{doi:10.1007/s11467-020-1028-7}}.

\bibitem{krenn_website}
Mario Krenn.
\newblock Inherited vertex coloring of graphs.
\newblock \url{bit.ly/40GJXL3}.
\newblock Accessed: 09-02-2023.

\bibitem{krenn2019questions}
Mario Krenn, Xuemei Gu, and Daniel Solt{\'e}sz.
\newblock Questions on the structure of perfect matchings inspired by quantum
  physics.
\newblock {\em arXiv preprint arXiv:1902.06023}, 2019.

\bibitem{Quantum_graphs}
Mario Krenn, Xuemei Gu, and Anton Zeilinger.
\newblock Quantum experiments and graphs: Multiparty states as coherent
  superpositions of perfect matchings.
\newblock {\em Physical review letters}, 119 24:240403, 2017.

\bibitem{AI1}
Mario Krenn, Jakob~S. Kottmann, Nora Tischler, and Al\'an Aspuru-Guzik.
\newblock Conceptual understanding through efficient automated design of
  quantum optical experiments.
\newblock {\em Phys. Rev. X}, 11:031044, Aug 2021.
\newblock URL: \url{https://link.aps.org/doi/10.1103/PhysRevX.11.031044}, \href
  {https://doi.org/10.1103/PhysRevX.11.031044}
  {\path{doi:10.1103/PhysRevX.11.031044}}.

\bibitem{fund_cur2}
Jay Lawrence.
\newblock Rotational covariance and greenberger-horne-zeilinger theorems for
  three or more particles of any dimension.
\newblock {\em Phys. Rev. A}, 89:012105, Jan 2014.
\newblock URL: \url{https://link.aps.org/doi/10.1103/PhysRevA.89.012105}, \href
  {https://doi.org/10.1103/PhysRevA.89.012105}
  {\path{doi:10.1103/PhysRevA.89.012105}}.

\bibitem{fund_cur3}
Jay Lawrence.
\newblock Mermin inequalities for perfect correlations in many-qutrit systems.
\newblock {\em Phys. Rev. A}, 95:042123, Apr 2017.
\newblock URL: \url{https://link.aps.org/doi/10.1103/PhysRevA.95.042123}, \href
  {https://doi.org/10.1103/PhysRevA.95.042123}
  {\path{doi:10.1103/PhysRevA.95.042123}}.

\bibitem{Kevin}
Kevin Mantey.
\newblock Krenn-gu conjecture is true for graphs with four vertices.
\newblock \url{http://tinyurl.com/4e5zjvpx}.
\newblock Accessed: 14-02-2024.

\bibitem{mixon_website}
Dustin Mixon.
\newblock A graph colouring problem from quantum physics with prizes!
\newblock \url{bit.ly/3Xk5KFm}.
\newblock Accessed: 09-02-2023.

\bibitem{neugebauer}
Aaron Neugebauer.
\newblock Rainbow matchings in color-spanned graphs.
\newblock {\em Bachelor Thesis, Universität Würzburg}, 2022.
\newblock Accessed: 09-02-2023.
\newblock URL: \url{bit.ly/40CJCsV}.

\bibitem{Pan2000}
Jian-Wei Pan, Dik Bouwmeester, Matthew Daniell, Harald Weinfurter, and Anton
  Zeilinger.
\newblock Experimental test of quantum nonlocality in three-photon
  greenberger{\textendash}horne{\textendash}zeilinger entanglement.
\newblock {\em Nature}, 403(6769):515--519, February 2000.
\newblock \href {https://doi.org/10.1038/35000514}
  {\path{doi:10.1038/35000514}}.

\bibitem{quant_networks}
Matej Pivoluska, Marcus Huber, and Mehul Malik.
\newblock Layered quantum key distribution.
\newblock {\em Phys. Rev. A}, 97:032312, Mar 2018.
\newblock URL: \url{https://link.aps.org/doi/10.1103/PhysRevA.97.032312}, \href
  {https://doi.org/10.1103/PhysRevA.97.032312}
  {\path{doi:10.1103/PhysRevA.97.032312}}.

\bibitem{AI2}
Carlos Ruiz{-}Gonzalez, S{\"{o}}ren Arlt, Jan Petermann, Sharareh Sayyad, Tareq
  Jaouni, Ebrahim Karimi, Nora Tischler, Xuemei Gu, and Mario Krenn.
\newblock Digital discovery of 100 diverse quantum experiments with pytheus.
\newblock {\em Quantum}, 7:1204, 2023.
\newblock URL: \url{https://doi.org/10.22331/q-2023-12-12-1204}, \href
  {https://doi.org/10.22331/Q-2023-12-12-1204}
  {\path{doi:10.22331/Q-2023-12-12-1204}}.

\bibitem{fund_cur1}
Junghee Ryu, Changhyoup Lee, Marek \ifmmode~\dot{Z}\else \.{Z}\fi{}ukowski, and
  Jinhyoung Lee.
\newblock Greenberger-horne-zeilinger theorem for $n$ qudits.
\newblock {\em Phys. Rev. A}, 88:042101, Oct 2013.
\newblock URL: \url{https://link.aps.org/doi/10.1103/PhysRevA.88.042101}, \href
  {https://doi.org/10.1103/PhysRevA.88.042101}
  {\path{doi:10.1103/PhysRevA.88.042101}}.

\bibitem{DBLP:journals/corr/abs-2209-13063}
Moshe~Y. Vardi and Zhiwei Zhang.
\newblock Quantum-inspired perfect matching under vertex-color constraints.
\newblock {\em CoRR}, abs/2209.13063, 2022.
\newblock URL: \url{https://doi.org/10.48550/arXiv.2209.13063}, \href
  {http://arxiv.org/abs/2209.13063} {\path{arXiv:2209.13063}}, \href
  {https://doi.org/10.48550/ARXIV.2209.13063}
  {\path{doi:10.48550/ARXIV.2209.13063}}.

\bibitem{VardiZ23}
Moshe~Y. Vardi and Zhiwei Zhang.
\newblock Solving quantum-inspired perfect matching problems via
  tutte-theorem-based hybrid boolean constraints.
\newblock In {\em Proceedings of the Thirty-Second International Joint
  Conference on Artificial Intelligence, {IJCAI} 2023, 19th-25th August 2023,
  Macao, SAR, China}, pages 2039--2048. ijcai.org, 2023.
\newblock URL: \url{https://doi.org/10.24963/ijcai.2023/227}, \href
  {https://doi.org/10.24963/IJCAI.2023/227}
  {\path{doi:10.24963/IJCAI.2023/227}}.

\bibitem{10photon}
Xi-Lin Wang, Luo-Kan Chen, W.~Li, H.-L. Huang, C.~Liu, C.~Chen, Y.-H. Luo,
  Z.-E. Su, D.~Wu, Z.-D. Li, H.~Lu, Y.~Hu, X.~Jiang, C.-Z. Peng, L.~Li, N.-L.
  Liu, Yu-Ao Chen, Chao-Yang Lu, and Jian-Wei Pan.
\newblock Experimental ten-photon entanglement.
\newblock {\em Phys. Rev. Lett.}, 117:210502, Nov 2016.
\newblock URL: \url{https://link.aps.org/doi/10.1103/PhysRevLett.117.210502},
  \href {https://doi.org/10.1103/PhysRevLett.117.210502}
  {\path{doi:10.1103/PhysRevLett.117.210502}}.

\bibitem{quant_comp_tasks}
Han-Sen Zhong, Yuan Li, Wei Li, Li-Chao Peng, Zu-En Su, Yi~Hu, Yu-Ming He, Xing
  Ding, Weijun Zhang, Hao Li, Lu~Zhang, Zhen Wang, Lixing You, Xi-Lin Wang,
  Xiao Jiang, Li~Li, Yu-Ao Chen, Nai-Le Liu, Chao-Yang Lu, and Jian-Wei Pan.
\newblock 12-photon entanglement and scalable scattershot boson sampling with
  optimal entangled-photon pairs from parametric down-conversion.
\newblock {\em Phys. Rev. Lett.}, 121:250505, Dec 2018.
\newblock URL: \url{https://link.aps.org/doi/10.1103/PhysRevLett.121.250505},
  \href {https://doi.org/10.1103/PhysRevLett.121.250505}
  {\path{doi:10.1103/PhysRevLett.121.250505}}.

\end{thebibliography}


\end{document}